\documentclass[hidelinks, 12pt]{article}
%%% ~~~~~~~packages~~~~~~~ %%%

\usepackage[utf8]{inputenc}
\usepackage[english]{babel}
\usepackage{blindtext}
\usepackage[left=2cm, right=2cm, top=2cm, bottom=2cm]{geometry} %% sets margins
\usepackage{amsmath,amsfonts}
\usepackage{amssymb}
\usepackage{algorithm}
\usepackage{algcompatible}
\usepackage{algpseudocode}
\usepackage{pifont} %for bullets
\usepackage{natbib}
\usepackage{setspace}
\usepackage{graphicx}
\usepackage{caption}
\usepackage{subcaption}
\usepackage{hyperref}
\usepackage{mathrsfs}
\usepackage{amsthm}
\usepackage{tikz}
\usepackage{cancel}
\usepackage{lipsum}
\usepackage{booktabs}
\usepackage{multicol}
\usepackage{multirow}
\usepackage{siunitx}
\usepackage{rotating}
\usepackage{titling}

% \usepackage[all]{nowidow}
%\usepackage{natbib}
%%% ~~~~~~~commands~~~~~~~ %%%

% commands for number sets follow

% the following commands need 1 or 2 arguments in { }

%%% for RT2 %%%

%%% ~~~~~~~theorems~~~~~~~ %%%

\theoremstyle{definition}
\newtheorem{definition}{Definition} [section]

\newtheorem{proposition}{Proposition}[section]
\newtheorem{lemma}{Lemma}[section]

%%%~~~~~~~global settings~~~~~~~%%%

\setlength{\parindent}{0em} %% sets paragraph indent
\setlength{\parskip}{1em} %% sets paragraph vertical gap
\setstretch{1.15} %% sets line spacing scale

%% algorithm commands %%
     % use Input in the format of Algorithm  
 % use Initialize in the format of Algorithm  
\newcommand{\commentsymbol}{//}% or \% or $\triangleright$
\algrenewcommand\algorithmiccomment[1]{\hfill \commentsymbol{} #1}

%Remarks%
\newtheorem{remark}{Remark}[section]
\newtheorem{assumption}{Assumption}[section]
\newtheorem{claim}{Claim}[section]

\DeclareMathOperator{\Tr}{Tr} %for trace
\DeclareTextAccent{\myacc}{T1}{4}
\DeclareMathOperator{\vect}{vec}

\allowdisplaybreaks

\def\keywordname{{\bfseries \emph Keywords}}%
\def\keywords#1{\par\addvspace\medskipamount{\rightskip=0pt plus1cm
\def\and{\ifhmode\unskip\nobreak\fi\ $\cdot$
}\noindent\keywordname\enspace\ignorespaces#1\par}}

% \title{Regularised Spectral Estimation for High-Dimensional Point Processes}
% % to Infer Connectivities in Neuroscience Spike Train Data
% \author{Carla Pinkney$^1$, Carolina Eu\'{a}n$^1$, Alex Gibberd$^1$ \& Ali Shojaie$^2$ \\
% \vspace{1em}
% \small{$^1$Lancaster University $^2$University of Washington}}
% \date{\today}
\title{{Regularised Spectral Estimation for High-Dimensional \\ Point Processes}}
\author{Carla Pinkney$^{1,^*}$, Carolina Eu\'{a}n$^1$, Alex Gibberd$^1$ \& Ali Shojaie$^2$ \vspace{0.3cm} \\
    \small{$^1$ STOR-i Centre for Doctoral Training, Department of Mathematics and Statistics, Lancaster} \\ \small{University, LA1 4YR, UK}\\ $^2$ \small{Department of Biostatistics, University of Washington, Seattle, WA 98105, US} \vspace{0.1cm} \\
\footnotesize{$^*$ Correspondence to: c.pinkney@lancaster.ac.uk} \vspace{0.2cm}
\\}

\begin{document}
\maketitle
%%%%%%%%%%%%%%%%%%%%%%%%%%%%%%%%%%%%%%%%%%%%%%%%%%%%%%%%%%%%%%%%%

%%%%%%%%%%%%%%%%%%%%%%%%%%% contents page %%%%%%%%%%%%%%%%%%%%%%%
% \setlength{\parskip}{0pt plus 1fil} % gets contents on one page
% \tableofcontents
%  \setlength{\parskip}{1em}
%%%%%%%%%%%%%%%%%%%%%%%%%%%%%%%%%%%%%%%%%%%%%%%%%%%%%%%%%%%%%%%%%

%Alex Commands
\newcommand{\Var}{\mathrm{Var}}
\newcommand{\bI}{\boldsymbol{I}}
\newcommand{\bbI}{\bar{\boldsymbol{I}}}
\newcommand{\bbJ}{\bar{\boldsymbol{d}}}
\newcommand{\ssT}{^{(T)}}
\newcommand{\opnorm}[2]{| \! | \! | #1 | \! | \! |_{{#2}}}

%%%%%%%%%%%%%%%% main body %%%%%%%%%%%%%%%%%%%%%%%%%%%%%%%%%%%%%%

\begin{abstract}
Advances in modern technology have enabled the simultaneous recording of neural spiking activity, which statistically can be represented by a multivariate point process. We characterise the second order structure of this process via the spectral density matrix, a frequency domain equivalent of the covariance matrix. In the context of neuronal analysis, statistics based on the spectral density matrix can be used to infer connectivity in the brain network between individual neurons. However, the high-dimensional nature of spike train data mean that it is often difficult, or at times impossible, to compute these statistics. In this work, we discuss the importance of regularisation-based methods for spectral estimation, and propose novel methodology for use in the point process setting. We establish asymptotic properties for our proposed estimators and evaluate their performance on synthetic data simulated from multivariate Hawkes processes. Finally, we apply our methodology to neuroscience spike train data in order to illustrate its ability to infer brain connectivity. 
\end{abstract}

% keywords can be removed
\keywords{Partial coherence; Multivariate point process; Inverse spectral density matrix; Spike train data.}

\section{Introduction}
{\normalsize
We consider the $p$-dimensional point process $\mathbf{N}(t):=\{N_q(t)\}_{q \in \{1, \dots, p\}}$ whose $q^{th}$ component gives the number of events of type $q$ that have occurred in the time interval $(0,t]$, $t\leq T$. 
As in \cite{bartlett1963spectral}, we denote by $dN_q(t)$ the number of events observed in process $q$ in some small time interval $dt$, whereby $dN_q(t)=N_q(t+dt)-N_q(t).$ The first order properties of $\mathbf{N}(t)$ are characterised via the intensity function $\boldsymbol{\Lambda}(t)\in \mathbb{R}^p$, which we define as $\boldsymbol{\Lambda}(t):=\mathbb{E}\{d\mathbf{N}(t)\}/dt.$
The second order structure of the process at times $t$ and $u$ are determined via the covariance density matrix}
\begin{equation*}
    \boldsymbol{\mu}(t,u) = \frac{\mathbb{E}\{d\mathbf{N}(u) d\mathbf{N}^{\prime}(t)\}}{dt  du} - \boldsymbol{\Lambda}(u)\boldsymbol{\Lambda}^{\prime}(t).
\end{equation*}
\normalsize{
In the event that $\boldsymbol{\Lambda}(t)$ is constant for all $t$, and $\boldsymbol{\mu}(t,u)$ depends only on lag $\tau=u-t$, we refer to process $\mathbf{N}(t)$ as being second-order stationary. In this case, we denote the covariance density matrix simply as $\boldsymbol{\mu}(\tau)$. As in \cite{hawkes1971point}, we note that $\boldsymbol{\mu}(-\tau)=\boldsymbol{\mu}^{\prime}(\tau).$ Going forward, we will refer to a second-order stationary process simply as ``stationary''.

In this work, we characterise the second order structure of $\mathbf{N}(t)$ via the spectral density matrix. Analogous to the covariance matrix in the time domain, the spectral density matrix captures both the within and between dynamics of the multivariate point process. More specifically, it describes the variance in each process or the covariance between processes that is attributable to oscillations in the data at a particular frequency \citep{fiecas2019spectral}. 

The spectral density matrix $\mathbf{S}(\omega)$ of a stationary point process is defined as the Fourier transform of its covariance density matrix \citep{bartlett1963spectral}, namely}
\begin{equation*}
    \label{spec_dens_matrix}
    \mathbf{S}(\omega) = \frac{1}{2\pi}\left\{\textrm{diag}(\boldsymbol{\Lambda})+\int_{-\infty}^{\infty}e^{-i\tau\omega}\boldsymbol{\mu}(\tau) d\tau\right\},
\end{equation*} 
\normalsize{where $\mathbf{S}(\omega) \in \mathbb{C}^{p\times p}$ is a $p \times p$ Hermitian positive definite matrix.  

Interactions between the components of $\mathbf{N}(t)$ can be captured by the inverse spectral density matrix $\boldsymbol{\Theta}^*(\omega):=\mathbf{S}(\omega)^{-1}$. In particular, the partial coherence between processes $N_q(t)$ and $N_r(t)$ at a particular frequency $\omega$ is defined as
}
\begin{equation}
\rho^*_{qr}(\omega)=\frac{|\Theta^*_{qr}(\omega)|^2}{\Theta^*_{qq}(\omega)\Theta^*_{rr}(\omega)}.
\label{eq:partial_co}
\end{equation}

\normalsize{Partial coherence provides a normalised measure on $[0,1]$ of the partial correlation structure between pairs of processes in the frequency domain. In the time series literature, it has been used in a variety of application areas such as medicine, oceanography and climatology \citep{kocsis1999interdependence, koutitonsky2002descriptive, song2020potential}. Here, we use partial coherence to investigate interactions between neuronal point processes.

An estimate of the partial coherence matrix can be achieved by substituting estimates of the inverse spectral density matrix into \eqref{eq:partial_co}. However, in high-dimensional settings, the estimation of $\boldsymbol{\Theta}^*(\omega)$ is a challenging task, and one that remains to be addressed in the point process setting. 

The estimation of high-dimensional (inverse) spectral density matrices for time series data has received significant attention in the statistics literature. For example, \cite{bohm2009shrinkage, fiecas2011generalized} and \cite{fiecas2014data} explore shrinkage estimators for the spectral density matrix, leading to numerically stable and therefore invertible estimates. Comparatively, \cite{jung2015graphical, tugnait2021sparse, dallakyan2022time}  obtained estimates of the inverse spectral density matrix directly by optimising a penalised Whittle likelihood. There, a group lasso penalty is used to capture shared zero patterns across neighbouring frequencies in the spectral domain. 
More recently, \cite{deb2024regularized} used a penalised Whittle likelihood to estimate the inverse spectral density matrix at a single frequency. 
%They also used a group lasso penalty, where the group structure is specified over the real and imaginary parts of each complex entry in $\boldsymbol{\Theta}(\omega)$. 
%\textcolor{red}{is \cite{deb2024regularized} also using group lasso?} 
The Whittle approximation has also been studied in the Bayesian framework; see for example \cite{tank2015bayesian}. However, to the best of our knowledge there has been little effort made to investigate this problem in the context of multivariate point processes. 

Here, we estimate the inverse spectrum directly using a penalised Whittle likelihood. We propose two differing penalty functions, each possessing their unique advantages for use in the point process setting. The first is a ridge type penalty and the second is an example of group-lasso penalisation. The latter imposes sparsity on the inverse spectral density matrix, easing interpretation of resulting graphical structures.

The methodology proposed here, novel in the point-process setting, is intended for use in the high-dimensional framework, for the regularised estimation of inverse spectral density matrices. Importantly, we note that while we focus our efforts on inferring neuronal connectivity in the brain network, multivariate point-process data appear in a wide variety of disciplines outside of neuroscience. Therefore, the contributions of this work may also be of interest in other fields including, but not limited to, finance, health and social networks.
}

\section{Preliminaries}

\subsection{The Tapered Fourier Transform for Point Processes}

{\normalsize
A classical non-parametric estimator of the spectral density matrix for a stationary point process is based on the periodogram, that is a (Hermitian) outer-product of the Fourier transform of the data \citep{bartlett1963spectral}. 

In this paper, we propose to utilise a variant of the periodogram, stated in \eqref{eq:TA_estimator}, to estimate the spectral density matrix. However, as the periodogram is based on the Fourier transform, it is first useful to understand some properties of this object in relation to the point process (continuous time) setting. 

As the rate of the point process is non-zero and positive, we will attempt to remove this \emph{mean} prior to taking the Fourier transform. To generalise this procedure, let us consider tapering the data prior to calculating the Fourier transform. 

Consider a set of $k=1,\ldots,m$ taper functions $h_k(z):(0,1]\mapsto\mathbb{R}$, of bounded variation such that 
\begin{equation}
\int|h_k(z+u)-h_k(z)|dz < K |u|\;,\forall\;u\in\mathbb{R}\;,
\label{eq:bounded}
\end{equation}
{for some $K<\infty$.}

\begin{definition}
{Let $h_k(z)$ be a set of tapers satisfying \eqref{eq:bounded}. Then, the tapered Fourier Transform of the process $N_q(t)$ for $t \in (0,T]$ is defined as}
\begin{equation*}
    {d_{k,q}(\omega)=\int_{0}^T h_k(t/T)e^{-i\omega t}dN_q(t).}
\end{equation*}
%{\cite{brillinger1972} 
The mean-corrected tapered Fourier Transform is defined as 
\begin{equation}
    {\bar{d}_{k,q}(\omega) = d_{k,q}(\omega) - d_{k,q}(0)\frac{H_k(T\omega)}{H_k(0)}, 
    \label{eq:mean_corrected_FT}}
\end{equation}   
    {where $H_{k}(\omega)$ denotes the Fourier transform of $h_k(z)$, i.e. $H_k(\omega) = \int h_k(z)e^{-i \omega z} dz$.}
\end{definition}

{An alternative view of the Fourier transform is given by the random set of events induced by the point process. That is, instead of the counting process representation, we can consider the events $E_{k,q} = \{t\:|\:dN_q(t)=1\;,\;t/T\in \mathrm{supp}(h_k)\}$. 
%\textcolor{red}{It is not (at least to me) what $T\mathrm{supp}(h_k)$ means} 
In this case, we can write}
\begin{equation*}
{d_{k,q}(\omega) = \sum_{t\in E_{k,q}} h_k(t/T)e^{-i\omega t}.}
\label{eq:tapered_ft_event}
\end{equation*}
{As in multi-taper estimation for regular time-series, different choices of taper functions may be useful for different tasks, each having their own concentration properties in the Fourier domain.
For the purposes of our discussion, we will focus on the canonical choice of non-overlapping tapers that act to segment the point process into $m$ intervals. } 

% \begin{assumption}\label{ass:non-overlap} We assume the tapers are non-overlapping such that $\mathrm{supp}(h_k)\cap \mathrm{supp}(h_l)=\emptyset$ for all $k\ne l$.  Furthermore, we will assume that these intervals are all of the same size. That is, $\gamma(\mathrm{supp}(h_k))=1/m$ for $k=1,\ldots,m$, where $\gamma$ is the length of the interval. Furthermore, let the tapers be defined as $h_k(t/T)=(2\pi T/m)^{-1/2}$ for $t\in ((k-1)T/m, kT/m]$.
% \end{assumption}

\begin{assumption}\label{ass:non-overlap} Let the tapers be defined as 
\[
h_k(t/T)=\begin{cases}
(m/2\pi T)^{1/2} & \mathrm{for}\; t\in ((k-1)T/m, kT/m]\\
0 & \mathrm{otherwise}
\end{cases}
\]
i.e., we assume the tapers are non-overlapping and of equal length.  
\end{assumption}

Assumption \ref{ass:non-overlap} allows us to examine the properties of the mean-corrected Fourier transform in further detail. It also aligns with the experimental motivation for our methodology, in that the tapers can be considered to align with different trials in a neuronal-spiking experiment.  We work under this assumption for the remainder of the paper.

\subsection{The Multi-Taper Periodogram}
\label{sec:MT_periodogram}
We use {Theorem 4.2 from \cite{brillinger1972}} to motivate the multi-taper periodogram,
\begin{equation}
    \hat{\mathbf{S}}(\omega):= \frac{1}{m}\sum_{k=1}^m \bar{\mathbf{d}}_k(\omega) \bar{\mathbf{d}}_k^H(\omega),
    \label{eq:TA_estimator}
\end{equation}
where $\bar{\mathbf{d}}_k(\omega)=(\bar{d}_{k,1}(\omega), \dots, \bar{d}_{k,p}(\omega))^{\prime}.$ The estimator in \eqref{eq:TA_estimator} is an asymptotically ($p$ fixed) $T/m\rightarrow\infty$, $m\rightarrow\infty$ consistent estimator for the spectrum. In the sequel we will be interested in the finite $m$ properties of this estimator. However, we first recall some properties of $\hat{\mathbf{S}}(\omega)$ as $T\rightarrow\infty$.

Under Assumption \ref{ass:non-overlap},  we can show that $\bar{\mathbf{d}}_{k}(\omega)$ is asymptotically Gaussian, and thus the multi-taper periodogram  \eqref{eq:TA_estimator} is distributed as a Complex Wishart distribution, $~\hat{\mathbf{S}}(\omega)\sim m^{-1}\mathcal{W}^C_p(m, \mathbf{S}(\omega))$ with $m$ degrees of freedom and centrality matrix $\mathbf{S}(\omega)$. 
As a normalised measure of correlation between the Fourier coefficients, it is useful to study the spectral coherence which we define as
\begin{equation}
    R^2_{qr}(\omega) = \frac{|S_{qr}(\omega)|^2}{S_{qq}(\omega)S_{rr}(\omega)},
    \label{eq:coherence}
\end{equation}
for $q\ne r$ where $q,r = 1,\dots, p$. As a further corollary of the asymptotic normality of (\ref{eq:mean_corrected_FT}), we have \citep{goodman1963statistical} that if $p<m$, then the estimated coherence {$\hat{R}^2_{qr}(\omega)=|\hat{S}_{qr}(\omega)|^2/\{\hat{S}_{qq}(\omega)\hat{S}_{rr}(\omega)\}$} has the density function
\begin{equation}
    f_{\hat{R}^2_{qr}}(x) = (m-1)(1-\hat{R}^2_{qr})(1-x^2)^{(m-2)} \ _2F_1(m,m;1;\hat{R}^2_{qr} x),
    \label{eq:coherence_density}
\end{equation}
where $_2F_1(\alpha_1,\alpha_2;\beta_1;z)$ denote the hypergeometric function with 2 and 1 parameters $\alpha_1,\alpha_2$ and $\beta_1$ and scalar argument $z$.

The above properties are well known; however, it is of interest to examine the statistical properties of our periodogram in high-dimensional, large $p$ settings. In practice, it is common to encounter such settings where the number of trials $m<p$. In these situations, the Goodman distribution (\ref{eq:coherence_density}) will not apply. As an alternative, we propose to study finite-sample deviation bounds, i.e. for finite $m$, that can control the behaviour (error) of the multi-taper periodogram.  

\begin{proposition}{\textit{Deviation Bound for the Multi-Taper Periodogram.}} 
\label{prop:dev_bound}
Let $\mathbf{\hat{S}}(\omega)$ be the multi-taper periodogram estimator defined in \eqref{eq:TA_estimator}.
For any $q,r = 1,\dots, p,$ the multi-taper periodogram satisfies the tail bound
\begin{equation}
     \mathbb{P}\left\{\left|\hat{S}_{qr}(\omega)-S_{qr}(\omega)\right|\geq\delta\right\} 
        \leq 8\exp\left\{-\frac{m \delta^2}{2^9 5^2 \max_{q}\{S_{qq}(\omega)\}^2}\right\},
        \label{eq:dev_bound}
\end{equation}
for all $\delta \in (0, 80 \max_q\{S_{qq}(\omega)\})$, as $T\rightarrow\infty$.

\end{proposition}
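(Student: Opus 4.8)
The plan is to exploit the asymptotic complex Gaussianity of the mean-corrected tapered Fourier transforms, established under Assumption~\ref{ass:non-overlap}, and then to apply a Bernstein-type concentration inequality for sums of independent sub-exponential random variables. As $T\to\infty$, the non-overlapping supports of the tapers render the vectors $\bar{\mathbf{d}}_k(\omega)$, $k=1,\dots,m$, independent and identically distributed proper complex Gaussian vectors with mean zero and covariance $\mathbf{S}(\omega)$. Writing $\hat{S}_{qr}(\omega)-S_{qr}(\omega)=\frac{1}{m}\sum_{k=1}^m\{\bar{d}_{k,q}(\omega)\bar{d}_{k,r}^*(\omega)-S_{qr}(\omega)\}$ exhibits the deviation as a centred average of $m$ i.i.d.\ terms, each a bilinear form in jointly Gaussian variables.

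First I would split the complex deviation into real and imaginary parts via the inclusion $\{|\hat{S}_{qr}-S_{qr}|\geq\delta\}\subseteq\{|\Re(\hat{S}_{qr}-S_{qr})|\geq\delta/2\}\cup\{|\Im(\hat{S}_{qr}-S_{qr})|\geq\delta/2\}$, so that a union bound reduces the problem to two real-valued tail bounds. Writing $\bar{d}_{k,q}=X_{k,q}+iY_{k,q}$, the real part of each summand equals $X_{k,q}X_{k,r}+Y_{k,q}Y_{k,r}$ and the imaginary part equals $Y_{k,q}X_{k,r}-X_{k,q}Y_{k,r}$, both being sums of products of mean-zero Gaussians whose variances are controlled by $S_{qq}(\omega)$ and $S_{rr}(\omega)$. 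By the polarisation identity each such product reduces to a difference of squared Gaussians, i.e.\ (scaled) $\chi^2_1$ variables, which are sub-exponential.

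Next I would bound the sub-exponential ($\psi_1$) norm of each centred summand. Using the product rule $\|UV\|_{\psi_1}\leq\|U\|_{\psi_2}\|V\|_{\psi_2}$, the fact that a centred Gaussian of variance $\sigma^2$ has $\psi_2$-norm of order $\sigma$, and that the real and imaginary parts of $\bar{d}_{k,q}$ each have variance $\tfrac12 S_{qq}(\omega)\leq\tfrac12\max_q S_{qq}(\omega)$, I obtain a $\psi_1$-bound of order $\max_q S_{qq}(\omega)$. Applying Bernstein's inequality to each of the two real-valued averages then yields tail bounds of the form $2\exp\{-cm\min(\delta^2/\nu^2,\,\delta/\nu)\}$ with $\nu\propto\max_q S_{qq}(\omega)$. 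The stated constraint $\delta\in(0,80\max_q S_{qq}(\omega))$ is precisely the range in which the quadratic exponent $\delta^2/\nu^2$ dominates the linear exponent $\delta/\nu$, so the bound collapses to the sub-Gaussian form $\exp\{-m\delta^2/(2^9 5^2\max_q S_{qq}(\omega)^2)\}$, with the prefactor $8$ accumulating from the two-sided estimates and the real/imaginary (and polarisation) union bounds.

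The main obstacle I anticipate is the careful bookkeeping of universal constants: recovering the exact denominator $2^9 5^2$ and the exact admissible range $\delta<80\max_q S_{qq}(\omega)$ requires tracking the explicit constants in the $\psi_2$-to-variance comparison, in the product and centring lemmas for sub-exponential norms, and in the particular form of Bernstein's inequality invoked. A secondary subtlety is justifying that the $T\to\infty$ limit may be exchanged with the tail-probability estimate, so that the finite-$m$ concentration statement is asserted for the limiting proper complex Gaussian law rather than merely for the pre-limit Fourier transforms.
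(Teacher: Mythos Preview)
Your proposal is correct and follows essentially the same route as the paper. The paper's proof also reduces to the asymptotic complex Gaussianity of $\bar{\mathbf{d}}_k(\omega)$ (stated as a separate lemma), then splits $|\hat S_{qr}-S_{qr}|\ge\delta$ into real and imaginary parts at level $\delta/2$, exactly as you describe. The only difference is that where you propose to build the sub-exponential tail bound from scratch via $\psi_1$/$\psi_2$ norms, polarisation, and Bernstein, the paper simply invokes Lemma~1 of \cite{ravikumar2011high} (itself a restatement of the \cite{bickel2008regularized} concentration bound for sample-covariance entries of sub-Gaussian vectors) as a black box on each of the real and imaginary parts; this yields the constants directly, with sub-Gaussian parameter $\sigma=1$ giving $(1+4\sigma^2)^2=25$, the $128\cdot(\delta/2)^{-2}$ factor producing $2^9$, and the admissible range $16(1+4\sigma^2)\max_q S_{qq}=80\max_q S_{qq}$. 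Your anticipated ``main obstacle'' of constant bookkeeping thus evaporates if you are willing to cite that lemma rather than reprove it.
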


\begin{proof}
    The proof, which is based on results in \cite{bickel2008regularized}, can be found in the Supplementary Material.
\end{proof}

% \subsection{Empirical Properties of the Multi-Taper Periodogram}
% \label{sec:empirical}

To demonstrate how these properties hold in practice, and the challenges when  $p\ge m$, we give an example for the empirical distribution of, and error incurred, by $\hat{\mathbf{S}}(\omega)$. To illustrate the behaviour, we consider a simple homogeneous Poisson point-process with rate $\lambda=1$ and let $T=1000$, and obtain 1000 Monte-Carlo replicates of the multi-taper spectrum with $m=10$ tapers. The results, depicted in Figure \ref{plt:empirical_properties}, confirm the appropriateness of the asymptotic distribution (\ref{eq:coherence_density}), and how the deviations grow as a function of $\log(p)$. They also highlight the issue of instability of the periodogram as one approaches the high-dimensional $p\approx m$, and $p>m$ setting---it is these scenarios where the methods proposed in this paper will be of most utility.

% The results of this experiment are depicted in Figure \ref{plt:empirical_properties}. To investigate the appropriateness of the asymptotic distributional results, we plot (Fig. \ref{fig1a}) the histogram of $\hat{R}_{12}(\omega)$ alongside the theoretical density as defined in \eqref{eq:coherence_density}, with the substitution $R^2=0$. A visual comparison suggests a relatively good fit to the asymptotic distribution, highlighting the quality of the multi-taper estimate in scenarios where $p<m$.
% Comparatively, Fig. \ref{fig1b} shows the deviation from the spectrum at a particular frequency $\omega=0.0628$\footnote{Due to the Poisson process, we choose this frequency without loss of generality.} of the multi-taper periodogram.
% %, based on a Monte Carlo simulation with 100 runs. 
% We measure the deviation of the estimator from the true spectrum in terms of the element-wise maximum norm 
% $ \left\|\hat{\mathbf{S}}(\omega) - \mathbf{S}(\omega)\right\|_{\infty} = \max_{qr}\left|\hat{{S}}_{qr}(\omega)-S_{qr}(\omega)\right|$.
% As shown in Fig. \ref{fig1b}, the average estimation error across the Monte Carlo samples increases as a function of dimensionality $p$. 

\begin{figure}
\centering
\begin{subfigure}{.33\textwidth}
  \centering
   \caption{}
   
  \includegraphics[width=\textwidth]{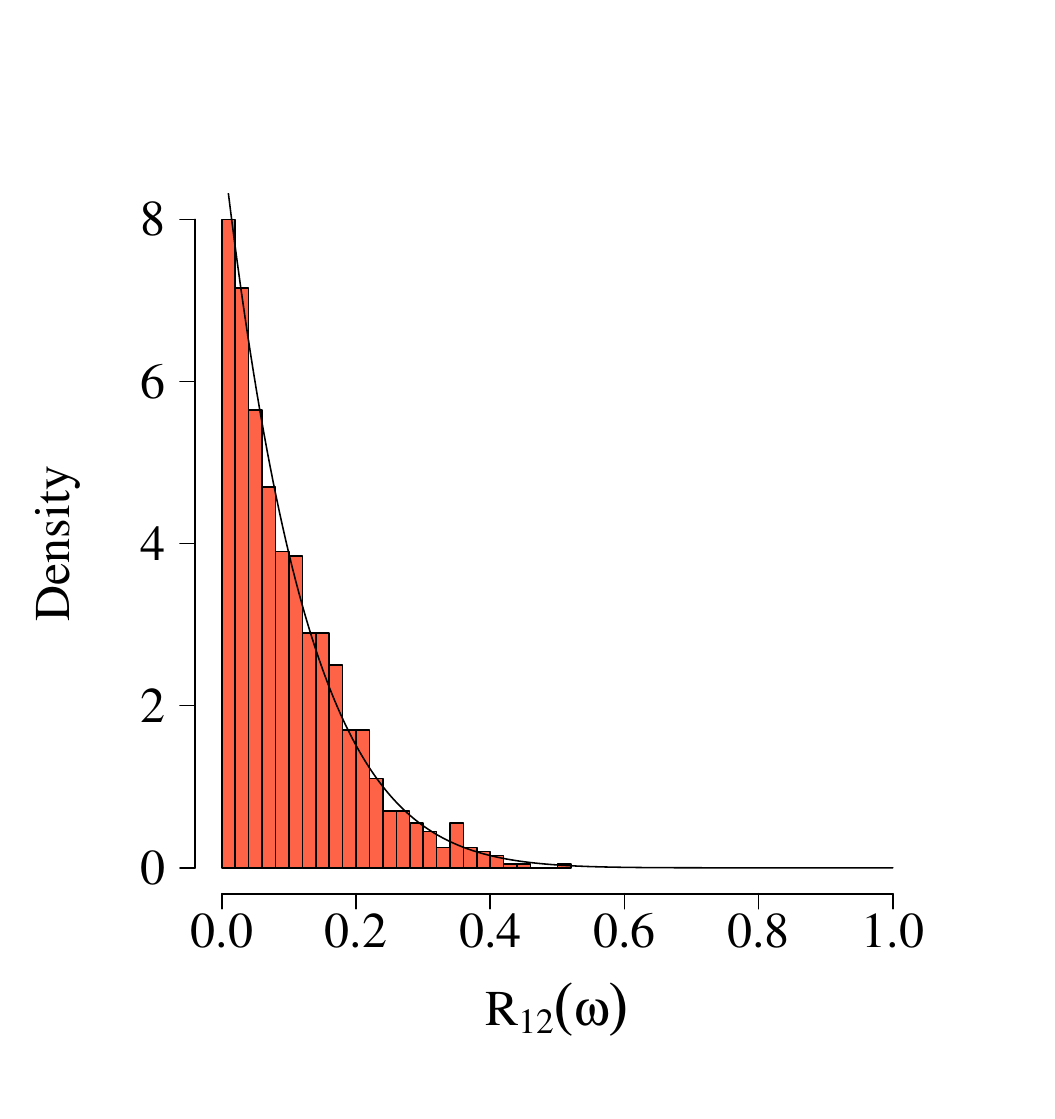}
  \label{fig1a}
\end{subfigure}%
\begin{subfigure}{.33\textwidth}
  \centering
   \caption{}
    
  \includegraphics[width=\textwidth]{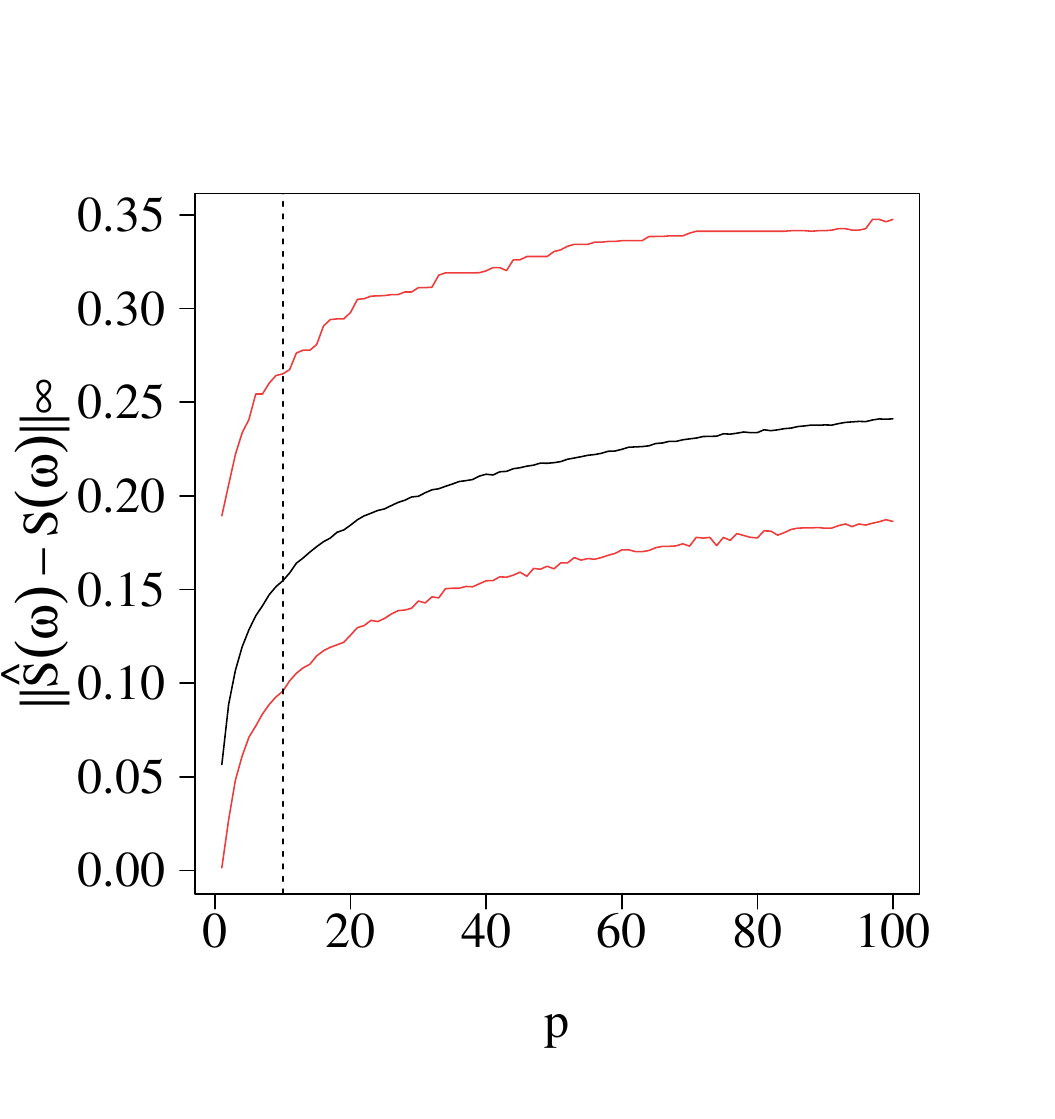}
  \label{fig1b}
\end{subfigure}%
\begin{subfigure}{.33\textwidth}
  \centering
  \caption{}
  
  \includegraphics[width=\textwidth]{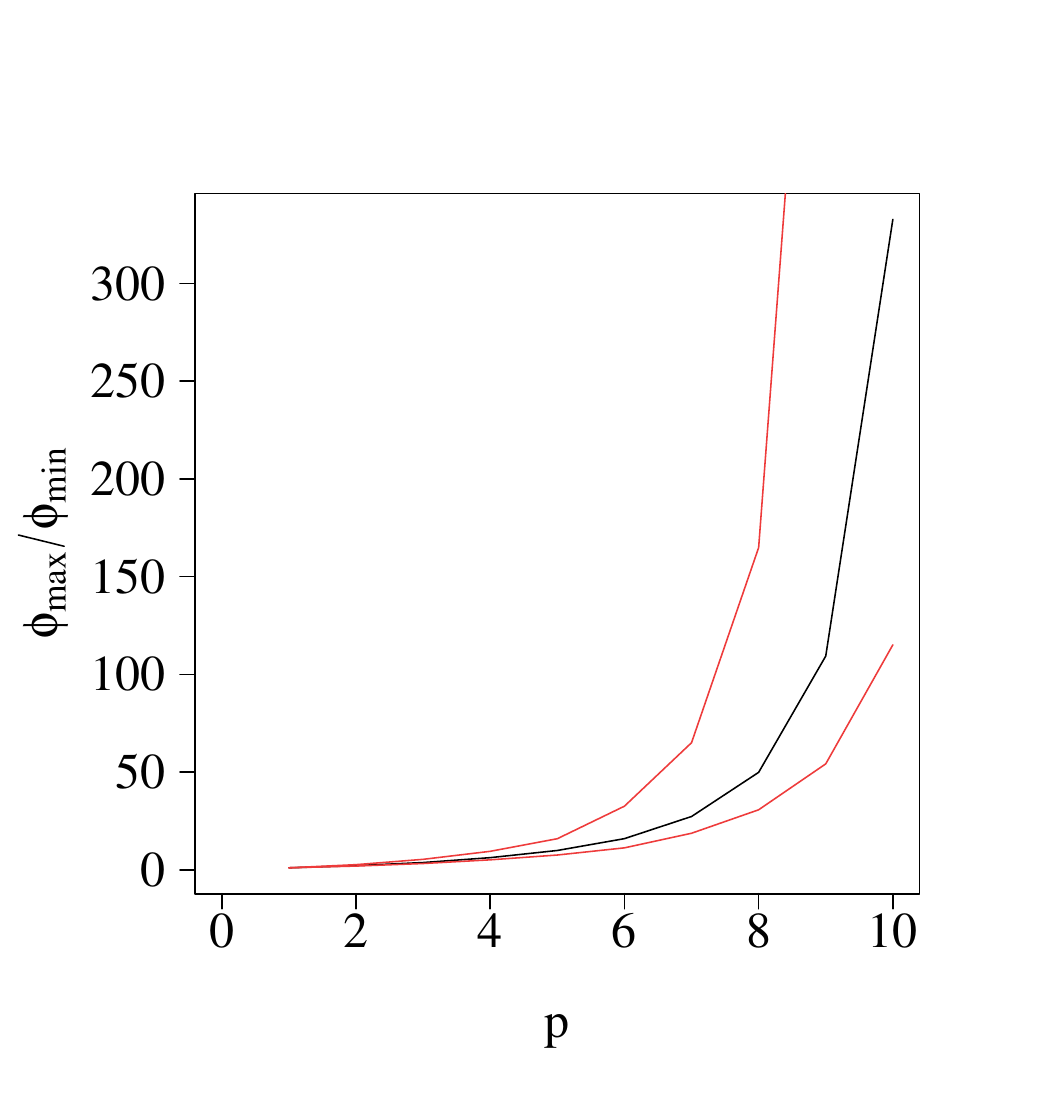}
  \label{fig1c}
\end{subfigure}
\caption{
% \small{For $p<m$ the approximation of the asymptotic distribution is highly accurate. However, significant deviations are observed as $p$ approaches $m$ (measured by the the $l_\infty$ norm and the condition number.}
\small{(a) Empirical (histogram) vs theoretical (solid line) distribution of coherence for {$p=7$.} (b) Plot of the $l_{\infty}$ norm of spectral matrix estimation error $\left\|\hat{\mathbf{S}}(\omega) - \mathbf{S}(\omega)\right\|_{\infty}$ versus dimensionality $p$. (c) Plot of the condition number of the multivariate spectrum as a function of dimensionality $p$. Empirical $95\%$ confidence intervals are shown in red. All plots are generated using $m=10$ tapers, and $\omega=0.0628$ without loss of generality (since the process is Poisson).}} 
\label{plt:empirical_properties}
\end{figure}
% The behaviour of the condition number of the estimated spectral matrix, for varying dimensions $p$, is shown in Fig.~\ref{fig1c}.
% While the variability of the eigenvalues of $\hat{\mathbf{S}}(\omega)$ is well-controlled in lower dimensions, it is easy to see that some kind of regularisation or shrinkage is required to control the dispersion of the eigenvalues of $\hat{\mathbf{S}}(\omega)$ as $p$ approaches $m$. Indeed, in settings where $p>m$, we expect that some eigenvalues of the multi-taper periodogram will be exactly zero, thus causing the condition number to explode. Inversion of the multi-taper spectral estimator in this case should therefore not be possible, thus motivating the need for regularised estimators when $p>m$.
}

\section{Regularised Spectral Estimation}
\label{sec:estimation}

\normalsize{
We propose a novel methodology to estimate the inverse spectral density matrix for a high-dimensional multivariate point process at a specified frequency of interest. While estimates of a similar nature have undergone significant development for time series data \citep{jung2015graphical, nadkarnisparse, tugnait2021sparse, dallakyan2022time, deb2024regularized} to the best of our knowledge, this is the first examination of related estimators in the point process framework.

Our estimator is motivated by the asymptotic complex normality of our tapered Fourier coefficients (Theorem 4.2 from \citet{brillinger1972}), where we appeal to the following pseudo (negative) log-likelihood  \citep{whittle1953estimation}
\begin{equation*}
    \mathcal{L}(\boldsymbol{\Theta}(\omega)):=-\log \operatorname{det}(\boldsymbol{\Theta}(\omega))+ \Tr\left\{\mathbf{\hat{S}}(\omega) \boldsymbol{\Theta}(\omega)\right\}, 
    \label{whittle_log_lik}
\end{equation*}

where  $\operatorname{det(\cdot)}$ and $\Tr\{\cdot\}$ denote the determinant and trace of a matrix respectively, and $\mathbf{\hat{S}}(\omega) \in \mathbb{C}^{p\times p}$ is our multi-taper periodogram estimate. %.s, i.e. $\hat{I}(\omega) \in \mathbb{C}^{p\times p}.$ 

If $m>p$ the negative log-likelihood achieves its minimum for $\hat{\boldsymbol{\Theta}}(\omega)=\hat{\mathbf{S}}(\omega)^{-1}$. However, as discussed in Section \ref{sec:MT_periodogram}, in the high-dimensional setting where $p>m$, the inverse of the multi-taper periodogram is not uniquely defined. To overcome this, it is common to add a penalty to $\mathcal{L}(\omega)$ in order to constrain the maximum likelihood estimate $\hat{\boldsymbol{\Theta}}(\omega)$ to a feasible region. We consider two such penalties below, each possessing their unique advantages
\begin{align}
\label{pen2}
    P_{1}\{\boldsymbol{\Theta}(\omega)\} & = \left\|\boldsymbol{\Theta}(\omega)\right\|_{1} := \sum_{qr} |\Theta_{qr}(\omega)| = \sum_{qr}\sqrt{\operatorname{Re}^2(\Theta_{qr}(\omega)) + \operatorname{Im}^2(\Theta_{qr}(\omega))}\\
\label{pen1}
    P_{2}\{\boldsymbol{\Theta}(\omega)\} &= \|\boldsymbol{\Theta}(\omega)^{1/2}\|^2_F = \Tr\{\boldsymbol{\Theta}(\omega)\}.
\end{align}
The penalty in \eqref{pen2} encourages sparsity in the inverse spectrum by jointly penalising both the real and imaginary parts of the entries in $\boldsymbol{\Theta}(\omega)$. This is an example of the group-lasso regulariser \citep{huang2010benefit}. In cases where the inverse spectrum is known to be sparse, it is beneficial to use the lasso-type penalty. 
By contrast, the penalty in \eqref{pen1} proportionally shrinks elements of $\boldsymbol{\Theta}(\omega)$ (without inducing sparsity) and can therefore be viewed as a ridge type estimator. Similar estimators for the estimation of high-dimensional covariance matrices have been discussed in \cite{friedman1989regularized} and \cite{warton2008penalized}.

\begin{definition}{{Regularised-Spectral Estimator (RSE).}}
Let $\lambda>0$, the regularised spectral estimator for $\boldsymbol{\Theta}^*(\omega)$ is defined as the M-estimator
% \begin{equation}
% \label{estimator}
%     \hat{\boldsymbol{\Theta}}(\omega):= \arg \min _{\boldsymbol{\Theta}(\omega) \in \mathcal{C}}\left\{-\log \operatorname{det}(\boldsymbol{\Theta}(\omega))+ \Tr\left\{\mathbf{\hat{S}}(\omega)\boldsymbol{\Theta}(\omega)\right\} +\lambda P_{\ell_i}\left \{\boldsymbol{\Theta}(\omega)\right\}\right\},
% \end{equation}
\begin{equation}
\label{estimator}
    \hat{\boldsymbol{\Theta}}(\omega):= \arg \min _{\boldsymbol{\Theta}(\omega) \in \mathcal{C}}\left\{\mathcal{L}(\boldsymbol{\Theta}(\omega)) +\lambda P_{i}\left \{\boldsymbol{\Theta}(\omega)\right\}\right\},
\end{equation}
for $i\in\{1,2\}$ and 
with constraint set $\mathcal{C}:=\{\boldsymbol{\Theta}(\omega) \in \mathbb{C}^{p\times p}:\boldsymbol{\Theta}(\omega) \succ 0\}.$
\end{definition}
Going forward, we will refer to the RSE with penalties \eqref{pen2} and \eqref{pen1} as the Lasso and  Ridge estimators, respectively.

An advantage of the Ridge estimator is that solutions to \eqref{estimator} can be found analytically. Specifically,
\begin{equation*}
    {\hat{\boldsymbol{\Theta}}(\omega) = \left(\hat{\mathbf{S}}(\omega)+\lambda \mathbb{I}\right)^{-1},}
\end{equation*}
where $\hat{\mathbf{S}}(\omega)$ is an estimate of the spectrum and $\mathbb{I}$ is the identity matrix. When the Lasso penalty in \eqref{pen2} is used, the convex optimisation problem \eqref{estimator} can be solved numerically using the alternating direction method of multipliers (ADMM) algorithm, or via the complex graphical lasso \citep{deb2024regularized}. In this paper, we adopt the ADMM approach due to it's simplicity. We note this has been used throughout the time series literature to solve optimisation problems similar to those considered in this work \citep{jung2015graphical, nadkarnisparse, dallakyan2022time}. A full description of the algorithm is given in the Supplementary Material.

\subsection{Error Bounds for the Regularised-Spectral Estimator}

In this section, we establish asymptotic properties of the Lasso and  Ridge estimators, providing rates in Frobenius norm. We begin by specifying the following assumptions about the true $\boldsymbol{\Theta}^*(\omega)$ and $\mathbf{S}(\omega)$.

\begin{enumerate}
    \item[A1.] Let the set $E(\boldsymbol{\Theta}^*(\omega)) := \{(q,r)|{\Theta}^*_{qr}(\omega)\neq 0, q\neq r\}$. Then $\operatorname{card}(E(\boldsymbol{\Theta}^*(\omega))) \leq s$.
    %\item[A2.] $\varphi_{\min}({\mathbf{S}_0(\omega)})\geq \underline{k}>0.$
    \item[A2.] $0<\underline{k}\leq \phi_{\min}(\mathbf{S}(\omega)) \leq \bar{k} < \infty.$
\end{enumerate}
The second assumption provides a lower bound on the eigenvalues of the spectral density matrix, and thus ensures positive definiteness of ${\mathbf{S}(\omega)}$. This guarantees the existence of $\boldsymbol{\Theta}^*(\omega)$.

\begin{proposition} 
\label{prop:ridge}
Let $\hat{\boldsymbol{\Theta}}_R(\omega)$ be the RSE defined in \eqref{estimator} with penalty \eqref{pen1},  and  regularisation parameter $\lambda = \epsilon^{-1}c\sqrt{{p\tau\log p}/{m}}$ for any $\tau>2$ and $0<\epsilon<1$. Let $c=80 \sqrt{2} \max_q\{S_{qq}(\omega)\}$, then, under A2 and using sufficient tapers,
\begin{equation*}
\label{eq:m_bound}
 2c^{-2}\tau \log p < m < {pc^2 \tau\log p}/{\epsilon^2\bar{k}^2},
\end{equation*}

we have,
% \begin{align}
%     &\left\|\hat{\boldsymbol{\Theta}}_R(\omega) - \boldsymbol{\Theta}^*(\omega)\right\|_F \leq  \frac{18c\tau^{1/2}\left(1+\epsilon^{-1}\right)}{\underline{k}^2}\sqrt{\frac{p^2 \log p}{m}},\\
% \end{align}
\begin{align*}
    &\left\|\hat{\boldsymbol{\Theta}}_R(\omega) - \boldsymbol{\Theta}^*(\omega)\right\|_F \leq  \frac{36c\tau^{1/2}}{\epsilon\underline{k}^2}\sqrt{\frac{p^2 \log p}{m}},
\end{align*}
with probability at least $1-8/p^{\tau-2}$.
\end{proposition}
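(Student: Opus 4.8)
The plan is to exploit the closed form of the Ridge estimator, $\hat{\boldsymbol{\Theta}}_R(\omega) = (\hat{\mathbf{S}}(\omega) + \lambda\mathbb{I})^{-1}$, together with $\boldsymbol{\Theta}^*(\omega) = \mathbf{S}(\omega)^{-1}$, and thereby reduce the whole problem to controlling $\hat{\mathbf{S}}(\omega) - \mathbf{S}(\omega)$ in an appropriate norm. Note first that $\hat{\mathbf{S}}(\omega) \succeq 0$ as a sum of Hermitian outer products, so $\hat{\boldsymbol{\Theta}}_R(\omega)$ automatically lies in the constraint set $\mathcal{C}$ and no feasibility check is needed.

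First I would turn the per-entry tail bound of Proposition \ref{prop:dev_bound} into a uniform Frobenius statement. Applying that bound to each of the $O(p^2)$ entries of the Hermitian matrix $\hat{\mathbf{S}}(\omega) - \mathbf{S}(\omega)$ and taking a union bound with $\delta = c\sqrt{\tau\log p / m}$ (observing that $c = 80\sqrt{2}\max_q\{S_{qq}(\omega)\}$ makes the denominator $2^9 5^2 \max_q\{S_{qq}\}^2$ collapse to $c^2$, so the exponent is exactly $\tau\log p$) yields $\max_{q,r}|\hat{S}_{qr}(\omega) - S_{qr}(\omega)| \leq c\sqrt{\tau\log p / m}$ with probability at least $1 - 8/p^{\tau-2}$. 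The lower bound $m > 2c^{-2}\tau\log p$ is what keeps $\delta$ inside the admissible range $(0, 80\max_q\{S_{qq}\})$ of Proposition \ref{prop:dev_bound}. Crudely bounding $\|\cdot\|_F \leq p\,\max_{q,r}|\cdot|$ then gives $\|\hat{\mathbf{S}}(\omega) - \mathbf{S}(\omega)\|_F \leq c\sqrt{p^2\tau\log p / m}$ on this event.

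Next I would use the resolvent identity $A^{-1} - B^{-1} = -A^{-1}(A-B)B^{-1}$ with $A = \hat{\mathbf{S}}(\omega) + \lambda\mathbb{I}$ and $B = \mathbf{S}(\omega)$, splitting the perturbation $A - B = (\hat{\mathbf{S}}(\omega) - \mathbf{S}(\omega)) + \lambda\mathbb{I}$ into a stochastic part and a deterministic ridge-bias part. Via submultiplicativity $\|XYZ\|_F \leq \|X\|_{op}\|Y\|_F\|Z\|_{op}$, and the operator-norm estimates $\|(\hat{\mathbf{S}}(\omega) + \lambda\mathbb{I})^{-1}\|_{op} \leq 1/\lambda$ (from $\hat{\mathbf{S}}(\omega) \succeq 0$) together with $\|\mathbf{S}(\omega)^{-1}\|_{op} \leq 1/\underline{k}$ and $\|\mathbf{S}(\omega)^{-1}\|_F \leq \sqrt{p}/\underline{k}$ (from A2), this produces a stochastic term of order $\|\hat{\mathbf{S}}(\omega) - \mathbf{S}(\omega)\|_F/(\lambda\underline{k})$ and a bias term of order $\lambda\sqrt{p}/\underline{k}^2$. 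Substituting the prescribed $\lambda = \epsilon^{-1}c\sqrt{p\tau\log p / m}$ shows $\|\hat{\mathbf{S}}(\omega) - \mathbf{S}(\omega)\|_F/\lambda \leq \epsilon\sqrt{p}$, so the stochastic term is at most $\epsilon\sqrt{p}/\underline{k}$, while the bias term rewrites exactly as $(c\tau^{1/2}/\epsilon\underline{k}^2)\sqrt{p^2\log p/m}$. The upper bound $m < pc^2\tau\log p/(\epsilon^2\bar{k}^2)$ is precisely the condition forcing $\lambda > \bar{k} \geq \underline{k}$, which lets me absorb the stochastic term into the bias scaling (since $\epsilon\underline{k} < \lambda$). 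Collecting constants then delivers the stated rate, with slack left in the absolute constant.

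The main obstacle I anticipate is not any individual inequality but getting the calibration right: recognising that the two-sided condition on $m$ simultaneously encodes the admissibility of $\delta$ in the tail bound (lower bound) and the requirement $\lambda > \bar{k}$ that renders the stochastic fluctuation no larger than the ridge bias (upper bound). Some care is also needed in passing from the complex, entrywise deviation bound to a Frobenius-norm statement while retaining the $1 - 8/p^{\tau-2}$ probability.
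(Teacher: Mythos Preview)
Your argument is correct and reaches the stated bound, but it follows a genuinely different route from the paper's. The paper does not exploit the closed form $\hat{\boldsymbol{\Theta}}_R=(\hat{\mathbf{S}}+\lambda\mathbb{I})^{-1}$ via the resolvent identity; instead it runs the Rothman-style M-estimator argument: one writes $F(\boldsymbol{\Theta}_0+\Delta)=\mathcal{L}(\boldsymbol{\Theta}_0+\Delta)-\mathcal{L}(\boldsymbol{\Theta}_0)+\lambda(\Tr\{\boldsymbol{\Theta}_0+\Delta\}-\Tr\{\boldsymbol{\Theta}_0\})$, Taylor-expands $\log\det$ to second order, lower-bounds the curvature term by $\tfrac{1}{18}\underline{k}^2\|\Delta\|_F^2$, upper-bounds the linear term by $(c\sqrt{p^2\tau\log p/m}+\lambda\sqrt{p})\|\Delta\|_F$, and solves the resulting quadratic in $\|\Delta\|_F$. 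The closed form enters only indirectly, inside the curvature step, where the crude bound $\|\hat{\boldsymbol{\Theta}}_R\|_2\le 1/\lambda$ together with $\lambda\ge\bar{k}$ is used to ensure $\|\Delta\|_2\le 2\|\boldsymbol{\Theta}_0\|_2$. Your direct perturbation approach is more elementary and transparent for the Ridge case (and in fact produces a sharper absolute constant); the paper's approach, by contrast, is exactly the machinery it reuses for the Lasso result (Proposition~\ref{prop:glasso}), where no closed form exists, so the uniformity of method is what it buys.

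One small bookkeeping point: with your stated decomposition $A=\hat{\mathbf{S}}+\lambda\mathbb{I}$, $B=\mathbf{S}$, the bias piece $\lambda(\hat{\mathbf{S}}+\lambda\mathbb{I})^{-1}\mathbf{S}^{-1}$ bounds as $\sqrt{p}/\underline{k}$ rather than $\lambda\sqrt{p}/\underline{k}^2$, since the only available control on the first factor is $\|(\hat{\mathbf{S}}+\lambda\mathbb{I})^{-1}\|_{op}\le 1/\lambda$. This is harmless---$\lambda>\bar{k}\ge\underline{k}$ lets you absorb $\sqrt{p}/\underline{k}\le \lambda\sqrt{p}/\underline{k}^2$ into the target anyway---or alternatively a two-step split through the intermediate $(\mathbf{S}+\lambda\mathbb{I})^{-1}$ (where $\|(\mathbf{S}+\lambda\mathbb{I})^{-1}\|_{op}\le 1/\underline{k}$ is available) gives your claimed form directly.
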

\begin{proof}
   The proof is given in the Supplementary Material. 
\end{proof}

The above result is based on lower-bounding the eigenvalues of the estimated spectrum, alongside bounding the deviations of the periodogram based on Proposition \ref{prop:dev_bound}. In the Ridge case, a somewhat naive bound on the eigenvalue of the inverse spectral density is given by $\|(\hat{\mathbf{S}}+\lambda \mathbb{I})^{-1}\|_2\le \lambda^{-1}$, which leads to the stated bound and holds for $m=\Omega(p\log p)$, i.e. a high-dimensional setting. If we want to study the low-dimensional performance of the estimator, when $m\ge p$, then we can consider an alternative bound based on the Wishart distribution such that $\phi_{\min}(\hat{\mathbf{S}}+\lambda \mathbb{I})\ge c_1\underline{k}$ in high-probability \citep[Theorem 6.1][]{wainwright2019high}. Either way, we still obtain that the estimation error is of order $\|\hat{\boldsymbol{\Theta}}_R(\omega) - \boldsymbol{\Theta}^*(\omega)\|_F=\mathcal{O}_p(p\sqrt{\log p/m})$.

With the Ridge estimator, whilst we are able to obtain estimates of the spectrum (and it's inverse) in high-dimensions, the rate is quite poor, and is unable to adapt to the structure within the set $E(\boldsymbol{\Theta}^*(\omega))$. As an alternative, we now present a result on the Lasso spectral estimator, which can make use of this structure through selecting entries to be set exactly to zero.

\begin{proposition} 
\label{prop:glasso}
\textcolor{black}{Let $\hat{\boldsymbol{\Theta}}_G(\omega)$ be the RSE defined in \eqref{estimator} with penalty \eqref{pen2} and  regularisation parameter $\lambda = \epsilon^{-1}c \sqrt{{\tau \log p}/{m}}$ for any $\tau>2$ and $0<\epsilon<1$. Under A1-2 and with sufficient tapers,}
\begin{equation*}
%\label{eq:m_bound}
%needs updating
m > \tau \log p \max\left\{2c^{-2}, (p+s)\{18c\bar{k}/\epsilon\underline{k}^2\}^2 \right\},
\end{equation*}
we have, 
\begin{align}
    &\textcolor{black}{{\left\|\hat{\boldsymbol{\Theta}}_G(\omega)- \boldsymbol{\Theta}^*(\omega)\right\|_F \leq \frac{36c\tau^{1/2}}{\epsilon\underline{k}^2}\sqrt{\frac{(p+s)\log p}{m},}}}\label{bound_glasso}
\end{align}
with probability at least $1-8/p^{\tau-2}$. %and $c=2^{9/2} 5 \max_q\{S_{qq}(\omega)\}.$}
\end{proposition}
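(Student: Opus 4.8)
The plan is to run the primal analysis of penalised log-determinant M-estimators (in the style of Rothman et al.\ and Ravikumar et al.), adapted to the complex Hermitian Whittle objective and the group-lasso penalty. Write $Q(\boldsymbol{\Theta}) = \mathcal{L}(\boldsymbol{\Theta}(\omega)) + \lambda\|\boldsymbol{\Theta}(\omega)\|_1$ for the penalised objective, set $\hat{\Delta} := \hat{\boldsymbol{\Theta}}_G(\omega) - \boldsymbol{\Theta}^*(\omega)$, and define the centred, convex map $G(\Delta) := Q(\boldsymbol{\Theta}^*(\omega)+\Delta) - Q(\boldsymbol{\Theta}^*(\omega))$ over Hermitian $\Delta$ with $\boldsymbol{\Theta}^*(\omega)+\Delta \succ 0$. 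By optimality $G(\hat{\Delta})\le 0$ while $G(0)=0$. The whole argument reduces to exhibiting a radius $r^\ast$ for which $G(\Delta)>0$ on the sphere $\|\Delta\|_F = r^\ast$: convexity of $G$ together with $G(0)=0$ and $G(\hat\Delta)\le 0$ then forces $\|\hat{\Delta}\|_F\le r^\ast$, and identifying $r^\ast$ with the right-hand side of \eqref{bound_glasso} finishes the proof.

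First I would fix the stochastic input. Applying Proposition~\ref{prop:dev_bound} entrywise with a union bound over the at most $p^2$ entries of $\hat{\mathbf{S}}(\omega)$ and taking $\delta = c\sqrt{\tau\log p/m} = \epsilon\lambda$ gives $\max_{q,r}|\hat S_{qr}(\omega)-S_{qr}(\omega)|\le\delta$ on an event of probability at least $1-8/p^{\tau-2}$; the condition $m>2c^{-2}\tau\log p$ keeps $\delta$ inside the admissible range of the deviation bound. Working deterministically on this event, I would Taylor-expand the smooth part of $G$ about $\boldsymbol{\Theta}^*(\omega)$, using $\nabla(-\log\det\boldsymbol{\Theta}) = -\boldsymbol{\Theta}^{-1}$ and $(\boldsymbol{\Theta}^*)^{-1} = \mathbf{S}(\omega)$, so that the leading linear terms cancel into
\[
G(\Delta) = \Tr\{(\hat{\mathbf{S}}(\omega)-\mathbf{S}(\omega))\Delta\} + \tilde{R}(\Delta) + \lambda\left(\|\boldsymbol{\Theta}^*(\omega)+\Delta\|_1 - \|\boldsymbol{\Theta}^*(\omega)\|_1\right),
\]
where $\tilde{R}(\Delta) = \vect(\Delta)^{H}\left[\int_0^1 (1-v)(\boldsymbol{\Theta}^*+v\Delta)^{-1}\otimes(\boldsymbol{\Theta}^*+v\Delta)^{-1}\,\mathrm{d}v\right]\vect(\Delta)$ is the integral-form second-order remainder (with the conjugate-transpose Kronecker bookkeeping appropriate to Hermitian arguments). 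I would then bound the linear term by duality, $|\Tr\{(\hat{\mathbf{S}}-\mathbf{S})\Delta\}|\le\delta\|\Delta\|_1$, and handle the penalty by letting $\mathcal{S}$ be the union of the diagonal and the off-diagonal support $E(\boldsymbol{\Theta}^*(\omega))$, so $\operatorname{card}(\mathcal{S})\le p+s$ by A1; since the true entries vanish on $\mathcal{S}^c$, the reverse triangle inequality for the complex modulus gives $\|\boldsymbol{\Theta}^*+\Delta\|_1-\|\boldsymbol{\Theta}^*\|_1 \ge \|\Delta_{\mathcal{S}^c}\|_1-\|\Delta_{\mathcal{S}}\|_1$. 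Because $\delta=\epsilon\lambda<\lambda$, the $\|\Delta_{\mathcal{S}^c}\|_1$ contributions are nonnegative and may be discarded, leaving $G(\Delta)\ge \tilde{R}(\Delta)-(\lambda+\delta)\|\Delta_{\mathcal{S}}\|_1 \ge \tilde{R}(\Delta) - 2\lambda\sqrt{p+s}\,\|\Delta\|_F$, the last step being Cauchy–Schwarz over the $\le p+s$ active coordinates.

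The crux, and the step I expect to be the main obstacle, is the quadratic lower bound on $\tilde{R}(\Delta)$. Since the eigenvalues of $A\otimes A$ are products of those of $A$, the Kronecker integrand is lower bounded by $\phi_{\min}^2\{(\boldsymbol{\Theta}^*+v\Delta)^{-1}\} = \phi_{\max}^{-2}(\boldsymbol{\Theta}^*+v\Delta)$, so $\tilde{R}(\Delta)\ge \tfrac12\|\Delta\|_F^2\min_{v\in[0,1]}\phi_{\max}^{-2}(\boldsymbol{\Theta}^*+v\Delta)$. Using $\phi_{\max}(\boldsymbol{\Theta}^*+v\Delta)\le \phi_{\max}(\boldsymbol{\Theta}^*)+\|\Delta\|_F\le \underline{k}^{-1}+\|\Delta\|_F$ and confining $\|\Delta\|_F$ so that this stays below $3\underline{k}^{-1}$ yields $\tilde{R}(\Delta)\ge \tfrac{1}{18}\underline{k}^2\|\Delta\|_F^2$. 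The genuinely delicate point is validity: the expansion only holds where $\boldsymbol{\Theta}^*+v\Delta\succ 0$ along the entire segment $v\in[0,1]$, and since $\phi_{\min}(\boldsymbol{\Theta}^*)\ge\bar{k}^{-1}$ under A2, positive-definiteness (and membership in the constraint set $\mathcal{C}$) survives only when $\|\Delta\|_F$ is smaller than a constant multiple of $\bar{k}^{-1}$. This is exactly what the taper lower bound $m > (p+s)\tau\log p\,\{18c\bar{k}/\epsilon\underline{k}^2\}^2$ enforces: it guarantees the target radius lies inside the region where both the remainder bound and the PD constraint are simultaneously valid.

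Finally, with $\tilde{R}(\Delta)\ge\tfrac{1}{18}\underline{k}^2\|\Delta\|_F^2$ available, I would set $r^\ast = 36\lambda\sqrt{p+s}/\underline{k}^2 = 36c\tau^{1/2}(\epsilon\underline{k}^2)^{-1}\sqrt{(p+s)\log p/m}$ and check on the sphere $\|\Delta\|_F=r^\ast$ that the quadratic gain $\tfrac{1}{18}\underline{k}^2(r^\ast)^2$ strictly exceeds the linear loss $2\lambda\sqrt{p+s}\,r^\ast$, so that $G(\Delta)>0$ there; the $m$-condition ensures $r^\ast$ respects the radius restriction used above. The convexity argument then yields $\|\hat{\Delta}\|_F\le r^\ast$ on the high-probability event, which is precisely \eqref{bound_glasso}. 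Relative to the real-valued graphical-lasso analysis, the only additional care needed is verifying that the trace and log-determinant Taylor identities, the Kronecker eigenvalue bound, and the reverse triangle inequality all carry over to complex Hermitian arguments under the modulus-based group-lasso penalty; the probabilistic and eigenvalue machinery is otherwise unchanged.
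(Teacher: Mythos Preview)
Your proposal is correct and follows essentially the same Rothman-style sphere argument as the paper's proof: condition on the deviation event, Taylor-expand $\log\det$ with integral remainder, use decomposability of the penalty, and lower-bound the Kronecker remainder via the eigenvalue inequality to get the same $\tfrac{1}{18}\underline{k}^2\|\Delta\|_F^2$ curvature before reading off the radius. The only cosmetic difference is that you merge the diagonal and off-diagonal support into a single active set $\mathcal{S}$ of cardinality $\le p+s$ and bound the linear-plus-penalty contribution in one stroke, whereas the paper treats $\Delta^+$ and $\Delta^-$ separately (bounding the diagonal piece by Cauchy--Schwarz and the off-diagonal piece through the penalty) before recombining into the identical final inequality.
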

\begin{proof}
   The proof is given in the Supplementary Material and follows an argument similar to that in \cite{rothman2008sparse}. 
\end{proof}

The result for the Lasso estimator sharpens the rate by order $p^{1/2}$ compared with the Ridge. The order $\sqrt{p+s}$ term in the bound originates as we need to estimate both the diagonal and off-diagonal structure in the inverse spectral density. In comparison to Proposition \ref{prop:ridge}, we note that this bound will hold as long as $m=\Omega((p+s)\log p)$, whilst this allows a relatively high-dimensional setting we still need sufficient samples to ensure $\|\hat{\boldsymbol{\Theta}}_G(\omega)\|_2$ is bounded. In the following section we further sharpen these bounds by introducing a more restrictive incoherence condition on the spectrum. 

\subsection{Consistent Estimation of Partial Coherence Graphs}

The Lasso RSE in \eqref{estimator} can be used to construct a frequency specific partial coherence graph $\mathcal{G}(\omega)=({V},{E})$ which can be used to visualise conditional relationships between dimensions of the multivariate point process \citep{dahlhaus2000graphical}. In this graph, the nodes ${V}=\{1,\dots,p\}$ represent the $q=1,\ldots,p$ processes $N_q(t)$, and the edge set is characterised by the partial spectral coherence given in \eqref{eq:partial_co}. If the partial spectral coherence $\rho^*_{qr}(\omega)=0 $ for all frequencies $\omega$, then we declare an edge between nodes $q$ and $r$ would be missing in the corresponding conditional independence graph \citep{eichler2003partial}.

Whilst we have shown consistency of the Lasso estimator in \eqref{bound_glasso}, it remains to demonstrate that, with high probability, this estimate can correctly identify the zero pattern of the matrix $\boldsymbol{\Theta}^*(\omega)$. \textcolor{black}{In this section, we state a result on the model selection consistency of the Lasso estimator, which builds on the work presented in \cite{ravikumar2011high}}. Before stating our result, we first set out some quantities required in our analysis.

We denote the augmented edge set by $\mathcal{M}(\omega):= E(\boldsymbol{\Theta}^*(\omega)) \cup \{(1,1), \dots, (p,p)\}$ and its complement by $\mathcal{M}^{\perp}(\omega)$. Going forward, we adopt the shorthand $\mathcal{M}$ and $\mathcal{M}^{\perp}$ respectively. We also define the maximum node degree as $d:=\max_{q=1,\dots, p}\left|\{ r \in \{1, \dots, p\} : \Theta^*_{qr} \neq 0 \}\right|$ which is the maximum number of non-zeros in any row of $\mathbf{\Theta}^*(\omega).$

Let the Hessian of the Whittle likelihood at frequency $\omega$ be given by 
%\begin{equation*}
$
    \Gamma(\omega)=\mathbf{S}(\omega) \otimes \mathbf{S}(\omega) \in \mathbb{C}^{p^2 \times p^2}$.
%\end{equation*}
We define the term 
\begin{equation}
    \kappa_{S} := \opnorm{\mathbf{S}(\omega)}{\infty} = \max_{q=1,\dots,p}\sum_{r=1}^p\left|S_{qr}(\omega)\right|,
    \label{kappa_s}
\end{equation}
corresponding to the $\ell_{\infty}$ norm of the true spectral density matrix. Similarly, we consider the subset of the Hessian relevant to the true model subset $\Gamma_{\mathcal{M}\mathcal{M}}(\omega)$ and let 
\begin{equation}
    \kappa_{\Gamma} = \opnorm{\left(\Gamma_{\mathcal{M}\mathcal{M}}(\omega)\right)^{-1}}{\infty}.
    \label{kappa_gam}
\end{equation}

The two quantities \eqref{kappa_s} and \eqref{kappa_gam} play an important role in describing the behaviour of the Lasso estimator, and are used in the final result.
 
We now introduce a frequency specific incoherence condition, that holds across all frequencies, which is required for the model selection consistency of the Lasso estimator. 

\begin{enumerate}
    \item[A3.] Let $\Gamma := \Gamma(\omega)$. Then for frequency $\omega \in \mathbb{R}$ there exists some $\alpha \in (0,1]$ such that
    \begin{equation}
        \max_{e \in \mathcal{M}^{\perp}} \left\| \Gamma_{e\mathcal{M}}(\Gamma_{\mathcal{M}\mathcal{M}})^{-1} \right\| \leq (1-\alpha).
        \label{ass:incoherence}
    \end{equation}
\end{enumerate}
The intuition behind this assumption is that it limits the influence of the non-edge terms, indexed by $\mathcal{M}^{\perp}$, on the edge terms, indexed by $\mathcal{M}$. %\citep{ravikumar2011high}.

Our next result establishes an elementwise deviation bound for the Lasso estimator which can be used to establish consistent estimation of the resulting partial coherence graphs.

\begin{proposition}{\textit{Consistent Estimation of Partial Coherence Graphs.}}
\label{prop:l_inf}
Consider a distribution satisfying the incoherence assumption \eqref{ass:incoherence} with parameter $\alpha \in(0,1].$ Let $\boldsymbol{\hat{\Theta}}_G(\omega)$ be the RSE defined by \eqref{estimator} with penalty \eqref{pen2} and regularisation parameter $\lambda=(8c/\alpha) \sqrt{\log p^{\tau}/{m}}$ for any $\tau>2$. Then, under assumptions A1-3. with $p>3$ and sufficient tapers,

\begin{equation*}
    m> \tau \log p \max \left\{ 2c^{-2}, \left\{6c(1+8/\alpha)^2 d \kappa^3_S \kappa^2_\Gamma\right\}^2 , \left\{6c(1+8/\alpha) d \kappa^2_S \kappa_\Gamma \max(1, \kappa_S \kappa_\Gamma)\right\}^2 \right\},
\end{equation*}
where $c=80 \sqrt{2} \max_q\{S_{qq}(\omega)\}$, with probability greater than $1-8/p^{\tau-2}$, we have:
\begin{itemize}
    \item[(a)] The estimate $\hat{\boldsymbol{{\Theta}}}_G(\omega)$ satisfies the element-wise $\ell_{\infty}$-bound: 
    \begin{equation}
    \label{bound:l_inf}
        \left\|\hat{\boldsymbol{{\Theta}}}_G(\omega) - \boldsymbol{\Theta}^*(\omega)\right\|_{\infty} \leq 2c \kappa_{\Gamma}\left(1+\frac{8}{\alpha}\right)\sqrt{\frac{\tau \log p}{m}}.
    \end{equation}
    \item[(b)] The estimate $\hat{\boldsymbol{{\Theta}}}_G(\omega)$ specifies an edge set $E(\hat{\boldsymbol{{\Theta}}}_G(\omega))$ that is a subset of the true edge set $E(\boldsymbol{\Theta}^*(\omega))$, and includes all edges $(q,r)$ with $\left|\Theta^*_{qr}(\omega)\right|> 2c \kappa_{\Gamma}\left(1+\alpha^{-1}8\right)\sqrt{\tau \log p/m}$.
\end{itemize}

% \begin{equation*}
%     m> \{3B_{\alpha}\kappa_S d \max(1,  \kappa_{\Gamma} \kappa_S^2)\}^2 \tau\log p,
% \end{equation*}
% where $B_{\alpha}=(10) 2^{9/2} (1+8/\alpha)\kappa_{\Gamma}\max_q\{S_{qq}(\omega)\}$, with probability greater than $1-8/p^{\tau-2}$, we have:
% \begin{itemize}
%     \item[(a)] The estimate $\hat{\boldsymbol{{\Theta}}}_G(\omega)$ satisfies the element-wise $\ell_{\infty}$-bound: 
%     \begin{equation}
%     \label{bound:l_inf}
%         \left\|\hat{\boldsymbol{{\Theta}}}_G(\omega) - \boldsymbol{\Theta}^*(\omega)\right\|_{\infty} \leq B_{\alpha}\sqrt{\frac{\tau \log p}{m}}.
%     \end{equation}
%     \item[(b)] The estimate $\hat{\boldsymbol{{\Theta}}}_G(\omega)$ specifies an edge set $E(\hat{\boldsymbol{{\Theta}}}_G(\omega))$ that is a subset of the true edge set $E(\boldsymbol{\Theta}^*(\omega))$, and includes all edges $(q,r)$ with $\left|\Theta^*_{qr}(\omega)\right|>B_{\alpha}\sqrt{\frac{\tau \log p }{m}}$.
% \end{itemize}

\end{proposition}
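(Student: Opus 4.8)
The plan is to adapt the primal--dual witness (PDW) construction of \cite{ravikumar2011high} to the present complex Hermitian, group-penalised Whittle setting. First I would fix the high-probability event on which the entire argument is deterministic. Applying the deviation bound of Proposition~\ref{prop:dev_bound} with $\delta = c\sqrt{\tau\log p/m}$, so that the exponent equals $-\tau\log p$ with $c = 80\sqrt{2}\max_q\{S_{qq}(\omega)\}$, and taking a union bound over the $O(p^2)$ entries yields $\|\hat{\mathbf{S}}(\omega) - \mathbf{S}(\omega)\|_\infty \le c\sqrt{\tau\log p/m} =: \bar\delta$ with probability at least $1 - 8/p^{\tau-2}$. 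The first threshold term $2c^{-2}$ in the sample-size condition is exactly what keeps $\delta$ inside the admissible range $(0,\,80\max_q\{S_{qq}(\omega)\})$ of Proposition~\ref{prop:dev_bound}, and the prescribed $\lambda = (8c/\alpha)\sqrt{\tau\log p/m} = (8/\alpha)\bar\delta$ then controls the sampling noise relative to the penalty, since $\|\mathbf{W}\|_\infty/\lambda \le \bar\delta/\lambda = \alpha/8$ where $\mathbf{W} := \hat{\mathbf{S}} - \mathbf{S}$.

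Second, I would run the PDW construction proper. Define the oracle estimate $\tilde{\boldsymbol{\Theta}}$ as the minimiser of the penalised objective \eqref{estimator} restricted to the true support, i.e.\ with every entry indexed by $\mathcal{M}^\perp$ clamped to zero, and select a dual matrix $\tilde{\mathbf{Z}}$ from the group-lasso subdifferential of $\tilde{\boldsymbol{\Theta}}$ that satisfies the stationarity equation $\hat{\mathbf{S}} - \tilde{\boldsymbol{\Theta}}^{-1} + \lambda\tilde{\mathbf{Z}} = 0$ on $\mathcal{M}$, extended to $\mathcal{M}^\perp$ by the same identity. Writing $\boldsymbol{\Delta} = \tilde{\boldsymbol{\Theta}} - \boldsymbol{\Theta}^*(\omega)$ and Taylor-expanding the gradient $-\boldsymbol{\Theta}^{-1}$ about $\boldsymbol{\Theta}^*(\omega)$ gives $\mathbf{W} + \Gamma\,\vect(\boldsymbol{\Delta}) + \mathbf{R} + \lambda\tilde{\mathbf{Z}} = 0$, where $\Gamma = \mathbf{S}(\omega)\otimes\mathbf{S}(\omega)$ is the Hessian and $\mathbf{R}$ is the second-order remainder. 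Restricting to $\mathcal{M}$ and inverting $\Gamma_{\mathcal{M}\mathcal{M}}$ produces $\|\boldsymbol{\Delta}_{\mathcal{M}}\|_\infty \le \kappa_\Gamma(\|\mathbf{W}\|_\infty + \|\mathbf{R}\|_\infty + \lambda)$; once the remainder is shown negligible, this collapses to $2\kappa_\Gamma(1+8/\alpha)\bar\delta$, which is precisely the element-wise $\ell_\infty$-bound of part (a).

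Third, I would verify strict dual feasibility. Substituting the solved-for $\vect(\boldsymbol{\Delta}_{\mathcal{M}})$ back into the $\mathcal{M}^\perp$ block of the stationarity equation and invoking the incoherence assumption A3, $\opnorm{\Gamma_{\mathcal{M}^\perp\mathcal{M}}(\Gamma_{\mathcal{M}\mathcal{M}})^{-1}}{\infty}\le 1-\alpha$, bounds $\|\tilde{\mathbf{Z}}_{\mathcal{M}^\perp}\|_\infty$ by $(2-\alpha)(\|\mathbf{W}\|_\infty + \|\mathbf{R}\|_\infty)/\lambda + (1-\alpha)$. Since $\|\mathbf{W}\|_\infty/\lambda \le \alpha/8$, the noise contribution is at most $\alpha/4$, so provided the remainder contribution is also $o(\alpha)$ the whole expression is strictly below $1$. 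This certifies that the oracle $\tilde{\boldsymbol{\Theta}}$ coincides with the genuine solution $\hat{\boldsymbol{\Theta}}_G(\omega)$, whence $E(\hat{\boldsymbol{\Theta}}_G(\omega))\subseteq E(\boldsymbol{\Theta}^*(\omega))$ (no false inclusions). Combining this support result with the $\ell_\infty$-bound from part (a), any true entry with $|\Theta^*_{qr}(\omega)|$ exceeding that bound cannot be zeroed out, which delivers the detection guarantee and completes part (b).

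The hard part will be the control of the remainder $\mathbf{R}$ and the fixed-point argument that closes the loop. I expect to establish the complex/Hermitian analogue of the remainder lemma, namely $\|\mathbf{R}\|_\infty \le \tfrac{3}{2}d\,\kappa_S^3\|\boldsymbol{\Delta}\|_\infty^2$ valid on the region $\|\boldsymbol{\Delta}\|_\infty \le (3\kappa_S d)^{-1}$, and then use this quadratic estimate self-consistently to confirm that $\|\boldsymbol{\Delta}\|_\infty$ indeed remains within the claimed radius. It is exactly the two demands that $\mathbf{R}$ be dominated both in the bound on $\boldsymbol{\Delta}_{\mathcal{M}}$ and in the dual-feasibility step that generate the two remaining terms in the $\max$ defining the threshold on $m$, each carrying the factors $d$, $\kappa_S$ and $\kappa_\Gamma$. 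A secondary but genuine technicality, absent from the real symmetric treatment of \cite{ravikumar2011high}, is that the penalty $P_1$ is a group norm over the real and imaginary parts of each entry; the subdifferential characterisation and the ``$<1$'' dual-feasibility condition must therefore be read in the group-$\ell_2$ sense. The PDW skeleton is unaffected, but this is where the complex-valued bookkeeping requires care.
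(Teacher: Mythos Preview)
Your proposal is correct and follows essentially the same route as the paper: both adapt the primal--dual witness construction of \cite{ravikumar2011high} to the complex Hermitian, group-penalised Whittle setting, conditioning on the same high-probability deviation event, defining the oracle (support-restricted) estimate, controlling the remainder via the bound $\|R\|_\infty \le \tfrac{3}{2}d\kappa_S^3\|\Delta\|_\infty^2$, and establishing strict dual feasibility through the incoherence assumption A3. The paper organises these steps into explicit sub-lemmas (including a Brouwer fixed-point argument for the control of $\|\dot{\Delta}\|_\infty$, which you correctly flag as ``the fixed-point argument that closes the loop''), and likewise notes the group-$\ell_2$ reading of the subdifferential for the complex penalty, so your outline matches the paper's proof in both structure and technical content.
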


\begin{proof}
    The proof, which is based on a primal dual witness approach, is given in the Supplementary Material.
    %5\textcolor{red}{I don't think you need to reiterate \cite{ravikumar2011high} as you've already mentioned it. I think you can just say the proof is in the Supp.}
    %Wainright book
\end{proof}

Part $(b)$ of Proposition \ref{prop:l_inf} states that the estimated edge set is a subset of the true edge set. More specifically, it asserts that $E(\boldsymbol{\hat{\Theta}}_G(\omega))$ correctly excludes all non-edges and only includes edges that are `large' relative to decay of the error $c\sqrt{\tau \log p /m}$. 

Sign consistency can also be achieved by placing a lower bound on the entries in the inverse spectral density matrix. In this case, not only will the estimator have the same edge set as $\boldsymbol{\Theta}^*(\omega)$, but it will also be able to recover the correct signs on these edges. Mathematically, this translates to the event
\begin{equation*}
    \mathcal{K}(\boldsymbol{\hat{\Theta}}(\omega); \boldsymbol{\Theta}^*(\omega)) := \left\{\operatorname{sign}(\hat{\Theta}_{qr}(\omega)) = \operatorname{sign}(\Theta^*_{qr}(\omega)) \ \forall (q,r) \in \mathcal{M}(\omega)\right\}.
\end{equation*}
Using this notation, and with a sufficiently modified bound on the number of tapers, the estimator $\hat{\boldsymbol{\Theta}}_G(\omega)$ is model selection consistent with high probability as $p \rightarrow \infty$, 
\begin{equation*}
    \mathbb{P}( \mathcal{K}(\boldsymbol{\hat{\Theta}}(\omega); \boldsymbol{\Theta}^*(\omega))) \geq 1-8/p^{\tau-2} \rightarrow 1.
\end{equation*}
The sign consistency can also be achieved without requiring Assumption~A3 and instead using a thresholded or adaptive lasso penalty \cite{buhlmann2011statistics}. 

    }
 % we can achieve entry-wise model selection consistency provided that the true value is larger than a given threshold. 

\section{Synthetic Experiments}
\normalsize{
We use synthetic data to compare the performance of the Lasso and  Ridge estimators to existing statistical approaches where applicable.
%i.e. cases where $p<m$. 
The goal of the study is two-fold: (1) to assess the accuracy of the recovered inverse spectrum and (2) to investigate the conditional relationships of high-dimensional multivariate point processes. %at a particular frequency $\omega$.
Throughout, we use simulated data from different parameterisations of the multivariate Hawkes process. This particular point process has been used extensively to model neuronal interactions in the brain, and as such will serve as an illustrative example in our discussion \citep[see for example][]{pernice2011structure, chen2017nearly, lambert2018reconstructing, wang2021joint}.

\subsection{{Simulation Setting}}
{In the simulation study, we consider three different parameterisations of stationary multivariate Hawkes processes. Motivated by applications is neuroscience, we consider processes that are governed by exponential excitation functions. More specifically, the conditional intensity function of the process can be written as}
\begin{equation*}
    {\boldsymbol{\Lambda}=(\mathbb{I}-\mathbf{G}(0))^{-1}\boldsymbol{\nu}, }
\end{equation*}
{where $\boldsymbol{\nu}$ is a vector of background intensities, $\mathbb{I}$ is the identity matrix and $ \mathbf{G}$ is the Fourier transform of the excitation function defined in the exponential case as}
\begin{equation*}
   {\mathbf{G}(\omega)=\int_{-\infty}^{\infty}e^{-i\omega \tau} \boldsymbol{\alpha}e^{-\boldsymbol{\beta}(\tau)}d\tau = \frac{\boldsymbol{\alpha}}{\boldsymbol{\beta}+i \omega}.}
\end{equation*}
{The spectral density matrix of a stationary multivariate Hawkes process is subsequently defined as}
\begin{equation*}
    {\mathbf{S}(\omega)=\frac{1}{2\pi}\{\mathbb{I}-\mathbf{G}(\omega)\}^{-1} \mathbf{D}\{\mathbb{I}-\mathbf{G}^{\prime}(-\omega)\}^{-1},}
\end{equation*}
{where $\mathbf{D}=\textrm{diag}(\boldsymbol{\Lambda}).$ Further details regarding the multivariate Hawkes process used in this section can be found in the Supplementary Material.}

We consider multivariate Hawkes processes of dimension $p=12, 48$ and $96$ and simulate $m=10$ and $m=50$ independent trials of synthetic data. 
%Motivated by applications in neuroscience, we consider processes with exponential excitation functions.
These data are simulated via Ogata's method \citep{ogata1981lewis} using the \verb|hawkes|\footnote{https://cran.r-project.org/web/packages/hawkes/index.html} package in R, and each dimension of the process is simulated for $T=200$ seconds. 

Multiple scenarios are considered in this study. In the first instance, we simulate from a $p$-dimensional Hawkes process whose dependence structure is captured by a block diagonal excitation matrix $\boldsymbol{\alpha}$. Each block consists of a $3 \times 3$ matrix $\boldsymbol{\alpha}^*$ designed to ensure both self- and cross-excitation between the dimensions in a particular block. Thus, dependence between dimensions of the process is null between the blocks and  non-null within each block. 
We consider two models of this nature: model (a) in which there are interactions between two of the processes within a block, and an additional independent process and model (b) where all three processes in a particular block interact with each other. In the final setting (c) we simulate from a $p$-dimensional Hawkes process whose dependence structure is driven by an arbitrarily sparse excitation matrix $\boldsymbol{\alpha}$. All parameters were chosen in such a way to ensure both stability and stationarity of the multivariate process. For further details on the specification of the Hawkes process used in this study, see the Supplementary Material.

\subsection{Parameter Tuning and Performance Metrics}
The Ridge and Lasso estimators require careful specification of a tuning parameter $\lambda$. 
% which, in the synthetic data setting, can be selected using the availability of the ground truth. More specifically, 
Using a training group of simulated data, we obtain estimates of $\boldsymbol{\Theta}^*(\omega)$ over a fixed search grid of $\lambda$ values. 
\textcolor{black}{To assess the effect of the tuning parameter, }
the optimal value of $\lambda$ \textcolor{black}{in this simulation is} selected according to an appropriate performance metric. Often, one may consider a measure of accuracy (i.e. precision of the recovered inverse spectrum), or edge selection (i.e. estimation of the correct sparsity pattern). For the former, we use the notion of mean squared error (MSE) defined in \eqref{eq:MSE} and for the latter, we consider the $\mathrm{F}_1$ score defined as
\begin{equation*}
    \mathrm{F}_1 =\frac{\textrm{TP}}{\textrm{TP}+\frac{1}{2}(\textrm{FP}+ \textrm{FN})}, 
\end{equation*}
where $\mathrm{TP}$ is the number of correctly identified edges in the estimate graph, while FP and FN are the number of false positives and false negatives respectively. 
For each multivariate point process in the training group, an optimal choice of $\lambda$ is chosen either by maximising the $\mathrm{F}_1$ score or minimising the MSE. These values are then averaged across the number of samples in the training set to arrive at a final optimal parameter $\lambda^*.$

We estimate the inverse spectral density matrix across $N=100$ Monte Carlo samples. Then, we quantify the accuracy of our estimation using the mean squared error (MSE) defined as
\begin{equation}
    \textrm{MSE}\left\{\hat{\boldsymbol{\Theta}}^{(n)}(\omega), \boldsymbol{\Theta}^*(\omega)\right\} =  \frac{2}{N p(p-1)}\sum_{n=1}^{N} \sum_{q<r} \left(\hat{\Theta}^{(n)}_{qr} - \Theta^*_{qr}\right)^2
    \label{eq:MSE}
\end{equation}
where $N$ denotes the number of Monte Carlo samples in our simulation study.

We also report results for the recovery of the sparsity pattern for the Lasso estimate. Specifically, we define the true positive rate (TPR) and false positive rate (FPR), respectively, as the proportion of true and false edges in the graph.
Since the Ridge and periodogram estimators do not produce sparse results, we only report the TPR and FPR for the Lasso estimate. Numerical results averaged across 100 repetitions are presented in Table \ref{tab:Sim1}.

\subsection{Simulation Results}

Table \ref{tab:Sim1} reports simulation results for Models (a)-(c), respectively. Numerical results are averaged across 100 Monte Carlo samples, and each entry of the table is of the form mean (standard error). We omit standard errors of $\leq 10^{-2}$ for brevity. Lasso$_1$ and Lasso$_2$ refer to the Lasso estimator tuned using the MSE and $\mathrm{F}_1$ score respectively.   

{Examining the results across all simulation settings, one can see that the multi-taper periodogram matrix cannot be inverted in settings where $p>m$, thus motivating the need for regularisation or shrinkage.}
\begin{table}[t]
\small
 \caption{\textit{Simulation results over 100 replications for estimating the inverse spectral density matrix, with mean squared error (MSE) and F$_1$ score. All results are recorded at a particular frequency $\omega=0.0628$ and are in the form of mean (standard error). Standard errors of $<10^{-2}$ are omitted for brevity. Hyphenated entries (-) denote that the multi-taper periodogram matrix could not be inverted.}}
   \centering  
  \begin{tabular}{ccccccccc}
         \multirow[b]{2}{*}{Model} & \multirow[b]{2}{*}{$p$} & \multirow[b]{2}{*}{$m$}  & \multicolumn{4}{c}{\text { Mean Squared Error - ISDM }}  & \multicolumn{2}{c}{\text{F$_1$ Score}} \\

 & & & \text {Inverted Periodogram} & \text { Ridge } & \text {Lasso}$_1$ & \text {Lasso}$_2$   & \text {Lasso}$_1$ & \text {Lasso}$_2$ \\ %& \text {TPR}& \text {FPR}  \\

\\

(a) & 12 & 10 & -  &22.00 (0.03) & 19.27 (0.02) & 28.62 & 0.11 & 0.75 \\%& 0.55 (0.00) & 0.02 (0.00)\\
 & & 50 & 9.80 (0.01) & 9.86 (0.01)  & {20.21} (0.01) & 28.08 & 0.11& 0.99 \\ %& 1.00 (0.00)& 0.00 (0.00)\\
 & 48 & 10 & - &  6.51  & 5.31  & 6.74 & 0.06 & 0.70 \\% \\ & 0.29 (0.00)& 0.00 (0.00)\\
& & 50  & 33944.94 (450.89) &  4.77  & 3.49 & 6.58 & 0.04 & 0.98 \\ %& 1.00 (0.00)& 0.00 (0.00)\\
& 96 & 10 & - & 3.29  & 2.81  & 3.34 & 0.04 & 0.69 \\%& 0.20 (0.00)& 0.00 (0.00)\\
& & 50 & - & 2.89  & 2.04 & 3.27 & 0.03 & 0.99 \\ % & 0.98 (0.00)& 0.00 (0.00)\\
\\
%%%%%%%%%%%%%%%%%%%%%%%%%%%%%%%%%%%%%%%%%%%%%%%%%%%%%%%%%%%%%%%%%%%%%
(b) & 12 & 10&- & 2.36 (0.01)  & 1.86 (0.01)  & 4.37 & 0.32 & 0.81  \\% & 0.71 (0.00)& 0.04 (0.00)\\
 & & 50 & 1.59  & 1.58 (0.01) & 1.53  & 4.20 & 0.31 & 0.97 \\% & 1.00 (0.00)& 0.02 (0.00)\\
 & 48 & 10 &-& 0.70  & 0.37 & 1.03  & 0.13 & 0.74\\%& 0.41 (0.00)& 0.01(0.00) \\
 & & 50 & 6218.81 (151.89) & 0.59  & 0.31  & 1.01  & 0.09 & 0.98  \\% & 0.98 (0.00)&0.00 (0.00) \\
& 96 & 10 &-& 0.43  & 0.19 & 0.51& 0.10 & 0.71 \\ % & 0.33 (0.00)& 0.00 (0.00) \\
  & & 50 &-& 0.28  & 0.14  & 0.50 & 0.05& 0.96\\ %& 0.95 (0.00)& 0.00 (0.00)\\
\\
%%%%%%%%%%%%%%%%%%%%%%%%%%%%%%%%%%%%%%%%%%%%%%%%%%%%%%%%%%%%%%%%%%%%%
(c) & 12 & 10 &-&  14.01 (0.01) & 8.56 (0.01) & 12.71 & 0.12 & 0.83 \\% & 0.75 (0.00)& 0.01 (0.00)\\
 & & 50& 26.85 (0.08) & 10.06 (0.01) &  9.17 &12.24 & 0.12 & 0.82 \\% & 0.76 (0.00)& 0.01 (0.00)\\
 & 48 & 10 & -& 3.19  &  2.12  & 3.10  & 0.06 & 0.77\\%& 0.56 (0.00) & 0.00 (0.00)\\
 & & 50 & 34088.54 (451.49) & 3.30  & 1.51  &3.07 & 0.04 & 0.87\\% & 0.75 (0.00) & 0.00 (0.00)\\
& 96 & 10 &-& 1.34  & 1.14  & 1.54& 0.05 & 0.73  \\% & 0.43 (0.00)& 0.00 (0.00)\\
  & & 50 &-& 1.83  &  0.74  & 1.53 & 0.03 & 0.87 \\%& 0.74 (0.00) &0.00 (0.00)\\
\\

    \end{tabular}
    \label{tab:Sim1}
\end{table}
In terms of MSE, the Ridge and Lasso$_1$ estimators are comparable in setting (a), while Lasso$_1$ yields more favourable results in settings (b) and (c). Comparatively, Lasso$_2$ attains larger values of MSE for models (a) and (b), however it too is competitive with Ridge in setting (c). This highlights a potential drawback of the Ridge estimator in that it is biased towards a scaled identity matrix, of which the sparsity patterns of models (a) and (b) somewhat resemble \citep{fiecas2019spectral}. Therefore, one might suggest the use of the Lasso estimator when the sparsity pattern of the inverse spectrum is known to to be arbitrarily sparse rather than containing some sort of banded structure. Notably, the periodogram estimate is competitive in terms of MSE in scenarios where $p=12$ and $m=50$. This highlights the competence of the multi-taper estimate in scenarios where $p<m$, while also emphasising the importance for appropriate regularisation in high-dimensional settings. 
The $\mathrm{F}_1$ score results highlight that the Lasso$_2$ estimator yields favourable results in terms of sparse recovery of the spectrum, however this comes at the expense of an increase in MSE compared to Lasso$_1$. \textcolor{black}{This confirms the previously observed challenges of selecting the tuning parameter of lasso penalty to provide satisfactory performances in model selection and parameter estimation \citep{meinshausen2006high}.}

Overall, the Lasso estimator is preferable when the goal is to learn the sparsity pattern of the inverse spectrum. However, the Ridge estimator yields competitive results in terms of the accuracy of entries in the recovered estimate. Furthermore, the results presented here highlight that the proposed methodology improves upon existing estimators, such as the multi-taper periodogram, which break down in high-dimensional settings. 
}

\section{Neuronal Synchronicity and Partial Coherence Networks}
\normalsize{
We illustrate how our proposed methodology can be used to infer a network of interactions between a population of neurons, using the spike train data from \cite{bolding2018recurrent}. 
In this experiment, an optogenetic approach was used to directly stimulate the olfactory bulb (OB) region of the mouse brain. In particular, $1$ second light pulses were presented above the OB region at increasing intensity levels from $0$ to $50mW/mm^2$ over a total of $m=10$ experimental trials. 

The dataset consists of spike times for multiple mice, collected at $8$ intensities of laser stimuli. As an illustrative example, we will consider data collected at three stimuli ($0, 10$ and $50mW/mm^{2}$) from a single mouse with $p=26$ neurons measured in the OB region.

\begin{figure}[t]
\centering
%0 mw/mm^2
\begin{subfigure}[]{0.3\textwidth}
\centering
   \includegraphics[width=0.95\textwidth]{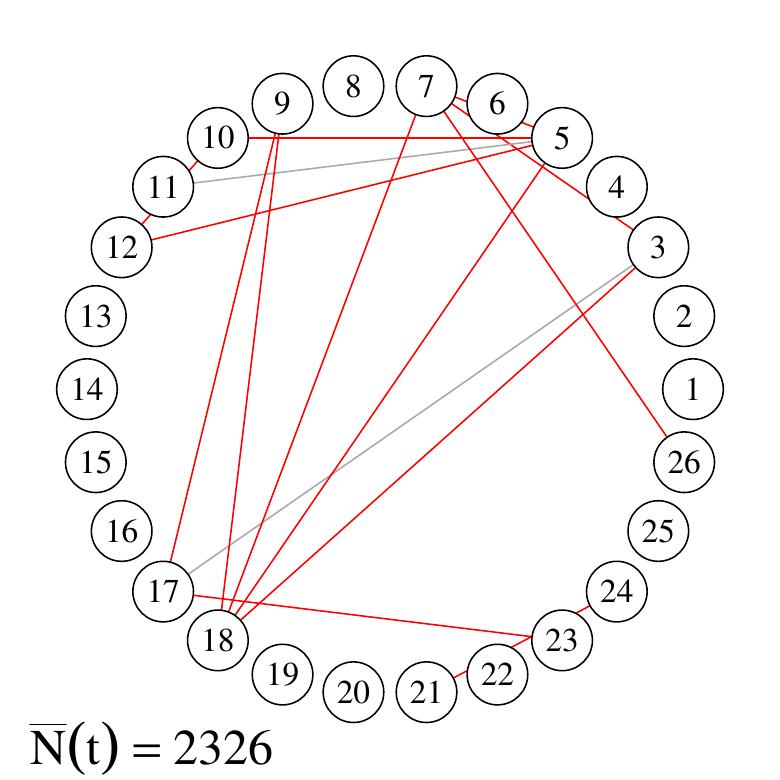}
\end{subfigure}
\begin{subfigure}[]{0.3\textwidth}
   \centering
    \includegraphics[width=0.95\textwidth]{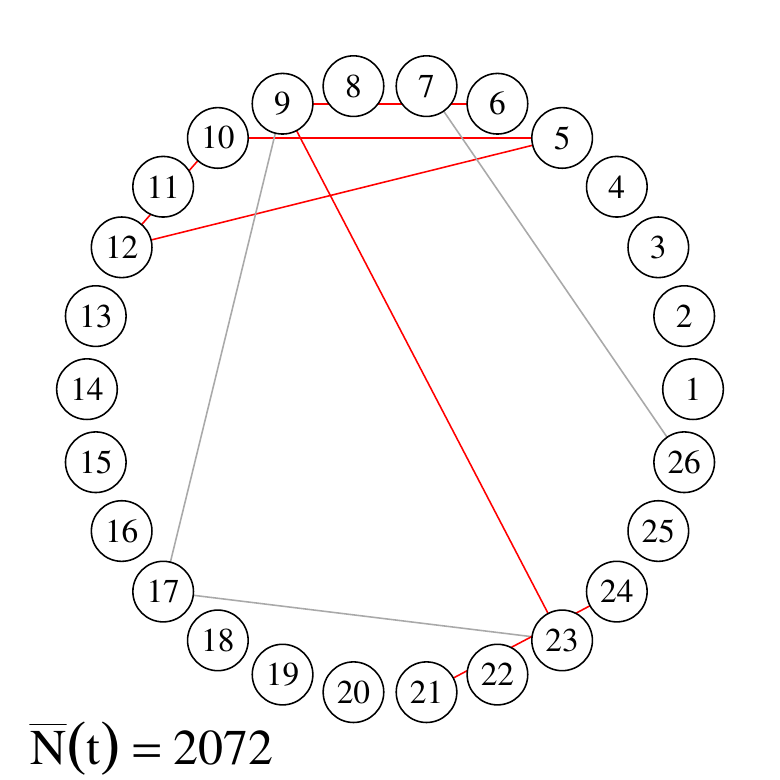}
\end{subfigure}
\begin{subfigure}[]{0.3\textwidth}
\centering
   \includegraphics[width=0.95\textwidth]{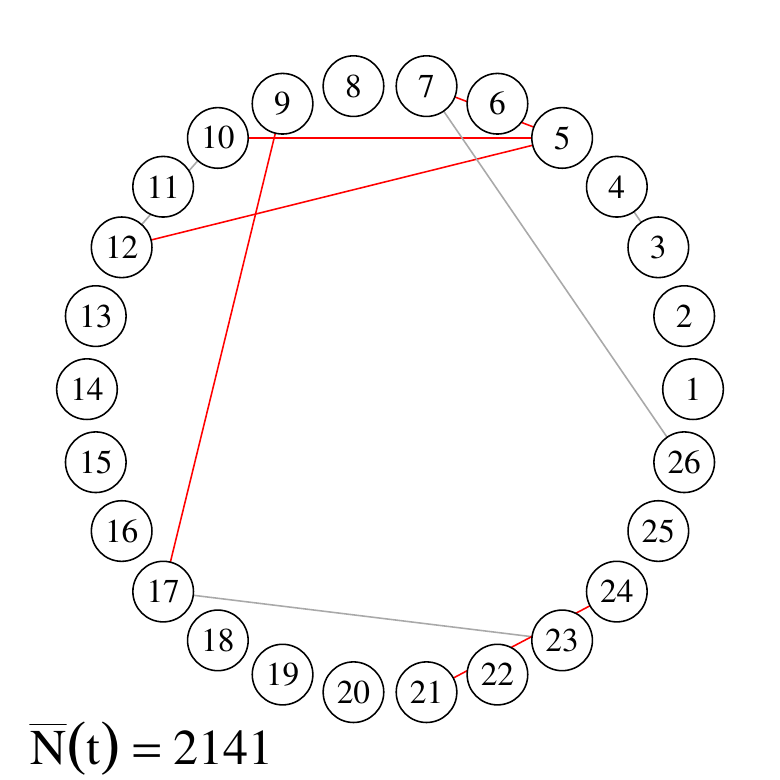}
\end{subfigure}

\begin{subfigure}[]{0.3\textwidth}
\centering
   \includegraphics[width=0.95\textwidth]{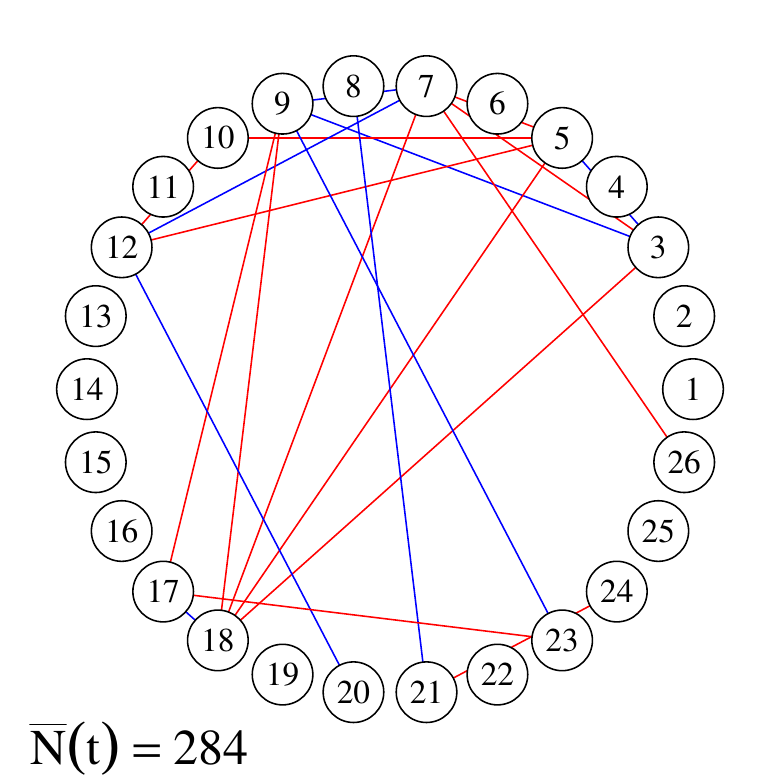}
   \caption{}
\end{subfigure}
\begin{subfigure}[]{0.3\textwidth}
   \centering
    \includegraphics[width=0.95\textwidth]{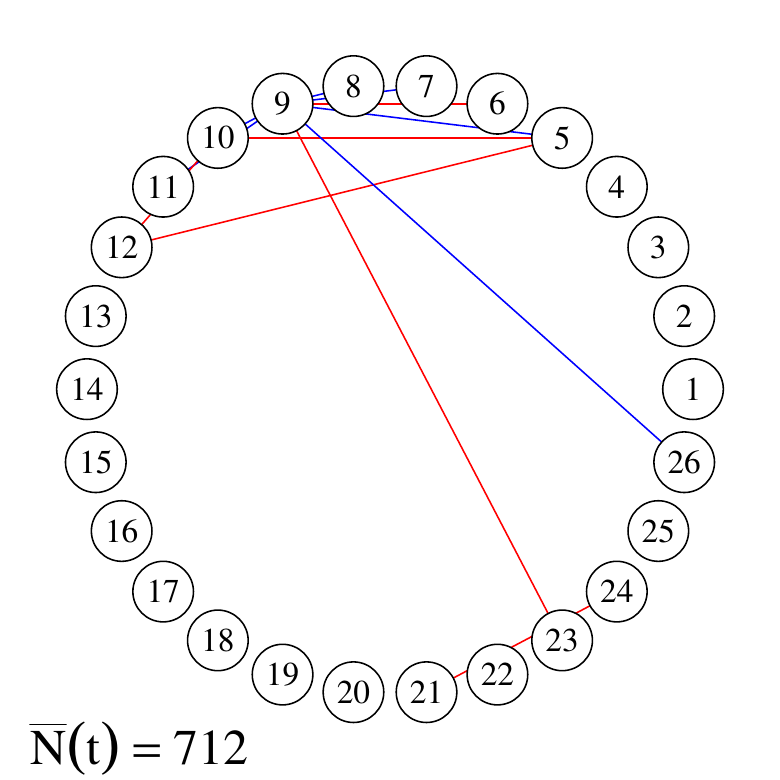}
     \caption{}
\end{subfigure}
\begin{subfigure}[]{0.3\textwidth}
\centering
   \includegraphics[width=0.95\textwidth]{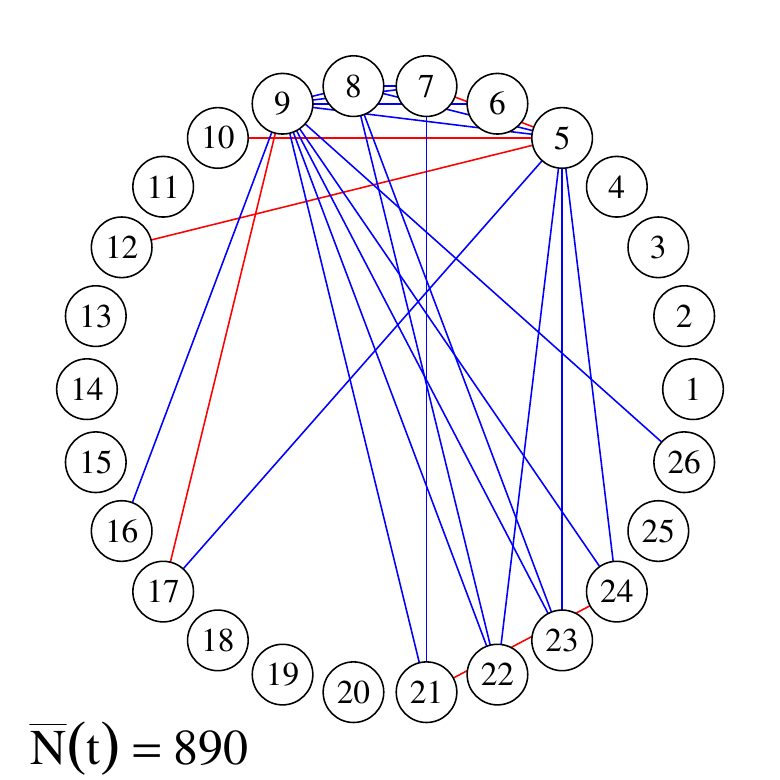}
    \caption{}
\end{subfigure}
\caption{\small{Estimated networks of neural interactions on the delta band using the spike train data from \cite{bolding2018recurrent}. Each column shows the estimated networks for the laser off (top) and laser on (bottom) condition for laser stimuli (a) $0mw/mm^2$, (b) $10mw/mm^2$ and (c) $50mw/mm^2$. Common edges between the on and off conditions for each level of intensity are shown in red, and edges unique to the laser on condition are shown in blue. All other edges are shown in grey.}}
\label{plt:networks}
\end{figure}

For each of the considered levels of stimuli, data is collected during a `laser on' and `laser off' condition. For the laser on condition, we have $1$ second worth of data, compared to a total of $8$ seconds for the laser off condition. We use our Lasso estimator to visualise functional connectivity for each condition, under the differing levels of laser stimuli.  Since a laser stimulus of intensity $0 mw/mm^2$ is equivalent to the laser off condition, we use this setting as a plausibility check for the proposed methodology.

Initial exploratory analysis showed that each of the $26$ point processes are driven by lower frequencies in the spectrum, which in neuroscience are named as frequencies in the delta, $(0,4)$ Hz, and theta, $(4,8)$ Hz, bands \citep{buzsaki2004neuronal}. Consequently, we will apply our method to these lower frequency bands.
The overall aim of our analysis is to use the regularised estimator of the partial coherence matrix to determine if and how estimates of neuronal connectivity differ in response to the changing stimuli.  

To estimate connectivity in a particular frequency band, we consider an adaptation of the Lasso estimator whereby we use both a trial and frequency smoothed periodogram estimate as input to \eqref{estimator}. Specifically, we define the trial-frequency smoothed estimator as
\begin{align*}
    \hat{\mathbf{S}}(\omega_*) = \frac{1}{\operatorname{card}(F)}\sum_{f\in F}\frac{1}{m}\sum_{k=1}^m \bar{\mathbf{d}}_k(\omega_f)\bar{\mathbf{d}}_k^H(\omega_f),
\end{align*}
where the frequencies $f$ lie in a specified range. 
This modified version of the Lasso estimator is tuned using the extended Bayesian Information Criteria (eBIC) with the hyper-parameter set to $0.5$ as suggested in \cite{foygel2010extended}. The degrees of freedom are determined by the number of non-zero elements in our estimate $\hat{\boldsymbol{\Theta}}(\cdot).$

{Figure \ref{plt:networks} shows the estimated partial coherence graphs obtained for the delta band under the laser on and laser off conditions. That is, we present estimates of neuronal interactions for frequencies in the range $(0,4)$ Hz that are specific to each experimental condition. In these graphical representations, each node represents a neuron and an edge indicates a non-zero value of partial coherence estimated using the Lasso estimator. Each plot is annotated with the average number of spikes that are observed under each experimental condition. We highlight that more spikes are observed in the laser off condition compared to the laser on condition, which is a consequence of the differing lengths of the observation periods. Moreover, for the laser on condition, the number of spikes observed increases with the increasing level of laser stimuli.

In general, more edges are detected when the laser stimuli is applied ($12$  under $10 mW/mm^2$ and $24$ under $50 mW/mm^2$ versus $9$ under both $10 mW/mm^2$ and $50 mW/mm^2$ for the laser off setting). For the $0 mW/mm^2$ setting, $15$ edges are detected in the `laser off' condition versus $21$ for the `laser on' condition. Whilst not exactly the same, it is easy to see that the estimated networks are visually comparable under both conditions, as we should expect.
Notably, there are far fewer spikes observed in the laser on condition compared to laser off condition. However, the structural similarities in the estimated networks highlight our methodologies' robustness to the number of points, indicating that our asymptotic assumption is relevant. 
{Overall, the results presented here align with the findings in \cite{bolding2018recurrent} where it is highlighted by neuroscientists that the OB response is sensitive to the application of external stimuli.}
}

\section{Discussion}
\normalsize{
We have proposed novel statistical methodology for the estimation of high-dimensional inverse spectral density matrices in the point process setting. 
% We have demonstrated the usefulness of our method, through an in depth simulation study and real world data application, and have also discussed theoretical properties, such as the consistency of the Lasso and Ridge estimators. 
The main advantage of our proposal is the ability to estimate the inverse spectrum when the dimension of the multivariate process is large, i.e. when $p>m$, 
unlike existing methods which break down in high dimensional settings. 
%An important contribution of this work is the combination of results on the tapered Fourier transform of multivariate point processes, with the estimation of high-dimensional inverse spectral density matrices.
%This is particularly important in the context of neural analysis, where advances in technologies such as Neuropixels probes have enabled the recording of hundreds of neurons simultaneously \citep{jun2017fully}.}
Importantly, our Lasso estimator yields sparse estimates of the inverse spectrum, easing interpretation of the resulting partial coherence networks. Moreover under the stated incoherence condition, this estimator can correctly identify the sparse structure of the inverse spectrum. In general, we would recommend the Lasso based estimator due to this interpretability, however, in situations where the sparsity assumption may not hold, we demonstrate that a Ridge based regulariser can also provide estimates of the spectral density in high-dimensional scenarios.

There are many possible and interesting avenues for future work. Whilst we have demonstrated consistent estimation of the inverse spectrum, it remains to provide a way in which to quantify uncertainty around the resulting partial coherence graphs. 
Existing work on statistical inference has been discussed for 
%$\ell_1$-penalised likelihood models in \cite{taylor2018post} and for h
high-dimensional inverse covariance estimation in \cite{jankova2015confidence}. Furthermore, extending the methodology to non-stationary settings is also of interest, where one could easily extend the localised estimation methods provided in \cite{Roueff2019},which may find utility both in neuroscience, but also in high-frequency econometric applications. Finally, one may consider further extensions beyond the current assumption of non-overlapping tapers, for example, it would be relatively straightforward to apply our regularisation methods (and analysis) in the more general multi-tapering framework of \cite{Walden2000,cohen2019wavelet}.
%and could be in part, based on the work proposed here.

% It might also be interesting to consider adapting the methodology to include more real data characteristics. Such adaptations would of course need to be motivated by the set-up and nature of the neuroscience experiment, and would be tailored in such a way as to learn more about functional connectivity in the brain network, at the individual neuron level. For example, there might be cases in which it is not feasible to assume independence between experimental trials. Contrary to the application discussed here, it is common for neuroscience experiments to involve some kind of cognitive decision making task. In this scenario, it is unsuitable to consider repeated trials as independent replicates of the same series, due to the brain's ability to adapt and learn. New statistical methodologies, based in part on the work proposed here, are therefore necessary to analyse this type of data.

% Finally, whilst our methods are motivated by applications in neuroscience, the proposed methods and results should be of interest across many domains in statistics, and econometrics. For example, similar methodology can feasibly be applied to high-frequency trading data. If one considers extending our approach to locally-stationary settings \citep{Roueff2019}, the efficiency gains for our estimators would enable one to study more fine-grained temporal dynamics in the spectral density.
}

\section*{Acknowledgement}
This paper is based on work completed while Carla Pinkney was part of the EPSRC funded STOR-i centre for doctoral training (EP/S022252/1).

\section*{Supplementary Material}
\label{SM}
The Supplementary Material includes additional results and proofs. The code associated with this paper is available on request.

%%%%%%%%%%%%%%%%%%%%%%%%%%%%%%%%%%%%%%%%%%%%%%%%%%%%%%%%%%%%%%%%%

%%%%%%%%%%%%%%%%%%% references %%%%%%%%%%%%%%%%%%%%%%%%%%%%%%%%%%

\bibliographystyle{apalike}
\bibliography{references}

\begin{thebibliography}{}

\bibitem[Bartlett, 1963]{bartlett1963spectral}
Bartlett, M.~S. (1963).
\newblock The spectral analysis of point processes.
\newblock {\em Journal of the Royal Statistical Society: Series B (Methodological)}, 25(2):264--281.

\bibitem[Bickel and Levina, 2008]{bickel2008regularized}
Bickel, P.~J. and Levina, E. (2008).
\newblock Regularized estimation of large covariance matrices.
\newblock {\em The Annals of Statistics}, 36(1):199--227.

\bibitem[B{\"o}hm and von Sachs, 2009]{bohm2009shrinkage}
B{\"o}hm, H. and von Sachs, R. (2009).
\newblock Shrinkage estimation in the frequency domain of multivariate time series.
\newblock {\em Journal of Multivariate Analysis}, 100(5):913--935.

\bibitem[Bolding and Franks, 2018]{bolding2018recurrent}
Bolding, K.~A. and Franks, K.~M. (2018).
\newblock Recurrent cortical circuits implement concentration-invariant odor coding.
\newblock {\em Science}, 361(6407).

\bibitem[Boyd et~al., 2011]{boyd2011distributed}
Boyd, S., Parikh, N., Chu, E., Peleato, B., Eckstein, J., et~al. (2011).
\newblock Distributed optimization and statistical learning via the alternating direction method of multipliers.
\newblock {\em Foundations and Trends{\textregistered} in Machine learning}, 3(1):1--122.

\bibitem[Brillinger, 1972]{brillinger1972}
Brillinger, D. (1972).
\newblock The spectral analysis of stationary interval functions.
\newblock {\em Proceedings of the Sixth Berkeley Symposium on Mathematical Statistics and Probability, Volume 1: Theory of Statistics}, pages 483--513.

\bibitem[B{\"u}hlmann and Van De~Geer, 2011]{buhlmann2011statistics}
B{\"u}hlmann, P. and Van De~Geer, S. (2011).
\newblock {\em Statistics for high-dimensional data: methods, theory and applications}.
\newblock Springer Science \& Business Media.

\bibitem[Buzsaki and Draguhn, 2004]{buzsaki2004neuronal}
Buzsaki, G. and Draguhn, A. (2004).
\newblock Neuronal oscillations in cortical networks.
\newblock {\em {S}cience}, 304(5679):1926--1929.

\bibitem[Chen et~al., 2017]{chen2017nearly}
Chen, S., Witten, D., and Shojaie, A. (2017).
\newblock Nearly assumptionless screening for the mutually-exciting multivariate hawkes process.
\newblock {\em Electronic journal of statistics}, 11(1):1207.

\bibitem[Cohen and Gibberd, 2021]{cohen2019wavelet}
Cohen, E. A.~K. and Gibberd, A.~J. (2021).
\newblock Wavelet spectra for multivariate point processes.
\newblock {\em Biometrika}.

\bibitem[Dahlhaus, 2000]{dahlhaus2000graphical}
Dahlhaus, R. (2000).
\newblock Graphical interaction models for multivariate time series.
\newblock {\em Metrika}, 51(2):157--172.

\bibitem[Dallakyan et~al., 2022]{dallakyan2022time}
Dallakyan, A., Kim, R., and Pourahmadi, M. (2022).
\newblock Time series graphical lasso and sparse var estimation.
\newblock {\em Computational Statistics \& Data Analysis}, 176:107557.

\bibitem[Deb et~al., 2024]{deb2024regularized}
Deb, N., Kuceyeski, A., and Basu, S. (2024).
\newblock Regularized estimation of sparse spectral precision matrices.
\newblock {\em arXiv preprint arXiv:2401.11128}.

\bibitem[Eichler et~al., 2003]{eichler2003partial}
Eichler, M., Dahlhaus, R., and Sandk{\"u}hler, J. (2003).
\newblock Partial correlation analysis for the identification of synaptic connections.
\newblock {\em Biological Cybernetics}, 89(4):289--302.

\bibitem[Fiecas et~al., 2019]{fiecas2019spectral}
Fiecas, M., Leng, C., Liu, W., and Yu, Y. (2019).
\newblock Spectral analysis of high-dimensional time series.
\newblock {\em Electronic Journal of Statistics}, 13(2):4079--4101.

\bibitem[Fiecas and Ombao, 2011]{fiecas2011generalized}
Fiecas, M. and Ombao, H. (2011).
\newblock The generalized shrinkage estimator for the analysis of functional connectivity of brain signals.
\newblock {\em Annals of Applied Statistics}, 5(2A):1102--1125.

\bibitem[Fiecas and von Sachs, 2014]{fiecas2014data}
Fiecas, M. and von Sachs, R. (2014).
\newblock Data-driven shrinkage of the spectral density matrix of a high-dimensional time series.
\newblock {\em Electronic Journal of Statistics}, 8(2):2975--3003.

\bibitem[Foygel and Drton, 2010]{foygel2010extended}
Foygel, R. and Drton, M. (2010).
\newblock Extended {B}ayesian information criteria for {G}aussian graphical models.
\newblock {\em Advances in neural information processing systems}, 23.

\bibitem[Friedman, 1989]{friedman1989regularized}
Friedman, J.~H. (1989).
\newblock Regularized discriminant analysis.
\newblock {\em Journal of the American Statistical Association}, 84(405):165--175.

\bibitem[Goodman, 1963]{goodman1963statistical}
Goodman, N.~R. (1963).
\newblock Statistical analysis based on a certain multivariate complex {G}aussian distribution (an introduction).
\newblock {\em The Annals of Mathematical Statistics}, 34(1):152--177.

\bibitem[Hawkes, 1971a]{hawkes1971point}
Hawkes, A.~G. (1971a).
\newblock Point spectra of some mutually exciting point processes.
\newblock {\em Journal of the Royal Statistical Society: Series B (Methodological)}, 33(3):438--443.

\bibitem[Hawkes, 1971b]{hawkes1971spectra}
Hawkes, A.~G. (1971b).
\newblock Spectra of some self-exciting and mutually exciting point processes.
\newblock {\em Biometrika}, 58(1):83--90.

\bibitem[Huang and Zhang, 2010]{huang2010benefit}
Huang, J. and Zhang, T. (2010).
\newblock The benefit of group sparsity.
\newblock {\em The Annals of Statistics}, 38(4):1978--2004.

\bibitem[Jankov{\'a} and van~de Geer, 2015]{jankova2015confidence}
Jankov{\'a}, J. and van~de Geer, S. (2015).
\newblock Confidence intervals for high-dimensional inverse covariance estimation.
\newblock {\em Electronic Journal of Statistics}, 9(1):1205--1229.

\bibitem[Jung et~al., 2015]{jung2015graphical}
Jung, A., Hannak, G., and Goertz, N. (2015).
\newblock Graphical lasso based model selection for time series.
\newblock {\em IEEE Signal Processing Letters}, 22(10):1781--1785.

\bibitem[Kocsis et~al., 1999]{kocsis1999interdependence}
Kocsis, B., Bragin, A., and Buzs{\'a}ki, G. (1999).
\newblock Interdependence of multiple theta generators in the hippocampus: a partial coherence analysis.
\newblock {\em Journal of Neuroscience}, 19(14):6200--6212.

\bibitem[Koutitonsky et~al., 2002]{koutitonsky2002descriptive}
Koutitonsky, V., Navarro, N., and Booth, D. (2002).
\newblock Descriptive physical oceanography of great-entry lagoon, gulf of st. lawrence.
\newblock {\em Estuarine, Coastal and Shelf Science}, 54(5):833--847.

\bibitem[Lambert et~al., 2018]{lambert2018reconstructing}
Lambert, R.~C., Tuleau-Malot, C., Bessaih, T., Rivoirard, V., Bouret, Y., Leresche, N., and Reynaud-Bouret, P. (2018).
\newblock Reconstructing the functional connectivity of multiple spike trains using hawkes models.
\newblock {\em Journal of Neuroscience Methods}, 297:9--21.

\bibitem[Meinshausen and B{\"u}hlmann, 2006]{meinshausen2006high}
Meinshausen, N. and B{\"u}hlmann, P. (2006).
\newblock High-dimensional graphs and variable selection with the lasso.
\newblock {\em Ann. Statist.}, 34(1):1436--1462.

\bibitem[Nadkarni et~al., 2016]{nadkarnisparse}
Nadkarni, R., Foti, N.~J., Lee, A.~K., and Fox, E.~B. (2016).
\newblock Sparse plus low-rank graphical models of time series to infer functional connectivity from {MEG} recordings.
\newblock {\em 2nd SIGKDD Workshop on Mining and Learning from Time Series}.

\bibitem[Negahban et~al., 2012]{negahban2009unified}
Negahban, S., Yu, B., Wainwright, M.~J., and Ravikumar, P. (2012).
\newblock A unified framework for high-dimensional analysis of ${M} $-estimators with decomposable regularizers.
\newblock {\em Statistical Sciecne}, 27(4):538--557.

\bibitem[Ogata, 1981]{ogata1981lewis}
Ogata, Y. (1981).
\newblock On {L}ewis' simulation method for point processes.
\newblock {\em IEEE Transactions on Information Theory}, 27(1):23--31.

\bibitem[Pernice et~al., 2011]{pernice2011structure}
Pernice, V., Staude, B., Cardanobile, S., and Rotter, S. (2011).
\newblock How structure determines correlations in neuronal networks.
\newblock {\em PLOS Computational Biology}, 7(5):e1002059.

\bibitem[Ravikumar et~al., 2011]{ravikumar2011high}
Ravikumar, P., Wainwright, M.~J., Raskutti, G., and Yu, B. (2011).
\newblock High-dimensional covariance estimation by minimizing $\ell_1$-penalized log-determinant divergence.
\newblock {\em Electronic Journal of Statistics}, 5:935--980.

\bibitem[Rothman et~al., 2008]{rothman2008sparse}
Rothman, A.~J., Bickel, P.~J., Levina, E., and Zhu, J. (2008).
\newblock Sparse permutation invariant covariance estimation.
\newblock {\em Electronic Journal of Statistics}, 2:494--515.

\bibitem[Roueff and Sachs, 2019]{Roueff2019}
Roueff, F. and Sachs, R.~V. (2019).
\newblock Time-frequency analysis of locally stationary hawkes processes.
\newblock {\em Bernoulli}, 25:1355--1385.

\bibitem[Song et~al., 2020]{song2020potential}
Song, X., Zhang, C., Zhang, J., Zou, X., Mo, Y., and Tian, Y. (2020).
\newblock Potential linkages of precipitation extremes in {B}eijing-{T}ianjin-{H}ebei region, {C}hina, with large-scale climate patterns using wavelet-based approaches.
\newblock {\em Theoretical and Applied Climatology}, 141:1251--1269.

\bibitem[Tank et~al., 2015]{tank2015bayesian}
Tank, A., Foti, N., and Fox, E. (2015).
\newblock Bayesian structure learning for stationary time series.
\newblock {\em arXiv preprint arXiv:1505.03131}.

\bibitem[Tugnait, 2022]{tugnait2021sparse}
Tugnait, J.~K. (2022).
\newblock On sparse high-dimensional graphical model learning for dependent time series.
\newblock {\em Signal Processing}, 197.

\bibitem[Wainwright, 2019]{wainwright2019high}
Wainwright, M.~J. (2019).
\newblock {\em High-{D}imensional {S}tatistics: {A} {N}on-{A}symptotic {V}iewpoint}.
\newblock Cambridge University Press.

\bibitem[Walden, 2000]{Walden2000}
Walden, A.~T. (2000).
\newblock A unified view of multitaper multivariate spectral estimation.
\newblock {\em Biometrika}, 87:767--788.

\bibitem[Wang and Shojaie, 2021]{wang2021joint}
Wang, X. and Shojaie, A. (2021).
\newblock Joint estimation and inference for multi-experiment networks of high-dimensional point processes.
\newblock {\em arXiv preprint arXiv:2109.11634}.

\bibitem[Warton, 2008]{warton2008penalized}
Warton, D.~I. (2008).
\newblock Penalized normal likelihood and ridge regularization of correlation and covariance matrices.
\newblock {\em Journal of the American Statistical Association}, 103(481):340--349.

\bibitem[Whittle, 1953]{whittle1953estimation}
Whittle, P. (1953).
\newblock Estimation and information in stationary time series.
\newblock {\em Arkiv f{\"o}r matematik}, 2(5):423--434.

\end{thebibliography}

\newpage
\renewcommand{\thefigure}{S\arabic{figure}}
\renewcommand{\thetable}{S\arabic{table}}
\renewcommand{\theequation}{S.\arabic{equation}}

\setcounter{page}{1}
\setcounter{figure}{0}
\setcounter{table}{0}
\setcounter{equation}{0}
\setcounter{section}{0}

\title{\huge{Supplementary Material for `Regularised Spectral Estimation for High Dimensional Point Processes'}}
\author{Carla Pinkney$^{1,^*}$, Carolina Eu\'{a}n$^1$, Alex Gibberd$^1$ \& Ali Shojaie$^2$ \vspace{0.3cm} \\
    \small{$^1$ STOR-i Centre for Doctoral Training, Department of Mathematics and Statistics, Lancaster} \\ \small{University, LA1 4YR, UK}\\ $^2$ \small{Department of Biostatistics, University of Washington, Seattle, WA 98105, US} \vspace{0.1cm} \\
\footnotesize{$^*$ Correspondence to: c.pinkney@lancaster.ac.uk} \vspace{0.2cm}
\\}
\maketitle
\setcounter{section}{0}
\section{Introduction}
This document outlines the supplementary material for ``Regularised Spectral Estimation for High Dimensional Point Processes'', subsequently referred to as the main paper. Firstly, we give proofs of all the results given in the main paper. Secondly we present additional results for the experimental procedure outlined in Section 2 of the main paper. In Section 4, we give a full description of the ADMM algorithm used for the regularised spectral estimator (RSE) with the Lasso penalty.  Then in Section 5, we give additional details of the multivariate Hawkes process used in the synthetic experiments. Finally, in section 6 we provide further analysis of the \cite{bolding2018recurrent} dataset and present partial coherence networks obtained for the the theta band.

%%%% DEVIATION BOUND FOR TRIAL AVERAGED PERIODOGRAM %%%%%
\section{Proofs}
\subsection{Proof of Proposition \ref{prop:dev_bound}}
\begin{proof}
\label{proof:prop_A2}
  
In the proof of Proposition \ref{prop:dev_bound} we require the following lemma based on the work of \cite{brillinger1972}. We state the result and prove it here for completeness. 
\begin{lemma}{\cite{brillinger1972}} 
\label{lemma:Brillinger}
Under Assumption \ref{ass:non-overlap}, we consider the set of $k=1, \dots, m$ tapers and let $T^{\prime}=T/m$. Now, let $f_{T^{\prime}}$ be an integer sequence with $\omega_{T^{\prime}} = 2\pi f_{T^{\prime}}/T^{\prime} \rightarrow \omega$ as $T^{\prime} \rightarrow \infty.$ The tapered and mean corrected Fourier coefficients are distributed as 
\begin{equation*}
    \mathbf{\bar{d}}_{k}(\omega_{T^{\prime}}) \sim \mathcal{N}_C (0, \mathbf{S}(\omega)) \ \textrm{as} \ T^{\prime}\rightarrow\infty.
\end{equation*}
\end{lemma}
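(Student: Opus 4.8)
The plan is to establish convergence to a complex Gaussian via the method of cumulants, the standard device for Fourier transforms of stationary point processes. Concretely, I would show that after mean-correction the first cumulant vanishes exactly, the second-order cumulants converge to $\mathbf{S}(\omega)$ with the correct proper (circularly symmetric) structure, and all joint cumulants of order three and higher vanish as $T'\to\infty$; the cumulant convergence theorem then delivers the stated limit $\bar{\mathbf{d}}_k(\omega_{T'})\to\mathcal{N}_C(0,\mathbf{S}(\omega))$.

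First I would compute $\EE{d_{k,q}(\omega)}$ directly. Since the rate is the constant $\Lambda_q$, the change of variables $z=t/T$ gives $\EE{d_{k,q}(\omega)}=\Lambda_q T H_k(T\omega)$ and likewise $\EE{d_{k,q}(0)}=\Lambda_q T H_k(0)$; substituting into \eqref{eq:mean_corrected_FT} shows $\EE{\bar{d}_{k,q}(\omega)}=0$ exactly, so the first cumulant is handled for free. Next I would compute the second-order cumulants by writing the covariance of the increments as $\cov{dN_q(t),dN_r(s)}=\{\delta_{qr}\Lambda_q\delta(t-s)+\mu_{qr}(t-s)\}\,dt\,ds$, where the Dirac atom encodes the $\text{diag}(\boldsymbol{\Lambda})$ term in the definition of $\mathbf{S}(\omega)$. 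Inserting the explicit taper $h_k^2=1/(2\pi T')$ on the window of length $T'$, the atom contributes $\delta_{qr}\Lambda_q/(2\pi)$, while the $\mu_{qr}$ term, after the substitution $\tau=t-s$, converges to $(2\pi)^{-1}\int e^{-i\omega\tau}\mu_{qr}(\tau)\,d\tau$; together these give exactly $S_{qr}(\omega)$. I would then verify that the complementary (non-conjugated) covariance $\EE{\bar{d}_{k,q}(\omega)\bar{d}_{k,r}(\omega)}$ vanishes, since both contributions carry an oscillatory factor $e^{-i\omega(t+s)}$ that disappears by a Riemann--Lebesgue argument for $\omega\neq 0$, thereby confirming properness. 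The residual mean-correction contribution to the covariance must also be shown negligible, using $H_k(T\omega)\to 0$.

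The crux — and the step I expect to be the main obstacle — is showing that the joint cumulants of order $j\geq 3$ vanish. Here I would express the $j$-th order cumulant of the tapered coefficients in terms of the $j$-th order cumulant density of the point process (together with the lower-dimensional diagonal atoms that arise when time arguments coincide), reduce it by stationarity to a single integration over the window, yielding a factor $T'$, and $j-1$ lag integrations over $\mathbb{R}^{j-1}$, yielding the bounded $j$-th cumulant spectrum. Accounting for the $j$ taper factors of order $(T')^{-1/2}$, the whole expression scales as $(T')^{1-j/2}$, which tends to $0$ precisely when $j\geq 3$ (and is $O(1)$ for $j=2$, consistent with the covariance computation); the diagonal-atom contributions carry fewer free integrations and vanish even faster.

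This cumulant argument requires integrability/summability of all cumulant densities of the process — a mixing-type regularity condition implicit in Brillinger's framework — together with the bounded-variation property \eqref{eq:bounded} of the tapers to control the boundary and mean-correction terms uniformly. Having verified that the first cumulant is zero, the second converges to $\mathbf{S}(\omega)$ with vanishing complementary covariance, and all higher cumulants vanish, I would invoke the cumulant convergence theorem to conclude convergence in distribution to $\mathcal{N}_C(0,\mathbf{S}(\omega))$, completing the proof.
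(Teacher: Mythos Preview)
Your cumulant route is correct and would yield the result, but it is a different and considerably longer argument than the paper's. The paper does not redo the cumulant analysis at all. Instead, it computes $H_k(T\omega)$ explicitly for the rectangular tapers of Assumption~\ref{ass:non-overlap}, obtaining a factor $\operatorname{sinc}(\omega T'/2)$, and observes that at the Fourier frequencies $\omega_{T'}=2\pi f_{T'}/T'$ with integer $f_{T'}\neq 0$ this sinc vanishes identically. Hence the correction term $\mathbf{b}(\omega_{T'})=\mathbf{d}_k(0)\,H_k(T\omega_{T'})/H_k(0)$ has zero mean and zero variance at these frequencies, so $\bar{\mathbf{d}}_k(\omega_{T'})$ coincides (in distribution) with the uncorrected $\mathbf{d}_k(\omega_{T'})$, to which Theorem~4.2 of \cite{brillinger1972} applies directly as a black box.

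What this buys: the paper's argument is a two-line reduction exploiting the exact grid structure of the sampled frequencies and the rectangular taper, avoiding any cumulant bookkeeping. Your approach, by contrast, reproves Brillinger's CLT from scratch; it is self-contained, makes the mixing hypothesis (summable cumulant densities) explicit, and would extend to off-grid frequencies where the correction term is only $o(1)$ rather than exactly zero --- but at the cost of substantially more work than the lemma, as stated here, requires.
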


\begin{proof}

Consider first the form of the mean corrected tapered Fourier transform defined in \cite{brillinger1972} for a particular frequency $\omega$ as
\begin{equation*}
    \bar{\mathbf{d}}_{k}(\omega) = \mathbf{d}_{k}(\omega) - \underbrace{\frac{\mathbf{d}_{k}(0)H_k(T\omega)}{H_k(0)}}_{\mathbf{b}(\omega)},
\end{equation*}
where we refer to $\mathbf{b}(\omega)$ as the so called ``mean correction" term. 

We proceed by considering the mean and variance of $\mathbf{b}(\omega)$ before an application of Theorem 4.2 \cite{brillinger1972} to complete the proof. However, we first state the following lemma.

\begin{lemma} The scaled Fourier transform of the taper $h_k(t/T)=(2\pi T^{\prime})^{-1/2}$ is 
\begin{equation*}
    H_k(T\omega ) = \frac{(2\pi T^{\prime})^{-1/2}}{m} \textrm{sinc}\left\{\frac{\omega T^{\prime}}{2}\right\} \exp\{-i\omega T^{\prime}(k-1/2)\}
\end{equation*}
and $H_k(0) = \frac{(2\pi T^{\prime})^{-1/2}}{m}.$
\label{lem:FT_taper}
\end{lemma}
\begin{proof} We begin by defining the taper
\begin{equation*}
    h_k(t/T) = \begin{cases}
        (2\pi T^{\prime})^{-1/2} \ & (k-1)T^{\prime}<t<kT^{\prime} \\
        0 \ &\textrm{otherwise},
    \end{cases}
\end{equation*}
and its Fourier transform as 
\begin{align*} &\int h_k(t/T)\exp\{-i\omega t\} \ dt \\ & \int h_k(t')\exp\{-i\omega t' T\}) \ dt' T \\ &=H_k(T \omega ) T. \end{align*}
%\vspace{-1em}

We might need to consider phase shift for $H_1(\omega)$ compared to $H_2(\omega)$.

Explicitly, we have that
\begin{align*}
    H_k(T\omega)T&=\int_{(k-1)T^{\prime}}^{kT^{\prime}}  (2\pi T^{\prime})^{-1/2} e^{-i \omega t} d t \\
   &=  (2\pi T^{\prime})^{-1/2} \left[\frac{\exp\{-i\omega t\}}{-i\omega}\right]_{(k-1)T^{\prime}}^{kT^{\prime}} \\
   &= (2\pi T^{\prime})^{-1/2} \exp\{-i\omega k T^{\prime}\}\left[\frac{\exp\{i\omega T^{\prime}\}-1}{iw}\right] \\
   &= (2\pi T^{\prime})^{-1/2} \exp\{-i\omega k T^{\prime}\} \frac{2\exp\{i\omega T^{\prime}/2\}}{\omega} \left(\frac{\exp\{i\omega T^{\prime}/2\} - \exp\{-i\omega T^{\prime}/2\}}{2i}\right) \\
   &= (2\pi T^{\prime})^{-1/2} T^{\prime} \textrm{sinc}\left\{\frac{\omega T^{\prime}}{2}\right\} \exp\{-i\omega T^{\prime}(k-1/2)\}
\end{align*}
using the identity that $\sin (x)=\frac{e^{i x}-e^{-i x}}{2 i}.$ 

It is therefore immediate that $$H_k(T\omega ) =  \frac{(2\pi T^{\prime})^{-1/2}}{m} \textrm{sinc}\left\{\frac{\omega T^{\prime}}{2}\right\} \exp\{-i\omega T^{\prime}(k-1/2)\}.$$
Further, we have that $H_k(0) = \frac{1}{T}\int_{(k-1)T^{\prime}}^{kT^{\prime}}  h(t/T) \ dt = \frac{(2\pi T^{\prime})^{-1/2}}{m}.$
\end{proof}

Using Lemma \ref{lem:FT_taper} we compute the mean of $\mathbf{b}(\omega)$ as
\begin{align*}
    \mathbb{E}\{\mathbf{b}(\omega)\} &= \frac{H_k(T\omega)}{H_k(0)}\mathbb{E}\left\{ \mathbf{d}_k(0)\right\}\\
    &= \textrm{sinc}\left\{\frac{\omega T^{\prime}}{2}\right\}\exp\{-i\omega T^{\prime}(k-1/2)\} \mathbb\{\mathbf{d}_k(0)\}.
\end{align*}
At the sampled frequencies  $\omega_{T^{\prime}} = 2\pi f_{T^{\prime}}/T^{\prime}$ we have that $\textrm{sinc}\left\{ \frac{\omega T^{\prime}}{2}\right\} = \frac{\sin\{\pi f_{T^{\prime}}\}}{\pi f_{T^{\prime}}} = 0$ and thus $\mathbb{E}\{\mathbf{b}(\omega)\}=0.$

The variance of $\mathbf{b}(\omega)$ is 
\begin{align*}
    \mathbb{V}\{\mathbf{b}(\omega)\} &= \left\{\frac{H_k(T\omega)}{H_k(0)}\right\}^2 \mathbb{V}\{\mathbf{d}_{k}(\omega)\} \\
    &= \left\{\textrm{sinc}\left\{\frac{\omega T^{\prime}}{2}\right\}\exp\{-i\omega T^{\prime}(k-1/2)\}\right\}^2 \mathbb{V}\{\mathbf{d}_k(0)\}. 
\end{align*}
As above, we have that $\textrm{sinc}\left\{\frac{\omega T^{\prime}}{2}\right\}=0$ at the sample frequencies $\omega_{T^{\prime}}$. Therefore, $\mathbb{V}\{\mathbf{b}(\omega)\}=0.$

Thus at the sampled frequencies, Theorem 4.2 \cite{brillinger1972} also applies to the mean corrected tapered Fourier transform. Asymptotically, we have that
\begin{equation*}
    \bar{\mathbf{d}}_k(\omega)\sim \mathcal{N}_C({0}, \mathbf{S}(\omega)),
\end{equation*}
and $\bar{\mathbf{d}}_k(0)={0}$.
\end{proof}

We begin the proof by recalling the form of the trial averaged periodogram
\begin{equation*}
    \hat{\mathbf{S}}(\omega) = \frac{1}{m}\sum_{k=1}^m\bar{\mathbf{d}}_k(\omega) \bar{\mathbf{d}}_k^H(\omega).
\end{equation*}
We know from Lemma \ref{lemma:Brillinger} that $\bar{\mathbf{d}}_k(\omega)\sim \mathcal{N}_C(\mathbf{0}, \mathbf{S}(\omega))$ and therefore
\begin{equation*}
    \left(\begin{array}{c}
\operatorname{Re}(\mathbf{d}_{k}(\omega) \\
\operatorname{Im}(\mathbf{d}_k(\omega)  
\end{array}\right) \sim \mathcal{N}_{2 p}\left[\left(\begin{array}{l}
0 \\
0
\end{array}\right), \frac{1}{2}\left(\begin{array}{cc}
\operatorname{Re}(\mathbf{S}(\omega)) & -\operatorname{Im}(\mathbf{S}(\omega)) \\
\operatorname{Im}(\mathbf{S}(\omega)) & \operatorname{Re}(\mathbf{S}(\omega))
\end{array}\right)\right].
\end{equation*}
As such, we can apply Lemma 1 from \cite{ravikumar2011high} to the real and imaginary parts of the spectrum separately in order to bound the probability 
\begin{equation*}
    \mathbb{P}\{|\hat{S}_{qr}(\omega)-S_{qr}(\omega)|\geq\delta\}.
\end{equation*}
Specifically, we note that $\mathbb{P}\{|\hat{S}_{qr}(\omega)-S_{qr}(\omega)|\geq\delta\}$ is at most 
\begin{equation}
    \mathbb{P}\left(|\operatorname{Re}(\hat{S}_{qr}(\omega)-S_{qr}(\omega))|\geq\delta/2\right) +  \mathbb{P}\left(|\operatorname{Im}(\hat{S}_{qr}(\omega)-S_{qr}(\omega))|\geq\delta/2\right).
\end{equation}

Applying Lemma 1 \cite{ravikumar2011high} to each part yields
\begin{align*}
     \mathbb{P}\{|\hat{S}_{qr}(\omega)-S_{qr}(\omega)|\geq\delta\} &\leq 4\exp\left\{\frac{-m (\delta/2)^2}{128(1+4\sigma^2)^2 \max_{q}(\operatorname{Re}(S_{qq}(\omega))^2)}\right\} \\ &+4\exp\left\{\frac{-m (\delta/2)^2}{128(1+4\sigma^2)^2 \max_{q}(\operatorname{Re}(S_{qq}(\omega))^2)}\right\} \\
        &= 8\exp\left\{\frac{-m \delta^2}{2^9 (1+4\sigma^2)^2 \max_{q}\{S_{qq}(\omega)\}^2}\right\}
\end{align*}

for all $\delta \in (0, 16(\max_{q}\{S_{qq}(\omega)\})(1+4\sigma^2) ).$

Noting that any multivariate Gaussian random vector is sub-Gaussian, and in particular the rescaled variates are sub-Gaussian with parameter $\sigma=1$, we arrive at the final result

\begin{align*}
     \mathbb{P}\{|\hat{S}_{qr}(\omega)-S_{qr}(\omega)|\geq\delta\}   
     &\leq 8\exp\left\{\frac{-m \delta^2}{2^9 5^2 \max_{q}\{S_{qq}(\omega)\}^2}\right\}
\end{align*}

for all $\delta \in (0, 80(\max_{q}\{S_{qq}(\omega)\})).$

\end{proof}

%%%%%%%% PROOF FOR RIDGE ESTIMATOR %%%%%%%%

\subsection{Proof of Proposition \ref{prop:ridge}}% \eqref{bound_glasso}}
\label{proof:prop_A4}
\begin{proof}
In the proof, we will need the following lemma which is a direct consequence of the deviation bound given in Proposition \ref{prop:dev_bound}. 
\begin{lemma}
\label{lemma:union}
    For any $q,r=1,\dots, p$ and $\tau>2$ there exists a frequency specific constant $c$ such that %mention m condition here
    \begin{equation}
        \Pr\left(\max_{qr}|\hat{{S}}_{qr}(\omega)-S_{qr}(\omega)|\geq c\sqrt{\frac{\log p ^\tau}{m}}\right) \leq \frac{8}{p^{\tau-2}} \rightarrow 0 \ \textrm{as} \ p \rightarrow \infty.
    \end{equation}
\end{lemma}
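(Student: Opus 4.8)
The plan is to obtain the uniform (maximal) control from the entrywise deviation bound of Proposition \ref{prop:dev_bound} via a single union bound over the entries of the spectral matrix. First I would fix the deviation level
\begin{equation*}
\delta = c\sqrt{\frac{\log p^{\tau}}{m}} = c\sqrt{\frac{\tau \log p}{m}}, \qquad c = 80\sqrt{2}\,\max_q\{S_{qq}(\omega)\},
\end{equation*}
and apply the elementary inequality
\begin{equation*}
\mathbb{P}\Bigl(\max_{q,r}|\hat{S}_{qr}(\omega) - S_{qr}(\omega)| \ge \delta\Bigr) \le \sum_{q,r=1}^{p} \mathbb{P}\bigl(|\hat{S}_{qr}(\omega) - S_{qr}(\omega)| \ge \delta\bigr),
\end{equation*}
so that the problem reduces to controlling each of the (at most) $p^2$ summands by the tail bound \eqref{eq:dev_bound}.

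The heart of the argument is the choice of $c$, which is engineered to cancel the constant in the exponent of \eqref{eq:dev_bound}. Substituting $\delta$ and using $c^2 = 80^2\cdot 2\cdot\max_q\{S_{qq}(\omega)\}^2 = 2^9 5^2 \max_q\{S_{qq}(\omega)\}^2$, the exponent becomes
\begin{equation*}
\frac{m\delta^2}{2^9 5^2 \max_q\{S_{qq}(\omega)\}^2} = \frac{c^2\,\tau\log p}{2^9 5^2 \max_q\{S_{qq}(\omega)\}^2} = \tau \log p .
\end{equation*}
Hence each summand is at most $8\exp(-\tau\log p) = 8p^{-\tau}$, and summing over the $p^2$ index pairs yields $8p^2\cdot p^{-\tau} = 8p^{2-\tau} = 8/p^{\tau-2}$, exactly the stated bound. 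Since $\tau>2$ we have $p^{2-\tau}\to 0$, so the right-hand side vanishes as $p\to\infty$, completing the claim.

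The only delicate point — and the nearest thing to an obstacle — is verifying that the chosen $\delta$ lies in the admissible range $(0,\,80\max_q\{S_{qq}(\omega)\})$ required by Proposition \ref{prop:dev_bound}; the tail bound is only valid there. Because $c = \sqrt{2}\cdot 80\max_q\{S_{qq}(\omega)\}$, the requirement $\delta < 80\max_q\{S_{qq}(\omega)\}$ is equivalent to $\sqrt{2\tau\log p/m} < 1$, i.e. $m > 2\tau\log p$. This is precisely the kind of lower bound on the number of tapers guaranteed by the ``sufficient tapers'' hypotheses invoked in Propositions \ref{prop:ridge}, \ref{prop:glasso} and \ref{prop:l_inf}, so under those assumptions the entrywise bound applies at the level $\delta$ and the union-bound calculation goes through verbatim.
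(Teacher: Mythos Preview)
Your proof is correct and follows essentially the same route as the paper: fix $c=80\sqrt{2}\max_q\{S_{qq}(\omega)\}$ so that the exponent in \eqref{eq:dev_bound} collapses to $\tau\log p$, then apply a union bound over the $p^2$ entries to pick up the factor $p^2$ and obtain $8/p^{\tau-2}$. Your explicit check of the admissible range $\delta<80\max_q\{S_{qq}(\omega)\}$, leading to the requirement $m>2\tau\log p$, is in fact a point the paper's proof of the lemma leaves implicit (it appears only later in the ``sufficient tapers'' conditions of Propositions \ref{prop:ridge}--\ref{prop:l_inf}), so your treatment is slightly more careful here.
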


%%%%%%%%%%%% LEMMA  %%%%%%%%%%

\label{proof:prop_A5}
\begin{proof}
   Recall the tail bound given in \eqref{eq:dev_bound}. %change to be the equation in paper??
   Applying a union bound argument yields
  
    \begin{align*}
       \Pr\left(\max_{qr}|\hat{S}_{qr}(\omega)-S_{qr}(\omega)|\geq c\delta\right) \leq 8 p^2 \exp\left\{\frac{-mc^2\delta^2}{512(1+4\sigma^2)^2\max_q\{S_{qq}(\omega)\}^2}\right\}, 
   \end{align*}
   where we define
   \begin{equation*}
   \label{eq:c}
      c = \sqrt{512} (1+4\sigma^2)\max_q\{S_{qq}(\omega)\}.
   \end{equation*}

Hence, 
\begin{align*}
    \Pr\left(\max_{qr}|\hat{S}_{qr}(\omega) - S_{qr}(\omega)| \geq c \delta \right) &\leq 8p^2\exp\{-\delta^2 m\} \\
    &= 8\exp\{\log p^2-\delta^2m\}
\end{align*}
Letting $\delta = \sqrt{\frac{\log  p^\tau}{m}}$ yields
\begin{align}
    \Pr\left(\max_{qr}|\hat{S}_{qr}(\omega) - S_{qr}(\omega)|\geq c \delta \right) &\leq 8 \exp\{\log p^2-\tau \log p \} \nonumber\\
    &= 8\exp\{\log p^2-\log  p^{\tau-2}-\log p^2 \} \nonumber \\
    &= 8\exp\{-\log p^{\tau-2}\} \nonumber\\
    &= \frac{8}{p^{\tau-2}} \rightarrow 0 \ \textrm{as} \ p\rightarrow\infty \nonumber.
\end{align}

\end{proof}

We begin the proof of Proposition \ref{prop:glasso} by defining the event 
\begin{equation}
\label{eq:event}
    \mathcal{E}:=\left\{\|\hat{S}_{qr}(\omega) - S_{qr}(\omega)\|_{\infty} {\leq} c \sqrt{\frac{\log p^{\tau}}{m}}\right\}.
\end{equation}
We have shown in Lemma \ref{lemma:union} that $\Pr(\mathcal{E}^c)\leq 8/p^{\tau-2}$, and therefore $\Pr(\mathcal{E})\geq 1-8/p^{\tau-2}.$ We condition on the event $\mathcal{E}$ in the following analysis. 

Going forward, we acknowledge a change in notation for the true inverse spectral density matrix, which we will now denote by  $\boldsymbol{\Theta}_0$ rather than $\boldsymbol{\Theta}^*$. We will also drop the dependence on $\omega$ for ease of notation. 

Consider the function
    \begin{align}
    \label{eq:F_function}
        F(\boldsymbol{\hat{\Theta}}):&=\mathcal{L}(\boldsymbol{\hat{\Theta}})-\mathcal{L}(\boldsymbol{\Theta}_0)+\lambda(P_{2}\left\{\hat{\boldsymbol{\Theta}}\right\} - P_{2}\left\{\boldsymbol{\Theta}_0\right\}) \nonumber\\
        &=\underbrace{\Tr\{(\boldsymbol{\hat{\Theta}}-\boldsymbol{\Theta}_0)(\hat{\mathbf{S}}-\mathbf{S})\}}_{(i)}
        - \underbrace{(\log\operatorname{det}(\boldsymbol{\hat{\Theta}})-\log\operatorname{det}(\boldsymbol{\Theta}_0))}_{(ii)} \nonumber\\
        &+ \underbrace{\Tr\{(\boldsymbol{\hat{\Theta}}-\boldsymbol{\Theta}_0)\mathbf{S}\}}_{(iii)} 
        + \underbrace{\lambda(\Tr\left\{\hat{\boldsymbol{\Theta}}\right\} - \Tr\left\{\boldsymbol{\Theta}_0\right\})}_{(iv)}
    \end{align}
  where $\boldsymbol{\Theta}_0=\mathbf{S}(\omega)^{-1}.$

Let $\boldsymbol{\hat{\Theta}}=\boldsymbol{\Theta}_0+\Delta$. Our aim is to bound the size of $\Delta$ such that $F(\boldsymbol{\hat{\Theta}})\leq F(\boldsymbol{\Theta}_0)=0.$ 
Since the ridge estimator $\boldsymbol{\hat{\Theta}}$ is the solution to a minimisation problem, we know that it will always attain a lower loss than the true value $\boldsymbol{\Theta}_0.$

The main idea of the proof is based on an adaptation of the argument presented in \cite{rothman2008sparse} where one can consider the shell of errors
\begin{equation}
\label{eq:shell}
    \mathcal{B}_m(G)=\{\Delta \ | \ \Delta=\Delta^H, \ \|\Delta\|_F=Gr_m\},
\end{equation}
i.e. all possible symmetric errors of a given radius e.g. $Gr_m, G>0.$ We can then adjust the radius $r_m$ such that $$F(\boldsymbol{\Theta}_0+\Delta)>0,$$ and when this occurs we know that our shell is too big, and therefore the error must lie within $\mathcal{B}_m(G)$ and thus $\|\Delta\|_F\leq Gr_m$.

We now consider each term in \eqref{eq:F_function} separately letting $\Delta=\boldsymbol{\hat{\Theta}}-\boldsymbol{\Theta}_0$. We start by simplifying term $(ii)$ in \eqref{eq:F_function}. 

\begin{lemma}
\label{lemma:Taylor}
    Taking the first order Taylor expansion of $f(t)=\log\operatorname{det}(\boldsymbol{\Theta}_0+t\Delta)$ with the integral form of the remainder yields 
    \begin{align*}
        \log\operatorname{det}(\boldsymbol{\Theta}_0+\Delta)&-\log\operatorname{det}(\boldsymbol{\Theta}_0) = \\  &\Tr\{\mathbf{S}\Delta\} - \vec{\Delta}^H\left[ \int_0^1(1-v)((\boldsymbol{\Theta}_0+t\Delta)^{-1})^{{H}} \otimes (\boldsymbol{\Theta}_0+t\Delta)^{-1})dv \right]\vec{\Delta}
    \end{align*}
    where $\otimes$ denotes the Kronecker product.
\end{lemma}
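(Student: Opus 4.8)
The plan is to read the left-hand side as the exact second-order Taylor expansion, with integral remainder, of the scalar function $f(t) := \log\det(\boldsymbol{\Theta}_0 + t\Delta)$ along the segment $t \in [0,1]$. First I would check that $f$ is well defined and smooth on a neighbourhood of $[0,1]$: both endpoints $\boldsymbol{\Theta}_0$ and $\hat{\boldsymbol{\Theta}} = \boldsymbol{\Theta}_0 + \Delta$ lie in the constraint set $\mathcal{C}$ of Hermitian positive definite matrices, and since this cone is convex the whole interpolant $\boldsymbol{\Theta}_0 + t\Delta$ stays positive definite, so its determinant is positive and its inverse exists for every $t \in [0,1]$. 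Taylor's theorem with the integral form of the remainder then gives $f(1) = f(0) + f'(0) + \int_0^1 (1-v) f''(v)\,dv$, which is precisely the target identity once the three pieces are identified.

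Next I would compute the derivatives by elementary matrix calculus. Jacobi's formula yields $f'(t) = \Tr\{(\boldsymbol{\Theta}_0 + t\Delta)^{-1}\Delta\}$, and evaluating at $t = 0$ with $\boldsymbol{\Theta}_0^{-1} = \mathbf{S}$ gives the first-order term $f'(0) = \Tr\{\mathbf{S}\Delta\}$, matching the statement. Differentiating once more, using $\frac{d}{dt}(\boldsymbol{\Theta}_0 + t\Delta)^{-1} = -(\boldsymbol{\Theta}_0 + t\Delta)^{-1}\Delta(\boldsymbol{\Theta}_0 + t\Delta)^{-1}$, gives $f''(v) = -\Tr\{W_v \Delta W_v \Delta\}$, where $W_v := (\boldsymbol{\Theta}_0 + v\Delta)^{-1}$; I would note that the dummy variable in the displayed remainder should read $v$ rather than $t$, consistent with the $dv$.

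The remaining step, and the only delicate one, is to rewrite the remainder trace $\Tr\{W_v \Delta W_v \Delta\}$ as the Hermitian quadratic form $\vec{\Delta}^H (W_v^H \otimes W_v)\vec{\Delta}$. For this I would invoke the vectorisation identity $\vect(A X B) = (\tran{B}\otimes A)\vect(X)$ together with $\Tr\{C^H D\} = \vect(C)^H \vect(D)$, exploiting that $\Delta = \Delta^H$ and that $W_v$ is Hermitian (being the inverse of a Hermitian matrix), so that conjugate transposes and transposes may be interchanged where required to land on the stated Kronecker factor. Substituting $f''(v)$ into the integral remainder and pulling the factor $-\vec{\Delta}^H(\cdots)\vec{\Delta}$ outside the integral over $v$ then reproduces the claimed expression.

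I expect the main obstacle to be purely bookkeeping in the complex Hermitian setting: matching conjugate-transpose conventions so that the Kronecker term emerges as $W_v^H \otimes W_v$ rather than its transpose or conjugate, and confirming that the resulting quadratic form is real and positive semidefinite. This last point is what the subsequent Ridge and Lasso arguments rely on, since $W_v^H \otimes W_v \succeq 0$ as a Kronecker product of positive definite matrices; everything else is a direct application of Taylor's theorem and routine matrix differentiation.
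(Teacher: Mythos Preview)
Your proposal is correct and follows essentially the same route as the paper: compute $f'(t)$ via Jacobi's formula, obtain $f''(t)$ by differentiating the inverse and passing to the vectorised Kronecker form, then invoke Taylor's theorem with integral remainder at $t=1$, $a=0$. Your additional care in verifying positive-definiteness along the segment and flagging the $t$-versus-$v$ typo in the integrand goes slightly beyond what the paper records, but the underlying argument is identical.
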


\begin{proof}
    The proof of Lemma \ref{lemma:Taylor} begins by computing the derivatives
    \begin{align*}
    %\allowdisplaybreaks
        \frac{\partial f(t)}{\partial t} &= \frac{1}{\operatorname{det}(\Theta_0+t\Delta)}\left\{\frac{\partial}{\partial t} \operatorname{det}(\Theta_0+t\Delta)\right\} \\
        &= \frac{1}{\operatorname{det}(\Theta_0+t\Delta)} \operatorname{det}(\Theta_0+t\Delta) \Tr\left\{(\Theta_0+t\Delta)^{-1} \frac{\partial}{\partial t} (\Theta_0+t\Delta)\right\} \ \textrm{(Jacobi's formula)}\\
        &= \Tr\{(\Theta_0+t\Delta)^{-1} \Delta\}.
    \end{align*}

Computing the second derivative 
\begin{align*}
    \frac{\partial ^2 f(t)}{\partial t^2} = \frac{\partial}{\partial t} \left(\Tr\{(\Theta_0+t\Delta)^{-1} \Delta\}\right) =   \frac{\partial}{\partial t} \underbrace{\vect (\Delta^H)}_{f_1(t)}\underbrace{\vect((\boldsymbol{\Theta}_0+t\Delta)^{-1})}_{f_2(t)}
\end{align*}
where by the product rule 
\begin{align*}
     \frac{\partial ^2 f(t)}{\partial t^2} = f_2(t)\frac{\partial f_1(t)}{\partial t} + f_1(t) \frac{\partial f_2(t)}{\partial t}.
\end{align*}
Since $\frac{\partial f_1(t)}{\partial t} = 0$ we proceed by considering 
\begin{align*}
    \frac{\partial f_2(t)}{\partial t} &= \frac{\partial}{\partial t} \vect((\boldsymbol{\Theta}_0+t\Delta)^{-1}) \\
    &= \vect\left(\frac{\partial}{\partial t} (\boldsymbol{\Theta}_0+t\Delta)^{-1}\right)  \\
    &= \vect(-(\boldsymbol{\Theta}_0+t\Delta)^{-1} \Delta (\boldsymbol{\Theta}_0+t\Delta)^{-1}) \\ %\textrm{\textcolor{red}{(pg 51 Matrix Computations)}} \\
    &= -((\boldsymbol{\Theta}_0+t\Delta)^{-1})^H\otimes (\boldsymbol{\Theta}_0+t\Delta)^{-1} \vect(\Delta) \ %\textrm{\textcolor{red}{(pg 60 matrix cookbook)}}
\end{align*}

Hence,
\begin{align*}
              \frac{\partial ^2 f(t)}{\partial t^2} &= -\vec{\Delta}^H[((\boldsymbol{\Theta}_0+t\Delta)^{-1})^{H} \otimes(\boldsymbol{\Theta}_0+t\Delta)^{-1}]\vec{\Delta}.
\end{align*}

The first order Taylor expansion with integral remainder term is defined as
\begin{align*}
   % f(t) = f(a) +f^{\prime} (a)(t-a) + \int^t_a f^{ \prime \prime} (a) (t-v) dv \\
    f(t) = f(a) +\frac{\partial f(a)}{\partial t}(t-a) + \int^t_a \frac{\partial ^2 f(a)}{\partial t} (t-v) dv.
\end{align*}

Applying the above Taylor expansion with $t=1,a=0$ and re-arranging yields, 
\begin{align*}
        \log\operatorname{det}(\boldsymbol{\Theta}_0+\Delta)&-\log\operatorname{det}(\boldsymbol{\Theta}_0) = \\  &\Tr\{\mathbf{S}\Delta\} - \vec{\Delta}^H\left[ \int_0^1(1-v)((\boldsymbol{\Theta}_0+v\Delta)^{-1})^{H} \otimes (\boldsymbol{\Theta}_0+v\Delta)^{-1})dv \right]\vec{\Delta}
\end{align*}

\end{proof}
Therefore, we can write \eqref{eq:F_function} as 
\begin{align}
\label{eq:f_simpl}
     F(\boldsymbol{\Theta}_0+\Delta) &= \underbrace{\Tr\{\Delta (\hat{\mathbf{S}}-\mathbf{S})\}}_{(i)} + \underbrace{\vec{\Delta}^H\left[ \int_0^1(1-v)((\boldsymbol{\Theta}_0+v\Delta)^{-1})^{{H}} \otimes (\boldsymbol{\Theta}_0+v\Delta)^{-1})dv \right]\vec{\Delta} \nonumber}_{(ii)} \\ 
    & + \underbrace{\lambda(\Tr\left\{\hat{\boldsymbol{\Theta}}\right\} - \Tr\left\{\boldsymbol{\Theta}_0\right\})}_{(iii)}.
\end{align}

Noting that $$(iii) =  \lambda(\Tr\{\boldsymbol{\Theta}_0+\Delta\} - \Tr\{\boldsymbol{\Theta}_0\}) = \lambda (\Tr\{\Delta\}),$$ we can rewrite \eqref{eq:f_simpl} as
\begin{align}
    \label{eq:F_function_ridge}
   F(\boldsymbol{\Theta}_0+\Delta) &= \underbrace{\Tr\{\Delta (\hat{\mathbf{S}}-\mathbf{S} + \lambda \mathbb{I})\}}_{(i)} + \underbrace{\vec{\Delta}^H\left[ \int_0^1(1-v)((\boldsymbol{\Theta}_0+v\Delta)^{-1})^{{}{H}} \otimes (\boldsymbol{\Theta}_0+v\Delta)^{-1})dv \right]\vec{\Delta},}_{(ii)} 
    \end{align}

and proceed by bounding each term separately.

We begin by writing
\begin{align*}
     (i) \leq |\Tr\{\Delta (\hat{\mathbf{S}}-\mathbf{S} + \lambda \mathbb{I})\}| &\leq \left|\Tr\{\Delta(\mathbf{\hat{S}} - \mathbf{S})\}\right| + \left|\lambda \Tr \{\Delta \mathbb{I}\}\right|.\\
&\leq  \|\hat{\mathbf{S}}-\mathbf{S}\|_F \|\Delta\|_F + \lambda \|\Delta\|_F \|\mathbb{I}\|_F \\
&\leq p\|\mathbf{\hat{S}} - \mathbf{S}\|_{\infty}\|\Delta\|_F + \lambda \sqrt{p}\|\Delta\|_F\\
&\leq \left(c \sqrt{\frac{p^2\log p^\tau}{m}}+\lambda \sqrt{p}\right)\|\Delta\|_F.
\end{align*}

To bound the integral term $(ii)$ in \eqref{eq:F_function_ridge}, we recall that for a symmetric matrix $M$ we have that the minimum eigenvalue is given by $\phi_{\min}(M) = \min_{\|x\|=1}x^HMx$. Hence, after factoring out the norm of $\vec{\Delta}$, we have for $\Delta \in \mathcal{B}_m(G)$
\begin{align*}
    \phi_{\min}&\left(\int_0^1(1-v)((\boldsymbol{\Theta}_0+v\Delta)^{-1})^{H} \otimes (\boldsymbol{\Theta}_0+v\Delta)^{-1})dv \right)\\
    &\geq \int_0^1 (1-v)\phi^2_{min}(\boldsymbol{\Theta}_0+v\Delta)^{-1} dv \\  %\textrm{\textcolor{red}{Kronecker Property}}
    &\geq \frac{1}{2} \min_{0\leq v \leq 1} \phi^2_{\min}(\boldsymbol{\Theta}_0+v\Delta)^{-1} \\
    &\geq \frac{1}{2}\min\{\phi^2_{\min}(\boldsymbol{\Theta}_0+\Delta)^{-1}: \|\Delta\|_F \leq Gr_m\}.
\end{align*}
The first inequality uses the property that eigenvalues of Kronecker products of Hermitian matrices are equal to the products of the eigenvalues of their factors, i.e., $\textrm{eig}(A \otimes B) = eig(A)eig(B).$

\begin{claim}
\label{claim}
   $\phi^2_{\min}(\boldsymbol{\Theta}_0+\Delta)^{-1} \geq \frac{1}{9} \underline{k}^2 \ \forall \ \|\Delta\|_F\leq Gr_{m}$
\end{claim}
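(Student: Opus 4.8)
The plan is to reduce the claim to an \emph{upper} bound on the largest eigenvalue of $\boldsymbol{\Theta}_0+\Delta$. For any Hermitian positive definite matrix $M$ one has $\phi_{\min}(M^{-1})=1/\phi_{\max}(M)$, and since $\boldsymbol{\hat{\Theta}}=\boldsymbol{\Theta}_0+\Delta\in\mathcal{C}$ is positive definite this identity applies with $M=\boldsymbol{\Theta}_0+\Delta$. Hence it suffices to show $\phi_{\max}(\boldsymbol{\Theta}_0+\Delta)\leq 3/\underline{k}$ for every $\Delta$ with $\|\Delta\|_F\leq Gr_m$; taking reciprocals gives $\phi_{\min}((\boldsymbol{\Theta}_0+\Delta)^{-1})\geq\underline{k}/3$, and squaring yields the desired $\phi^2_{\min}((\boldsymbol{\Theta}_0+\Delta)^{-1})\geq\underline{k}^2/9$.

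First I would split $\phi_{\max}(\boldsymbol{\Theta}_0+\Delta)$ using Weyl's inequality (subadditivity of the largest eigenvalue of Hermitian matrices, noting $\Delta=\Delta^H$ by the definition of $\mathcal{B}_m(G)$), $\phi_{\max}(\boldsymbol{\Theta}_0+\Delta)\leq\phi_{\max}(\boldsymbol{\Theta}_0)+\phi_{\max}(\Delta)$. For the first term, A2 gives $\phi_{\max}(\boldsymbol{\Theta}_0)=\phi_{\max}(\mathbf{S}^{-1})=\phi_{\min}(\mathbf{S})^{-1}\leq \underline{k}^{-1}$. For the second, $\Delta$ Hermitian gives $\phi_{\max}(\Delta)\leq\|\Delta\|_2\leq\|\Delta\|_F\leq Gr_m$. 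Everything is therefore reduced to establishing $Gr_m\leq 2\underline{k}^{-1}$, after which $\phi_{\max}(\boldsymbol{\Theta}_0+\Delta)\leq\underline{k}^{-1}+2\underline{k}^{-1}=3\underline{k}^{-1}$, uniformly over the ball $\|\Delta\|_F\leq Gr_m$.

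The hard part will be verifying $Gr_m\leq 2\underline{k}^{-1}$, which is exactly where the ``sufficient tapers'' hypothesis enters. Substituting the radius $r_m$ and the constant $G$ defining the shell $\mathcal{B}_m(G)$ (the error radius has the form $Gr_m=\tfrac{36 c\tau^{1/2}}{\epsilon\underline{k}^2}\sqrt{p^2\log p/m}$), the inequality $Gr_m\leq 2\underline{k}^{-1}$ rearranges into a lower bound on $m$ of order $p^2\log p$, which is precisely what the stated condition on the number of tapers guarantees. This is essentially algebraic bookkeeping tying together $c$, $\tau$, $\epsilon$ and $\underline{k}$, but it is the only place where the explicit constants matter, and one must check that the chosen constants leave enough slack to absorb $Gr_m$ into $2\underline{k}^{-1}$. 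Assembling the three bounds then proves the claim.
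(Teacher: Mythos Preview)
Your reduction to showing $\phi_{\max}(\boldsymbol{\Theta}_0+\Delta)\le 3/\underline{k}$ is fine, and so is the Weyl split. The gap is in the last step: the inequality $Gr_m\le 2/\underline{k}$ is \emph{not} delivered by the hypotheses of Proposition~\ref{prop:ridge}. With the final shell radius $Gr_m=\tfrac{36c\tau^{1/2}}{\epsilon\underline{k}^2}\sqrt{p^2\log p/m}$, the bound $Gr_m\le 2/\underline{k}$ rearranges to $m\ge (18c)^2\tau p^2\log p/(\epsilon^2\underline{k}^2)$, i.e.\ a \emph{lower} bound on $m$ of order $p^2\log p$. But the taper condition in the proposition is $2c^{-2}\tau\log p<m<pc^2\tau\log p/(\epsilon^2\bar{k}^2)$: the lower bound is only of order $\log p$, and the upper bound actually \emph{forbids} $m$ from reaching order $p^2\log p$ once $p$ is moderately large. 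So your route cannot be closed under the stated assumptions.

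The paper does something different and specifically ``Ridge''. It does not control $\|\Delta\|_2$ through $\|\Delta\|_F\le Gr_m$; instead it exploits the closed form $\hat{\boldsymbol{\Theta}}=(\hat{\mathbf{S}}+\lambda\mathbb{I})^{-1}$ to write $\|\Delta\|_2\le\|\hat{\boldsymbol{\Theta}}\|_2+\|\boldsymbol{\Theta}_0\|_2\le\lambda^{-1}+\|\boldsymbol{\Theta}_0\|_2$, and then uses the \emph{upper} bound on $m$ (equivalently, $\lambda=\epsilon^{-1}c\sqrt{p\tau\log p/m}\ge\bar{k}$) to conclude $\lambda^{-1}\le\|\boldsymbol{\Theta}_0\|_2$, hence $\|\Delta\|_2\le 2\|\boldsymbol{\Theta}_0\|_2$ and $\phi_{\max}(\boldsymbol{\Theta}_0+\Delta)\le 3\|\boldsymbol{\Theta}_0\|_2\le 3/\underline{k}$. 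In other words, the claim is obtained by making $\lambda$ large (small $m$), not by making $Gr_m$ small (large $m$); the constraint you invoked points in the wrong direction.
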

\begin{proof}
\begin{align*}
    \phi^2_{\min}(\boldsymbol{\Theta}_0+\Delta)^{-1} &= \phi^{-2}_{\max}(\boldsymbol{\Theta}_0+\Delta) \geq(\|\boldsymbol{\Theta}_0\|_2+\|\Delta\|_2)^{-2} % \textcolor{red}{\geq \frac{1}{2}\underline{k}^2}
\end{align*}
Now,
\begin{align*}
    \|\Delta\|_2 = \|\boldsymbol{\hat{\Theta}}-\boldsymbol{\Theta}_0\|_2 &\leq \|\hat{\boldsymbol{\Theta}}\|_2 + \|\boldsymbol{\Theta}_0\|_2 \\
    &= \|(\hat{\boldsymbol{S}}+\lambda \mathbb{I})^{-1}\|_2 + \|\boldsymbol{\Theta}_0\|_2 \\
    &\leq \frac{1}{\lambda} + \|\boldsymbol{\Theta}_0\|_2 \\
    &\leq 2\|\boldsymbol{\Theta}_0\|_2
\end{align*}
provided that $ 1/\lambda \leq \|\boldsymbol{\Theta}_0\| \iff \lambda \geq \bar{k},$ which is satisfied through our choice of $\lambda = \frac{c}{\epsilon}\sqrt{\frac{p\log p^\tau}{m}}$ and constraints on $m$.

Hence, 
\begin{align*}
    (\|\boldsymbol{\Theta}_0\|_2+ 2\|\boldsymbol{\Theta}_0\|_2)^{-2} \geq (3\|\boldsymbol{\Theta}_0\|_2)^{-2}=\frac{1}{9}\underline{k}^2
\end{align*}
where $\underline{k}$ is the minimum eigenvalue of $\mathbf{S}.$
\end{proof}

Thus, 
\begin{equation*}
       \phi_{\min}\left(\int_0^1(1-v)((\boldsymbol{\Theta}_0+v\Delta)^{-1})^{H} \otimes (\boldsymbol{\Theta}_0+v\Delta)^{-1})dv \right) \geq \frac{1}{18}\underline{k}^2,
\end{equation*}

and we have the lower bound
\begin{align*}
   F(\boldsymbol{\Theta}_0+ \Delta) &\geq \frac{1}{18}\underline{k}^2 \|\Delta\|^2_F - c\left(1+\frac{1}{\epsilon}\right)\sqrt{\frac{p^2\log p}{m}}\|\Delta\|_F .
\end{align*}

Solving this quadratic in $\|\Delta\|_F$ yields one positive root 
\begin{align*}
    \|\Delta\|_F &= \frac{18c\left(1+\frac{1}{\epsilon}\right)}{\underline{k}^2}\sqrt{\frac{p^2 \log p^\tau}{m}},
\end{align*} 
which equating with $Gr_m$ gives the stated error bound. Since the above analysis was conditioned on the event $\mathcal{E}$, Proposition \ref{prop:ridge} holds with probability $1-\frac{8}{p^{\tau-2}}\rightarrow 1,$ for any $\tau>2$ and sufficient tapers as stated.
\end{proof}

\subsection{Proof of Proposition \ref{prop:glasso}}

\begin{proof}
In the proof, we adopt the notation $M^+ = \operatorname{diag}(M)$ for a diagonal matrix with the same diagonal as matrix $M$, and $M^- = M - M^+.$

In the same manner as the proof of Proposition \ref{prop:ridge}, we condition on the event $\mathcal{E}$ and seek to find the value $Gr_m$ which defines the shell of errors \eqref{eq:shell}. The main difference is that of the regularisation term, therefore \eqref{eq:f_simpl} becomes
\begin{align}
    \label{eq:F_function_glasso}
   F(\boldsymbol{\Theta}_0+\Delta) &= \underbrace{\Tr\{\Delta (\hat{\mathbf{S}}-\mathbf{S})\}}_{(i)} + \underbrace{\vec{\Delta}^H\left[ \int_0^1(1-v)((\boldsymbol{\Theta}_0+v\Delta)^{-1})^{{}{H}} \otimes (\boldsymbol{\Theta}_0+v\Delta)^{-1})dv \right]\vec{\Delta} \nonumber}_{(ii)} \\ 
        &+ \underbrace{\lambda (\|(\boldsymbol{\Theta}_0+\Delta)\|_1 - \|\boldsymbol{\Theta}_0\|_1)}_{(iv)}
    \end{align}
    where $\boldsymbol{\Theta}_0=\mathbf{S}^{-1}(\omega).$

We consider bounding each term in \eqref{eq:F_function_glasso} separately, and begin by writing

\begin{align*}
    (i) \leq |\Tr\{\Delta (\hat{\mathbf{S}}-\mathbf{S})\}| \leq \underbrace{\left|\sum_{q\neq r}(\hat{S}_{qr}-S_{qr})\Delta_{qr}\right|}_{\textrm{I}} + \underbrace{\left|\sum_{q}(\hat{S}_{qq}-S_{qq})\Delta_{qq}\right|}_{\textrm{II}} = \textrm{I} + \textrm{II}.
\end{align*}

To bound I, we appeal to our earlier result in Lemma \ref{lemma:union} and argue that with probability tending to 1, 
\begin{align*}
    \max_{q\neq r}|\hat{S}_{qr}-S_{qr}|\leq c \sqrt{\frac{\log p^\tau}{m}}
\end{align*}
therefore term I is bounded by
\begin{align*}
    \textrm{I} \leq  c \sqrt{\frac{\log p^\tau}{m}} \|\Delta^{-}\|_1.
\end{align*}

To bound the second term, we consider an application of the Cauchy-Schwartz inequality combined with our result from Lemma \ref{lemma:union}, namely, 
\begin{align*}
    \textrm{II}\leq \|\hat{S}_{qq}-S_{qq}\|_F \ \|\Delta^+\|_F &= \sqrt{\sum_{q=1}^p (\hat{S}_{qq}-S_{qq})^2} \ \|\Delta^+\|_F \\
    &\leq \sqrt{p}\|\hat{S}_{qq}-S_{qq}\|_{\infty} \ \|\Delta^+\|_F \\
    &\leq c \sqrt{\frac{p \log p^\tau}{m}}\|\Delta^+\|_F \\
    &\leq c\sqrt{\frac{(p + s)\log p^\tau}{m}}\|\Delta^+\|_F.
\end{align*}

To bound the penalty term $(iii)$ we apply the decomposability argument of \cite{negahban2009unified}, arguing that the norm is linearly decomposable over the on and off support of the true model. 

Importantly, we note the applicability of this argument to the group regulariser $\|\boldsymbol{\Theta}\|_1$ defined for a complex valued matrix $\boldsymbol{\Theta} \in \mathbb{C}^{p \times p}$ since the group norm is specified at the level of elements in $\mathcal{M}.$ 

Hence, by this decomposability argument we have that
\begin{align*}
    &\lambda(\|\boldsymbol{\Theta}_0 + \Delta\|_1 - \|\boldsymbol{\Theta}_0\|_1) \\
    =  &\lambda(\|\boldsymbol{\Theta}^{-}_{0\mathcal{M}} + \boldsymbol{\Theta}^{-}_{0\mathcal{M}^{\perp}} +  \boldsymbol{\Theta}^{+}_{0\mathcal{M}} + \Delta^{-}_{\mathcal{M}} + \Delta^{-}_{\mathcal{M}^{\perp}} + \Delta^{+}_{\mathcal{M}}\|_1 - \|\boldsymbol{\Theta}^{-}_{0\mathcal{M}} + \boldsymbol{\Theta}^{-}_{0\mathcal{M}^{\perp}}+\boldsymbol{\Theta}^{+}_{0\mathcal{M}} \|_1) \\
    = &\lambda(\underbrace{\|\boldsymbol{\Theta}_{0\mathcal{M}}^{-} + \Delta_{\mathcal{M}}^{-}\|_1 - \|\boldsymbol{\Theta}_{0\mathcal{M}}^{-}\|_1}_{(a)} +
   \underbrace{\|\boldsymbol{\Theta}_{0\mathcal{M}^{\perp}}^{-} + \Delta_{\mathcal{M}^{\perp}}^{-}\|_1 - \|\boldsymbol{\Theta}_{0\mathcal{M^\perp}}^{-}\|_1}_{(b)} + \underbrace{\|\boldsymbol{\Theta}_{0\mathcal{M}}^{+} + \Delta_{\mathcal{M}}^{+}\|_1 - \|\boldsymbol{\Theta}_{0\mathcal{M}}^{+}\|_1)}_{(c)}.  \\  
\end{align*}
Bounding term separately, we use the reverse triangle inequality to write
\begin{align*}
    (a) = \|\boldsymbol{\Theta}_{0\mathcal{M}}^{-} - (- \Delta_{\mathcal{M}}^{-})\|_1 - \|\boldsymbol{\Theta}_{0\mathcal{M}}^{-}\|_1 &\geq \left| \|\boldsymbol{\Theta}^{-}_{0\mathcal{M}}\|_1 - \|-\Delta^{-}_{\mathcal{M}}\|_1 \right| - \|\boldsymbol{\Theta}_{0\mathcal{M}}^{-}\|_1 \\ 
    &\geq \|\boldsymbol{\Theta}^{-}_{0\mathcal{M}}\|_1 - \|\Delta^{-}_{\mathcal{M}}\|_1 -\|\boldsymbol{\Theta}_{0\mathcal{M}}^{-}\|_1 = -\|\Delta^{-}_{\mathcal{M}}\|_1.
\end{align*}
A similar argument yields $(c) \geq - \|\Delta^+_{0\mathcal{M}}\|_1$. Finally, we note that under the true model  $\|\boldsymbol{\Theta}^{-}_{0\mathcal{M^\perp}}\|_1=0.$ Hence, $(b) = \|\Delta^{-}_{\mathcal{M}^{\perp}}\|_1$.  

The above implies that
\begin{equation*}
   \lambda(\|\boldsymbol{\Theta}_0 + \Delta\|_1 - \|\boldsymbol{\Theta}_0\|_1) \geq \lambda(\|\Delta^{-}_{\mathcal{M}^{\perp}}\|_1 - \|\Delta_{\mathcal{M}}^{-}\|_1 - \|\Delta^+_{\mathcal{M}}\|_1).
\end{equation*}
and the lower bound
\begin{align*}
     F(\boldsymbol{\Theta}_0+\Delta) &\geq \frac{1}{18}\underline{k}^2\|\Delta\|^2_F - c\sqrt{\frac{\tau \log p}{m}} \|\Delta^{-}\|_1 - c\sqrt{\frac{(p+s) \tau\log p}{m}}\|\Delta^+\|_F  \\ &+  \lambda (\|\Delta^{-}_{\mathcal{M}^{\perp}} \|_1-\|\Delta^{-}_{\mathcal{M}}\|_1-\|\Delta^{+}_{\mathcal{M}}\|_1).
\end{align*}

Now, substituting $\textcolor{black}{\lambda = \frac{c}{\epsilon}\sqrt{\frac{\tau \log p}{m}}}$ yields
\begin{align*}
      F(\boldsymbol{\Theta}_0+\Delta) &\geq \frac{1}{18}\underline{k}^2\|\Delta\|^2_F - c\sqrt{\frac{\tau\log p}{m}} \|\Delta^{-}_{\mathcal{M}}\|_1 \left(1+\frac{1}{\epsilon}\right) - c \sqrt{\frac{\tau\log p}{m}} \|\Delta^{-}_{\mathcal{M}_{\perp}}\|_1 \left(1-\frac{1}{\epsilon}\right) \\
      &- c\sqrt{\frac{\tau \log p}{m}} \|\Delta^{+}_{\mathcal{M}}\|_1 \left(\frac{1}{\epsilon}\right) - c\sqrt{\frac{(p+s)\log p}{m}}\|\Delta^+\|_F.
\end{align*}

For sufficiently small $0<\epsilon<1$ the third term is always positive, and hence may be omitted from the lower bound. Now, we note that
\begin{align*}
    \|\Delta_{\mathcal{M}}^{-}\|_1 \leq \sqrt{s}\|\Delta^{-}_{\mathcal{M}}\|_F \leq \sqrt{s}\|\Delta^{-}\|_F \leq \sqrt{p+s}\|\Delta^{-}\|_F,
\end{align*}
and similarly, $\|\Delta^{+}_{\mathcal{M}}\|_1 \leq \sqrt{p+s}\|\Delta^{+}\|_F$. %need to chek
Thus, we have 
\begin{align}
      F(\boldsymbol{\Theta}_0+\Delta) &\geq \|\Delta^{-}\|^2_F \left[\frac{1}{18}\underline{k}^2- c\sqrt{\frac{(p+s)\log p^\tau}{m}}\left(1+\frac{1}{\epsilon}\right)\|\Delta^{-}\|^{-1}_F\right] \nonumber \\
      &+  \|\Delta^{+}\|^2_F \left[\frac{1}{18}\underline{k}^2- c\sqrt{\frac{(p+s)\log p^\tau}{m}}\left(1+\frac{1}{\epsilon}\right)\|\Delta^{+}\|^{-1}_F\right] \nonumber \\
\end{align}
where using $\|\Delta\|_F=Gr_m$ and letting $r_m=\sqrt{\frac{(p+s)\log p^\tau}{m}}$ we have
\begin{align}
     F(\boldsymbol{\Theta}_0+\Delta) &\geq \|\Delta\|^2_F \left[\frac{1}{18}\underline{k}^2 - \frac{c(1+\epsilon)}{\epsilon G}\right] \nonumber\\
     & {\geq \|\Delta\|^2_F \left[\frac{1}{18}\underline{k}^2 - \frac{2 c}{\epsilon G}\right]>0}\;,
\end{align}
for $G$ sufficiently large i.e.,
\begin{equation*}
   {G \ge \frac{36c}{\epsilon\underline{k}^2}}.
\end{equation*}
Hence, 
\begin{equation}
\label{eq:final_delta}
    \|\Delta\|_F \leq Gr_m = \frac{36c}{\epsilon \underline{k}^2} \sqrt{\frac{(p+s) \log p^\tau}{m}}
\end{equation}

In the above, we assumed that $(ii)$ in \eqref{eq:F_function_glasso} was bound according to  claim \ref{claim}. The above bound confirms this claim holds under the assumption that we have sufficient tapers 
\begin{equation*}
    m > \frac{18^2 c^2}{\epsilon^2 \underline{k}^4} (p+s) \log p^{\tau}\bar{k}^2,
\end{equation*}
which ensures that $\|\Delta\|_2\leq 2\|\boldsymbol{\Theta}_0\|_2$. Finally, as before, the above analysis was conditioned on the event $\mathcal{E}$, statement \eqref{eq:final_delta} holds with probability $1 - \frac{8}{p^{\tau-2}}\rightarrow 1,$ for any $\tau>2.$

\end{proof}

\subsection{Proof of Proposition \ref{prop:l_inf}}
\begin{proof}
    The proof follows a primal dual witness approach, and extends the arguments presented in \cite{ravikumar2011high} to allow for complex positive definite matrices. Similar to the proofs of Propositions \ref{prop:ridge} and \ref{prop:glasso}, we condition on the event $\mathcal{E}$ defined in \eqref{eq:event} in the following analysis.
    The results are developed for a generic frequency $\omega$, which we omit below for ease of notation.

    As in \cite{ravikumar2011high}, the proof is broken down into a series of lemmas, thus easing the comparison between the real and complex analysis. Importantly, we note that a similar approach has been discussed in \cite{deb2024regularized}.

    \begin{lemma}
    \label{lemma:rav_3}
        The RSE in \eqref{estimator} with the lasso penalty \eqref{pen2} has a unique solution $\boldsymbol{\hat{\Theta}}$ characterised by
        \begin{equation}
        \label{eq:optimality_cond}
            \hat{\mathbf{S}}-\hat{\boldsymbol{\Theta}}^{-1}+\lambda\hat{\mathbf{Z}}=0,
        \end{equation}
        where $\mathbf{\hat{Z}}$ is an element of the sub-differential $\partial \|\hat{\boldsymbol{\Theta}}\|_1$ which has the following entries
    \end{lemma}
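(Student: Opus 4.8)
The plan is to read off \eqref{eq:optimality_cond} as the first-order (subgradient) stationarity condition for the convex program \eqref{estimator} with penalty \eqref{pen2}, and to obtain uniqueness from strict convexity of the objective on the open cone $\mathcal{C}$. Because $\mathcal{C}=\{\boldsymbol{\Theta}:\boldsymbol{\Theta}\succ 0\}$ is open, no Lagrange multiplier for the constraint is required: the $-\log\det(\boldsymbol{\Theta})$ term acts as a barrier that drives the minimiser into the interior, so unconstrained stationarity characterises $\hat{\boldsymbol{\Theta}}$. This mirrors the argument of \cite{ravikumar2011high}, adapted here to complex Hermitian positive-definite matrices.

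First I would settle existence and uniqueness. The objective $-\log\det(\boldsymbol{\Theta}) + \Tr\{\hat{\mathbf{S}}\boldsymbol{\Theta}\} + \lambda\|\boldsymbol{\Theta}\|_1$ is the sum of a strictly convex term ($-\log\det$ on the Hermitian positive-definite cone), a linear term, and a convex norm, hence strictly convex. To ensure the infimum is attained in the interior I would verify coercivity: the objective tends to $+\infty$ as $\boldsymbol{\Theta}$ approaches $\partial\mathcal{C}$ (where $-\log\det\to+\infty$), and as $\|\boldsymbol{\Theta}\|\to\infty$ the term $\lambda\|\boldsymbol{\Theta}\|_1$ grows linearly while $-\log\det$ decays only logarithmically, with $\Tr\{\hat{\mathbf{S}}\boldsymbol{\Theta}\}\ge 0$ since $\hat{\mathbf{S}}\succeq 0$ and $\boldsymbol{\Theta}\succ 0$. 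Strict convexity then yields a unique interior minimiser.

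Next I would compute the subdifferential of the objective. The smooth part contributes its gradient: using the standard matrix identities, adapted to the complex Hermitian setting via Wirtinger derivatives (or by differentiating with respect to the real and imaginary parts of $\boldsymbol{\Theta}$), one has $\nabla\Tr\{\hat{\mathbf{S}}\boldsymbol{\Theta}\}=\hat{\mathbf{S}}$ and $\nabla(-\log\det\boldsymbol{\Theta})=-\boldsymbol{\Theta}^{-1}$, both Hermitian. The penalty $\|\boldsymbol{\Theta}\|_1=\sum_{qr}\sqrt{\operatorname{Re}^2(\Theta_{qr})+\operatorname{Im}^2(\Theta_{qr})}$ is separable across entries, each entry forming a two-element group, so its subdifferential factorises entrywise with $\hat{Z}_{qr}=\Theta_{qr}/|\Theta_{qr}|$ when $\Theta_{qr}\neq 0$ and $\hat{Z}_{qr}$ any complex number with $|\hat{Z}_{qr}|\le 1$ when $\Theta_{qr}=0$. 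Imposing $0$ in the subdifferential of the full objective gives $\hat{\mathbf{S}}-\hat{\boldsymbol{\Theta}}^{-1}+\lambda\hat{\mathbf{Z}}=0$, exactly \eqref{eq:optimality_cond}.

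The main obstacle is the complex bookkeeping. I must verify that the subdifferential of the group norm takes the stated complex, entrywise form, so that $\hat{Z}_{qr}$ equals the phase $\Theta_{qr}/|\Theta_{qr}|$ on the support and lies in the closed unit disc off the support, and that differentiating over Hermitian matrices produces the clean condition \eqref{eq:optimality_cond} without spurious conjugation or symmetrisation terms. By comparison, the convexity and coercivity argument underpinning uniqueness is routine, but it must be stated explicitly to justify that a single $\hat{\boldsymbol{\Theta}}$ satisfies the stationarity condition.
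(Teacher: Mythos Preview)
Your proposal is correct and follows essentially the same route as the paper: strict convexity of $-\log\det(\cdot)$ for uniqueness, existence of a minimiser in the interior, and the zero-subdifferential condition yielding \eqref{eq:optimality_cond} with the stated entrywise form of $\hat{\mathbf{Z}}$. The only notable difference is in how existence is secured: the paper invokes Lagrangian duality to recast the problem as minimisation over the compact set $\{\boldsymbol{\Theta}\succ 0,\ \|\boldsymbol{\Theta}\|_1\le c(\lambda)\}$, whereas you argue coercivity directly from the barrier behaviour of $-\log\det$ and the linear growth of $\lambda\|\boldsymbol{\Theta}\|_1$; both are standard and your version is arguably more self-contained.
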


    \begin{align*}
        Z_{qr} = \operatorname{Re}(Z_{qr}) + i \operatorname{Im}(Z_{qr})=
        \begin{cases}
            1 \ &\textrm{if} \ q=r\\
            \left[{\operatorname{Re}(\Theta_{qr}) + i \operatorname{Im}(\Theta_{qr})}\right]/{|\Theta_{qr}|} \ &\textrm{if} \ q\neq r \ \textrm{and} \ \Theta_{qr}\neq 0\\
            \in \{\psi \in \mathbb{C} : |\psi|<1\} \  &\textrm{if} \ q\neq r \ \textrm{and} \ \Theta_{qr}=0.
        \end{cases}
    \end{align*}
    
Lemma \ref{lemma:rav_3} is a modified version of Lemma 3 in \cite{ravikumar2011high} where we have substituted our multi-taper periodogram estimate $\hat{\mathbf{S}} \in \mathbb{C}^{p \times p}$ in place of the estimated covariance matrix $\hat{\Sigma} \in \mathbb{R}^{p\times p}$. 

%As such, rather than imitating the proof in full, we provide the following sketch which highlights the differences incurred by the inclusion of our complex valued estimate of the spectrum $\mathbf{\hat{S}}$ in \eqref{eq:optimality_cond}. 

\newenvironment{sproof}{%
  \renewcommand{\proofname}{Proof Sketch}\proof}{\endproof}

\begin{proof}
    The RSE in \eqref{estimator} with the lasso penalty is a strict convex program. Therefore, if the minimum is obtained, then it is unique. 
    
    Suppose $\lambda>0$. Then, by Lagrangian duality, we can rewrite our optimisation problem as 
    \begin{equation*}
        \min_{\boldsymbol{\Theta} \in \mathcal{C}^\prime} \left\{-\log \det (\boldsymbol{\Theta}) + \Tr\left\{\mathbf{\hat{S}}\boldsymbol{\Theta}\right\} \right\},
    \end{equation*}
    where $\mathcal{C}^{\prime}:=\{\boldsymbol{\Theta} \in \mathbb{C}^{p \times p}:\boldsymbol{\Theta}\succ 0, \  \boldsymbol{\Theta}^H = \boldsymbol{\Theta}, \ \|\boldsymbol{\Theta}\|_1 \leq c(\lambda) \}$ for some $c(\lambda)<\infty.$ Since both the off- and on-diagonal elements are bounded within the $\ell_1$-ball, one can see that the minimum is attained, and as argued above, is also unique.

    Recall the form of the RSE in \eqref{estimator} with the lasso penalty. By standard optimality conditions for convex programs, a matrix $\boldsymbol{\hat{\Theta}} \in \mathcal{C}$ is an optimal solution if and only if the zero matrix belongs to the sub-differential of the objective. Equivalently, there must exist a matrix $\hat{\mathbf{Z}}$ in the sub-differential of the norm $\|\cdot\|_1$ evaluated at $\hat{\boldsymbol{\Theta}}$ such that
    \begin{equation*}
        \hat{\mathbf{S}} - \hat{\boldsymbol{\Theta}}^{-1}+\lambda\hat{\mathbf{Z}}=0.
    \end{equation*}
    as claimed.
 
\end{proof}

Given that Lemma \ref{lemma:rav_3} holds, we now outline the primal-dual witness approach which is central to our proof. 

\subsection*{Primal-Dual Witness Approach}

We define the witness, with oracle knowledge about the edge set $\mathcal{M}$, as 
\begin{equation}
\label{eq:restricted_estimator}
    \dot{\boldsymbol{\Theta}} = \arg \min _{\boldsymbol{\Theta}(\omega) \in \dot{\mathcal{C}}}\left\{-\log \operatorname{det}(\boldsymbol{\Theta})+ \Tr\left\{\mathbf{\hat{S}}\boldsymbol{\Theta}\right\} +\lambda \|\boldsymbol{\Theta}\|_1\right\},
\end{equation}
where $\dot{\mathcal{C}}:= \{\boldsymbol{\Theta}\succ 0 : \boldsymbol{\Theta}^H = \boldsymbol{\Theta}, \  \boldsymbol{\Theta}_{\mathcal{M}^{\perp}}=0\}.$ Therefore, $\dot{\boldsymbol{\Theta}}$ is a positive definite Hermitian matrix which correctly specifies the true edge set.

Let $\dot{\mathbf{Z}}$ be a member of the sub-differential of the regulariser $\|\cdot\|_1$ evaluated at $\dot{\boldsymbol{\Theta}}$. Now, considering each $(q,r) \in \mathcal{M}^{\perp}$, we write
\begin{equation*}
\label{Z_qr}
    \dot{{Z}}_{qr} := \frac{1}{\lambda}\left\{-\hat{S}_{qr} + \dot{\Theta}^{-1}_{qr} \right\},
\end{equation*}
which ensures that the primal-dual pair $(\dot{\boldsymbol{\Theta}}, \dot{\mathbf{Z}})$ satisfy the optimality condition in \eqref{eq:optimality_cond} replacing $(\hat{\boldsymbol{\Theta}}, \hat{\mathbf{Z}})$.

Finally, it remains to check that 
\begin{equation}
\label{dual_feas}
    |\dot{Z}_{qr}| < 1 \ \forall (q,r) \ \in \mathcal{M}^{\perp}.
\end{equation}
This step ensures that $\dot{Z}_{qr}$ satisfies the necessary conditions to belong to the sub-differential.
We will subsequently refer to \eqref{dual_feas} as the \textit{strict dual feasibility} condition. 

In the following discussion, we require some additional notation. Firstly, let $\mathbf{W} \in \mathbb{C}^{p \times p}$ denote the discrepancy between the multi-taper periodogram and the true spectral density matrix, i.e.,
\begin{equation}
\label{W}
    \mathbf{W} = \hat{\mathbf{S}} - \mathbf{S}.
\end{equation}
Secondly, let $\dot{\Delta} = \dot{\boldsymbol{\Theta}} - \boldsymbol{\Theta}_0$ measure the difference between the primal matrix and the ground truth.  Finally, let $R(\dot{\Delta})$ denote the remainder of the first-order Taylor expansion of ${\partial}/{\partial \boldsymbol{\Theta}}\left(-\log \det (\dot{\boldsymbol{\Theta}})\right) $ around $\boldsymbol{\Theta}_0.$ That is,
\begin{equation}
\label{R}
    R(\dot{\Delta}) = \dot{\boldsymbol{\Theta}}^{-1} - \boldsymbol{\Theta}_0^{-1} + \boldsymbol{\Theta}_0^{-1} \dot{\Delta}\boldsymbol{\Theta}_0^{-1}.
\end{equation}

We now provide sufficient conditions on $\mathbf{W}$ and $R(\dot{\Delta})$ for the strict dual feasibility condition in \eqref{dual_feas} to hold. The following lemma follows from Lemma 4 in \cite{ravikumar2011high}.

\begin{lemma}\label{lemma:rav4}(Strict Dual Feasibility). Suppose that 
\begin{equation}
\label{ass:W,R}
    \max \left\{\|\mathbf{W}\|_{\infty}, \|R(\dot{\Delta})\|_{\infty}\right\} \leq \frac{\alpha \lambda}{8}.
\end{equation}
Then, the vector $\dot{\mathbf{Z}}_{\mathcal{M}^\perp}$ constructed from \eqref{Z_qr} satisfies $\|\dot{\mathbf{Z}}_{\mathcal{M}^\perp}\|_{\infty}<1,$ and therefore $\dot{\boldsymbol{\Theta}} = \hat{\boldsymbol{\Theta}}.$    
\end{lemma}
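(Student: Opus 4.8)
The plan is to follow the primal-dual witness scheme of \citet{ravikumar2011high}, adapting the underlying linear algebra to the complex Hermitian setting. The starting point is the optimality characterisation of Lemma~\ref{lemma:rav_3}, which the witness pair $(\dot{\boldsymbol{\Theta}}, \dot{\mathbf{Z}})$ satisfies by construction: $\hat{\mathbf{S}} - \dot{\boldsymbol{\Theta}}^{-1} + \lambda \dot{\mathbf{Z}} = 0$. First I would substitute $\hat{\mathbf{S}} = \mathbf{S} + \mathbf{W}$ from \eqref{W} and replace $\dot{\boldsymbol{\Theta}}^{-1}$ by its first-order expansion about $\boldsymbol{\Theta}_0$, namely $\dot{\boldsymbol{\Theta}}^{-1} = \boldsymbol{\Theta}_0^{-1} - \boldsymbol{\Theta}_0^{-1}\dot{\Delta}\boldsymbol{\Theta}_0^{-1} + R(\dot{\Delta})$ with the remainder $R(\dot{\Delta})$ of \eqref{R} and $\boldsymbol{\Theta}_0^{-1} = \mathbf{S}$. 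After cancelling the common $\mathbf{S}$ terms this yields the linear relation $\mathbf{W} + \mathbf{S}\dot{\Delta}\mathbf{S} - R(\dot{\Delta}) + \lambda\dot{\mathbf{Z}} = 0$.

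Next I would vectorise, using the Hessian $\Gamma = \mathbf{S}\otimes \mathbf{S}$ so that $\operatorname{vec}(\mathbf{S}\dot{\Delta}\mathbf{S}) = \Gamma\operatorname{vec}(\dot{\Delta})$, and exploit the crucial structural fact that $\dot{\Delta}$ is supported only on $\mathcal{M}$: both $\dot{\boldsymbol{\Theta}}$ (by the restricted constraint set $\dot{\mathcal{C}}$) and $\boldsymbol{\Theta}_0$ (since $\mathcal{M}^\perp$ indexes the true off-diagonal zeros) vanish on $\mathcal{M}^\perp$. Splitting the vectorised identity into its $\mathcal{M}$ and $\mathcal{M}^\perp$ blocks, the $\mathcal{M}$-block can be solved for $\operatorname{vec}(\dot{\Delta})_{\mathcal{M}}$ using invertibility of $\Gamma_{\mathcal{M}\mathcal{M}}$, giving $\operatorname{vec}(\dot{\Delta})_{\mathcal{M}} = \Gamma_{\mathcal{M}\mathcal{M}}^{-1}[-\mathbf{W}_{\mathcal{M}} + R_{\mathcal{M}} - \lambda\dot{\mathbf{Z}}_{\mathcal{M}}]$. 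Substituting this into the $\mathcal{M}^\perp$-block and isolating $\lambda\dot{\mathbf{Z}}_{\mathcal{M}^\perp}$ expresses the dual variable on the non-edge set as a combination of $\mathbf{W}$, $R(\dot{\Delta})$, and the term $\Gamma_{\mathcal{M}^\perp\mathcal{M}}\Gamma_{\mathcal{M}\mathcal{M}}^{-1}\dot{\mathbf{Z}}_{\mathcal{M}}$.

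To finish I would take the $\ell_\infty$ norm of this expression and bound its three contributions: the incoherence assumption A3 controls $\opnorm{\Gamma_{\mathcal{M}^\perp\mathcal{M}}(\Gamma_{\mathcal{M}\mathcal{M}})^{-1}}{\infty} \le 1-\alpha$; the subgradient entries obey $\|\dot{\mathbf{Z}}_{\mathcal{M}}\|_{\infty} \le 1$ (now with $|\cdot|$ the complex modulus, from Lemma~\ref{lemma:rav_3}); and the hypothesis \eqref{ass:W,R} caps both $\|\mathbf{W}\|_{\infty}$ and $\|R(\dot{\Delta})\|_{\infty}$ by $\alpha\lambda/8$. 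Collecting terms gives $\|\dot{\mathbf{Z}}_{\mathcal{M}^\perp}\|_{\infty} \le \tfrac{\alpha(2-\alpha)}{4} + (1-\alpha) \le 1 - \tfrac{\alpha}{2} < 1$, which is the strict dual feasibility condition \eqref{dual_feas}. Since $(\dot{\boldsymbol{\Theta}}, \dot{\mathbf{Z}})$ then satisfies the optimality conditions of the \emph{full} (unrestricted) problem, and the solution is unique by Lemma~\ref{lemma:rav_3}, it follows that $\dot{\boldsymbol{\Theta}} = \hat{\boldsymbol{\Theta}}$.

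The main obstacle, and the reason this is not a verbatim transcription of \citet{ravikumar2011high}, is verifying that each linear-algebraic step remains valid for complex Hermitian positive-definite matrices. In particular I would check the Kronecker/vectorisation identity $\operatorname{vec}(\mathbf{S}\dot{\Delta}\mathbf{S}) = \Gamma\operatorname{vec}(\dot{\Delta})$ under the conjugate-transpose conventions implicit in $\Gamma = \mathbf{S}\otimes\mathbf{S}$, confirm that the block partition of $\Gamma$ and the operator norm in A3 are taken consistently over the real and imaginary parts of the entries, and ensure that the complex subgradient of Lemma~\ref{lemma:rav_3} (off-support entries lying in the open complex unit disc) makes $\|\dot{\mathbf{Z}}_{\mathcal{M}}\|_{\infty}\le 1$ and $\|\dot{\mathbf{Z}}_{\mathcal{M}^\perp}\|_{\infty}<1$ the correct feasibility statement. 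The control of $R(\dot{\Delta})$ itself is deferred: here it enters only through hypothesis \eqref{ass:W,R} and would be discharged separately by bounding the Taylor remainder in terms of $\|\dot{\Delta}\|_{\infty}$, the degree $d$, and $\kappa_S$.
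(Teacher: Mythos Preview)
Your proposal is correct and follows essentially the same approach as the paper's own proof: both start from the optimality condition \eqref{eq:optimality_cond}, substitute $\mathbf{W}$ and $R(\dot{\Delta})$ to obtain the linear relation $\mathbf{W} + \boldsymbol{\Theta}_0^{-1}\dot{\Delta}\boldsymbol{\Theta}_0^{-1} - R(\dot{\Delta}) + \lambda\dot{\mathbf{Z}} = 0$, vectorise with $\Gamma$, split into $\mathcal{M}$ and $\mathcal{M}^\perp$ blocks, solve the $\mathcal{M}$-block for $\vec{\dot{\Delta}}_{\mathcal{M}}$, substitute into the $\mathcal{M}^\perp$-block, and finish by combining A3, $\|\dot{\mathbf{Z}}_{\mathcal{M}}\|_\infty \le 1$, and the hypothesis \eqref{ass:W,R} to get $\|\dot{\mathbf{Z}}_{\mathcal{M}^\perp}\|_\infty \le \tfrac{\alpha(2-\alpha)}{4} + (1-\alpha) < 1$. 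Your remarks on the complex-Hermitian bookkeeping and on deferring the control of $R(\dot{\Delta})$ to a separate lemma mirror exactly what the paper does.
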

\begin{proof}
    Using \eqref{W} and \eqref{R}, we can rewrite the optimality condition \eqref{eq:optimality_cond} as
    \begin{equation}
        \label{W,R}
        \mathbf{W} - R(\dot{\Delta}) + \boldsymbol{\Theta}_0^{-1} \dot{\Delta}\boldsymbol{\Theta}_0^{-1} + \lambda \dot{Z}=0.
    \end{equation}

Equation \eqref{W,R} can be re-written as an ordinary linear equation by `vectorising' the matrices. We adopt the notation $\operatorname{vec}(M)$ or equivalently $\vec{M}$ for the vector version of  matrix $M$ obtained by stacking the rows of $M$ into a single column vector. 

Noting that 
\begin{equation*}
    \operatorname{vec}\left(\boldsymbol{\Theta}_0^{-1} \dot{\Delta} \boldsymbol{\Theta}_0^{-1} \right) = \left(\boldsymbol{\Theta}_0^{-1} \otimes \Theta_0^{-1} \right) \vec{\dot{\Delta}} = \Gamma\vec{\dot{\Delta}},
\end{equation*}
we rewrite equation \eqref{W,R} as two blocks of linear equations in terms of the disjoint decomposition $\mathcal{M}$ and $\mathcal{M}^{\perp}$ as follows:
\begin{align}
    \Gamma_{\mathcal{M}\mathcal{M}} \vec{\dot{\Delta}}_{\mathcal{M}} + \vec{\mathbf{W}}_{\mathcal{M}} - \vec{R}_{\mathcal{M}} + \lambda \vec{\dot{\mathbf{Z}}}_{\mathcal{M}} &= 0 \label{a}\tag{36a} \\
    \Gamma_{\mathcal{M}^{\perp}\mathcal{M}} \vec{\dot{\Delta}}_{\mathcal{M}} + \vec{\mathbf{W}}_{\mathcal{M}^\perp} - \vec{R}_{\mathcal{M}^{\perp}} + \lambda \vec{\dot{\mathbf{Z}}}_{\mathcal{M}^\perp} &= 0 \label{b}\tag{36b}.
\end{align}
Here, we used the fact that $\dot{\Delta}_{\mathcal{M}^{\perp}}=0$ by construction. Further, using the fact that $\Gamma_{\mathcal{M}\mathcal{M}}$ is invertible, we can solve for $\vec{\dot{\Delta}}$ from \eqref{a} and obtain
\begin{equation*}
    \vec{\dot{\Delta}}_{\mathcal{M}} = \Gamma_{\mathcal{M}\mathcal{M}}^{-1} \left[\vec{R}_{\mathcal{M}} 
- \vec{\mathbf{W}}_{\mathcal{M}} - \lambda \vec{\dot{\mathbf{Z}}}_{\mathcal{M}}\right].
\end{equation*}
Substituting this into \eqref{b}, we find that 
\begin{align*}
    \vec{\dot{\mathbf{Z}}}_{\mathcal{M}^{\perp}} &= \frac{1}{\lambda} \left(\vec{R}_{\mathcal{M}^{\perp}} - \vec{\mathbf{W}}_{\mathcal{M}^\perp}\right) - \frac{1}{\lambda} \Gamma_{\mathcal{M}^{\perp}\mathcal{M}} \Gamma_{\mathcal{M}\mathcal{M}}^{-1} \left(\vec{R}_{\mathcal{M}} 
- \vec{\mathbf{W}}_{\mathcal{M}} \right) + \Gamma_{\mathcal{M}^{\perp}\mathcal{M}} \Gamma_{\mathcal{M}\mathcal{M}}^{-1}\vec{\dot{\mathbf{Z}}}_{\mathcal{M}}.
\end{align*}
Taking the $\ell_{\infty}$ norm of both sides yields

\begin{align*}
    \| \vec{\dot{\mathbf{Z}}}_{\mathcal{M}}\|_{\infty} \leq \frac{1}{\lambda}\left(\|\vec{R}_{\mathcal{M}}\|_{\infty} + \|\vec{\mathbf{W}}_{\mathcal{M}}\|_{\infty}\right) &+ \frac{1}{\lambda}  \opnorm{\Gamma_{\mathcal{M}^{\perp}\mathcal{M}} \Gamma_{\mathcal{M}\mathcal{M}}^{-1}}{\infty}\left(\|\vec{R}_{\mathcal{M}}\|_{\infty} + \|\vec{\mathbf{W}}_{\mathcal{M}}\|_{\infty}\right) \\
    &+ \|\Gamma_{\mathcal{M}^\perp\mathcal{M}} \Gamma_{\mathcal{M}\mathcal{M}}^{-1}\vec{\dot{\mathbf{Z}}}_{\mathcal{M}}\|_{\infty}.
\end{align*}
Under assumption A3 we know that $\|\Gamma_{\mathcal{M}^{\perp}\mathcal{M}} \Gamma_{\mathcal{M}\mathcal{M}}^{-1}\|_{\infty}\leq \|\Gamma_{\mathcal{M}^{\perp}\mathcal{M}} \Gamma_{\mathcal{M}\mathcal{M}}^{-1}\|_1 \leq (1-\alpha)$. Further, we know that $\|\vec{\dot{\mathbf{Z}}}_{\mathcal{M}}\|_{\infty}\leq 1$, since $\dot{\mathbf{Z}}$ belongs to the sub-differential of the norm $\|\cdot\|_1$ by construction. Hence,

\begin{align*}
     \| \vec{\dot{\mathbf{Z}}}_{\mathcal{M}^{\perp}}\|_{\infty} \leq \frac{2-\alpha}{\lambda} \left(\|\vec{R}_{\mathcal{M}}\|_{\infty} + \|\vec{\mathbf{W}}_{\mathcal{M}}\|_{\infty}\right) + (1-\alpha).
\end{align*}

Applying assumption \eqref{ass:W,R} we arrive at
\begin{equation*}
    \| \vec{\dot{\mathbf{Z}}}_{\mathcal{M}^{\perp}}\|_{\infty} \leq \frac{2-\alpha}{\lambda}\left(\frac{\alpha \lambda}{4}\right) +(1-\alpha) \leq \frac{\alpha}{2} + (1-\alpha) <1,
\end{equation*}
as claimed.
\end{proof}

\begin{lemma}(Control of remainder). \label{lemma:rav5} Suppose that $\|\dot{\Delta}\|_\infty\leq \frac{1}{3\kappa_S d}$.
Then 
\begin{equation}
\label{eq:lemma5}
    \|R(\dot{\Delta})\|_{\infty} \leq \frac{3}{2} d \|\dot{\Delta}\|^2_{\infty} \kappa^3_{S}.
\end{equation}
\end{lemma}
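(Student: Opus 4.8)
The plan is to follow the real-valued argument of \cite{ravikumar2011high}, adapting each step to complex Hermitian matrices. First I would obtain an explicit closed form for the remainder via a matrix geometric (Neumann) series. Writing $\boldsymbol{\Theta}_0^{-1}=\mathbf{S}$ and factoring $\dot{\boldsymbol{\Theta}}=\boldsymbol{\Theta}_0+\dot{\Delta}=\boldsymbol{\Theta}_0(\mathbb{I}+\mathbf{S}\dot{\Delta})$, the identity $\dot{\boldsymbol{\Theta}}^{-1}=(\mathbb{I}+\mathbf{S}\dot{\Delta})^{-1}\mathbf{S}=\big(\sum_{k\ge 0}(-1)^k(\mathbf{S}\dot{\Delta})^k\big)\mathbf{S}$ holds whenever the series converges. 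Substituting into the definition $R(\dot{\Delta})=\dot{\boldsymbol{\Theta}}^{-1}-\mathbf{S}+\mathbf{S}\dot{\Delta}\mathbf{S}$ from \eqref{R} cancels the zeroth- and first-order terms and leaves $R(\dot{\Delta})=\mathbf{S}\dot{\Delta}\mathbf{S}\dot{\Delta}J\mathbf{S}$, where $J:=\sum_{k\ge 0}(-1)^k(\mathbf{S}\dot{\Delta})^k=(\mathbb{I}+\mathbf{S}\dot{\Delta})^{-1}$.

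The next step is to justify convergence and to control $J$ in operator norm. Because the witness $\dot{\boldsymbol{\Theta}}$ and the truth $\boldsymbol{\Theta}_0=\boldsymbol{\Theta}^*$ both have support contained in the augmented edge set $\mathcal{M}$, the error $\dot{\Delta}$ has at most $d$ nonzero entries per row, so $\opnorm{\dot{\Delta}}{\infty}\le d\|\dot{\Delta}\|_\infty$. Combined with submultiplicativity and the hypothesis $\|\dot{\Delta}\|_\infty\le 1/(3\kappa_S d)$, this gives $\opnorm{\mathbf{S}\dot{\Delta}}{\infty}\le\opnorm{\mathbf{S}}{\infty}\opnorm{\dot{\Delta}}{\infty}\le\kappa_S\, d\,\|\dot{\Delta}\|_\infty\le 1/3<1$, which guarantees the series converges and yields $\opnorm{J}{\infty}\le\sum_{k\ge 0}(1/3)^k=\tfrac{3}{2}$. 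Crucially, since $\mathbf{S}$ and $\dot{\Delta}$ are both Hermitian, the same chain applied to $\opnorm{\mathbf{S}\dot{\Delta}}{1}=\opnorm{(\mathbf{S}\dot{\Delta})^H}{\infty}=\opnorm{\dot{\Delta}\mathbf{S}}{\infty}$ shows $\opnorm{\mathbf{S}\dot{\Delta}}{1}\le 1/3$ as well, hence $\opnorm{J}{1}\le\tfrac{3}{2}$; I will need both of these.

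Finally I would bound the elementwise norm $\|R(\dot{\Delta})\|_\infty$ by peeling factors off the product $\mathbf{S}\dot{\Delta}\mathbf{S}\dot{\Delta}J\mathbf{S}$ using the two mixed-norm inequalities $\|AB\|_\infty\le\opnorm{A}{\infty}\|B\|_\infty$ and $\|AB\|_\infty\le\|A\|_\infty\opnorm{B}{1}$, both immediate from the triangle inequality over complex moduli. The critical bookkeeping is that exactly one of the two $\dot{\Delta}$ factors is charged $\opnorm{\dot{\Delta}}{\infty}\le d\|\dot{\Delta}\|_\infty$ (contributing the single factor of $d$) while the other is charged only $\|\dot{\Delta}\|_\infty$ via the second inequality. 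Concretely, writing $R=\mathbf{S}\dot{\Delta}\,(\mathbf{S}\dot{\Delta}J\mathbf{S})$ and peeling the leading $\mathbf{S}$ gives $\kappa_S$, peeling the next $\dot{\Delta}$ gives $d\|\dot{\Delta}\|_\infty$, and then $\|\mathbf{S}\dot{\Delta}J\mathbf{S}\|_\infty\le\kappa_S\|\dot{\Delta}J\mathbf{S}\|_\infty\le\kappa_S\|\dot{\Delta}\|_\infty\opnorm{J\mathbf{S}}{1}\le\kappa_S\|\dot{\Delta}\|_\infty\cdot\tfrac{3}{2}\kappa_S$, using $\opnorm{J\mathbf{S}}{1}\le\opnorm{J}{1}\opnorm{\mathbf{S}}{1}=\opnorm{J}{1}\kappa_S$ with $\opnorm{\mathbf{S}}{1}=\opnorm{\mathbf{S}^H}{\infty}=\kappa_S$. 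Multiplying the pieces yields $\|R(\dot{\Delta})\|_\infty\le\tfrac{3}{2}d\,\|\dot{\Delta}\|_\infty^2\,\kappa_S^3$, as claimed.

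I expect the main obstacle to be the norm-juggling of the final step rather than the series expansion: one must route the inequalities so that only a single factor of $d$ survives while three copies of $\kappa_S$ and two copies of $\|\dot{\Delta}\|_\infty$ are produced, and this forces the combined use of the row-sum norm $\opnorm{\cdot}{\infty}$ and the column-sum norm $\opnorm{\cdot}{1}$. The complex-valued setting adds the requirement of checking that all of these operator-norm and mixed-norm inequalities, together with the identity $\opnorm{M^H}{\infty}=\opnorm{M}{1}$, survive verbatim for complex entries — which they do, since each rests only on $|a+b|\le|a|+|b|$ and $|ab|=|a||b|$ for complex scalars — and of verifying that $\dot{\Delta}$ inherits Hermitian symmetry from $\dot{\boldsymbol{\Theta}}$ and $\boldsymbol{\Theta}_0$, which is precisely what legitimises passing between the two operator norms.
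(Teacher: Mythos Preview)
Your proposal is correct and follows essentially the same route as the paper: both expand $(\boldsymbol{\Theta}_0+\dot{\Delta})^{-1}$ via the Neumann series to isolate the quadratic remainder $\boldsymbol{\Theta}_0^{-1}\dot{\Delta}\boldsymbol{\Theta}_0^{-1}\dot{\Delta}J\boldsymbol{\Theta}_0^{-1}$, verify convergence through $\opnorm{\boldsymbol{\Theta}_0^{-1}\dot{\Delta}}{\infty}\le\kappa_S d\|\dot{\Delta}\|_\infty<1/3$, and then route the elementwise bound so that one $\dot{\Delta}$ is charged $d\|\dot{\Delta}\|_\infty$ via $\opnorm{\cdot}{\infty}$ and the other only $\|\dot{\Delta}\|_\infty$, invoking $\opnorm{M}{1}=\opnorm{M^H}{\infty}$ to switch norms. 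The only cosmetic difference is that the paper writes out the unit-vector Hölder step $\max_{q,r}|e_q^\prime R e_r|$ explicitly and bounds $\opnorm{J^H}{\infty}$ after taking a conjugate transpose of the trailing block, whereas you bound $\opnorm{J}{1}$ directly; these are the same inequality in different clothing.
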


\begin{proof}
   We begin by rewriting the remainder term in \eqref{R} as 
   \begin{equation*}
           R(\dot{\Delta}) = \underbrace{({\boldsymbol{\Theta}_0 +\dot{\Delta}})^{-1}}_A - \boldsymbol{\Theta}_0^{-1} + \boldsymbol{\Theta}_0^{-1} \dot{\Delta}\boldsymbol{\Theta}_0^{-1}.
   \end{equation*}
   Expanding the power series of the complex matrix $A$ yields
   \begin{align*}
       ({\boldsymbol{\Theta}_0 +\dot{\Delta}})^{-1} &= (\boldsymbol{\Theta}_0(\mathbb{I} + \boldsymbol{\Theta}_0^{-1}\dot{\Delta}))^{-1}\\
       &= (\mathbb{I} + \boldsymbol{\Theta}_0^{-1}\dot{\Delta}))^{-1} (\boldsymbol{\Theta}_0)^{-1} \\
       &= \sum_{k=0}^\infty (-1)^k (\boldsymbol{\Theta}_0^{-1}\dot{\Delta})^k (\boldsymbol{\Theta}_0)^{-1} \\
       &= \boldsymbol{\Theta}_0^{-1} - \boldsymbol{\Theta}_0^{-1}\dot{\Delta} \boldsymbol{\Theta}_0^{-1} + \boldsymbol{\Theta}_0^{-1}\dot{\Delta} \boldsymbol{\Theta}_0^{-1} \dot{\Delta} J \boldsymbol{\Theta}_0^{-1}, 
   \end{align*}
where $J = \sum_{k=0}^\infty(-1)^k(\boldsymbol{\Theta}_0^{-1}\dot{\Delta})^k.$ The above expansion can be verified under the assumption that $\|\dot{\Delta}\|_{\infty}<1/(3 \kappa_S d)$. More specifically, by the sub-multiplicativity of the matrix norm $\opnorm{\cdot}{\infty}$, we have that
\begin{align}
\label{eq:lemma5_3}
    \opnorm{\boldsymbol{\Theta}_0^{-1}\Delta}{\infty} &\leq \opnorm{\boldsymbol{\Theta}_0^{-1}}{\infty} \opnorm{\dot{\Delta}}{\infty} \\
    &\leq \kappa_S d \|\dot{\Delta}\|_\infty < 1/3.
\end{align}
The second inequality follows from the definition of $\kappa_S$ in \eqref{kappa_s} and the fact that $\dot{\Delta}$ has at most $d$ non-zeroes per row/column. We then use our assumption that $\|\dot{\Delta}\|_{\infty}<1/(3 \kappa_S d)$ to arrive at the final result.

Using the above convergent matrix expansion, we can rewrite the remainder term as
\begin{equation}
\label{eq:remainder2}
    R(\dot{\Delta}) = \boldsymbol{\Theta}_0^{-1}\dot{\Delta} \boldsymbol{\Theta}_0^{-1} \dot{\Delta} J \boldsymbol{\Theta}_0^{-1}.
\end{equation}
Let $e_q$ denote the unit vector with 1 in position $q$ and zeroes elsewhere. Then, using \eqref{eq:remainder2} and standard vector inequalities we can write
\begin{align*}
    \|R(\dot{\Delta})\|_\infty &= \max_{qr}\left|e_q^\prime \boldsymbol{\Theta}_0^{-1}\dot{\Delta} \boldsymbol{\Theta}_0^{-1}\dot{\Delta}J \boldsymbol{\Theta}_0^{-1}e_r\right| \\
    &\leq \max_q\|e_q^\prime \boldsymbol{\Theta}_0^{-1}\dot{\Delta}\|_{\infty} \max_r\|\boldsymbol{\Theta}_0^{-1}\dot{\Delta}J\boldsymbol{\Theta}_0^{-1}e_r\|_1 \\
    &\leq \max_q \|e_q^\prime \boldsymbol{\Theta}_0^{-1}\|_1 \|\dot{\Delta}\|_{\infty} \max_r\|\boldsymbol{\Theta}_0^{-1}\dot{\Delta}J\boldsymbol{\Theta}_0^{-1}e_r\|_1.
    %where the first inequality is an application of the Holder Inequality and the third is...
\end{align*}
If we define the $\ell_1$-operator norm for a generic matrix $M$ as $\opnorm{M}{1}:= \max_{\|x\|_{1=1}}\|Mx\|_1$ then we can write
\begin{align}
\label{eq:lemma5_2}
      \|R(\dot{\Delta})\|_\infty &\leq \opnorm{\boldsymbol{\Theta}_0^{-1}}{\infty} \|\dot{\Delta}\|_\infty \opnorm{\boldsymbol{\Theta}_0^{-1}\dot{\Delta} J \boldsymbol{\Theta}_0^{-1} }{1} \nonumber \\
      &\leq \|\dot{\Delta}\|_\infty \opnorm{\boldsymbol{\Theta}_0^{-1}}{\infty} \opnorm{\boldsymbol{\Theta}_0^{-1}\dot{\Delta} J^H \boldsymbol{\Theta}_0^{-1} }{\infty} \nonumber \\
      &\leq  \|\dot{\Delta}\|_\infty  \kappa_S \opnorm{\boldsymbol{\Theta}_0^{-1}}{\infty}^2 \opnorm{J^H}{\infty} \opnorm{\Delta}{\infty},
\end{align}
where the second inequality follows from the fact that $\opnorm{M}{1}=\opnorm{M^H}{\infty}$ for any complex matrix $M$, and the third inequality follows from the definition of $\kappa_S$ in \eqref{kappa_s} and the sub-multiplicativity of the matrix norm $\opnorm{\cdot}{\infty}.$

 To complete the proof, it remains to consider the term $\opnorm{J^H}{\infty}$ in \eqref{eq:lemma5_2}. By the sub-multiplicative property of matrix norms we find that
 \begin{equation}
 \label{eq:lemma5_4}
     \opnorm{J^H}{\infty} \leq \sum_{k=0}^\infty \opnorm{\boldsymbol{\Theta}_0^{-1}\dot{\Delta}}{\infty}^k \leq \frac{1}{1-\opnorm{\boldsymbol{\Theta}_0^{-1}}{\infty}\opnorm{\dot{\Delta}}{\infty}} \leq \frac{3}{2},
 \end{equation}
 since $ \opnorm{\boldsymbol{\Theta}_0^{-1}\Delta}{\infty} \leq 1/3$ from \eqref{eq:lemma5_3}. Substituting this into \eqref{eq:lemma5_4} yields the final result.

%The remainder of the proof is identical to that given in Appendix B of \cite{ravikumar2011high}.

\end{proof}

\begin{lemma}(Control of $\dot{\Delta}$). \label{lemma:rav6} Suppose that 
\begin{equation}
\label{ass:lemma_rav6}
    \tilde{r} := 2 \kappa_{\Gamma}\left(\|W\|_\infty+\lambda\right) \leq \frac{1}{3\kappa_S d}\min\left\{1, \frac{1}{\kappa_S^2 \kappa_\Gamma}\right\}.
\end{equation}
Then we have the elementwise $\ell_\infty$ bound
\begin{equation}
\label{eq:delta_inf}
    \|\dot{\Delta}\|_\infty = \|\boldsymbol{\dot{\Theta}} - \boldsymbol{\Theta}_0\|_{\infty} \leq \tilde{r}.
\end{equation}
\end{lemma}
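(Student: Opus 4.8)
The plan is to adapt the Brouwer fixed-point argument of \cite{ravikumar2011high} to the complex Hermitian setting. The starting point is the restricted stationarity condition \eqref{a}: inverting $\Gamma_{\mathcal{M}\mathcal{M}}$ (which is permissible under A3) shows that $\vec{\dot{\Delta}}_{\mathcal{M}}$ must satisfy the implicit relation
\begin{equation*}
    \vec{\dot{\Delta}}_{\mathcal{M}} = \Gamma_{\mathcal{M}\mathcal{M}}^{-1}\left[\vec{R}(\dot{\Delta})_{\mathcal{M}} - \vec{\mathbf{W}}_{\mathcal{M}} - \lambda\vec{\dot{\mathbf{Z}}}_{\mathcal{M}}\right].
\end{equation*}
Since $\dot{\Delta}_{\mathcal{M}^{\perp}}=0$ by construction, the remainder $R$ in \eqref{R} depends on $\Delta_{\mathcal{M}}$ alone, so the right-hand side defines a map $F(\vec{\Delta}_{\mathcal{M}}):=\Gamma_{\mathcal{M}\mathcal{M}}^{-1}[\vec{R}(\Delta)_{\mathcal{M}} - \vec{\mathbf{W}}_{\mathcal{M}} - \lambda\vec{\dot{\mathbf{Z}}}_{\mathcal{M}}]$ whose fixed points are exactly the stationary points of the restricted problem \eqref{eq:restricted_estimator}. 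I would work on the closed $\ell_{\infty}$-ball $B(\tilde r):=\{\vec{\Delta}_{\mathcal{M}}:\|\vec{\Delta}_{\mathcal{M}}\|_{\infty}\le\tilde r\}$, identifying $\mathbb{C}^{|\mathcal{M}|}$ with $\mathbb{R}^{2|\mathcal{M}|}$ via real and imaginary parts so that Brouwer's theorem applies, and verifying that $\Theta_0+\Delta\succ 0$ throughout $B(\tilde r)$ so that $F$ is well defined and continuous.

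The core estimate is to show $F$ maps $B(\tilde r)$ into itself. Taking $\ell_{\infty}$ norms and using $\opnorm{\Gamma_{\mathcal{M}\mathcal{M}}^{-1}}{\infty}=\kappa_{\Gamma}$ from \eqref{kappa_gam}, the bound $\|\vec{\dot{\mathbf{Z}}}_{\mathcal{M}}\|_{\infty}\le 1$ (as $\dot{\mathbf{Z}}$ lies in the subdifferential), and $\|\vec{\mathbf{W}}_{\mathcal{M}}\|_{\infty}\le\|\mathbf{W}\|_{\infty}$, gives
\begin{equation*}
    \|F(\vec{\Delta}_{\mathcal{M}})\|_{\infty}\le\kappa_{\Gamma}\left(\|\mathbf{W}\|_{\infty}+\lambda+\|R(\Delta)\|_{\infty}\right).
\end{equation*}
For $\vec{\Delta}_{\mathcal{M}}\in B(\tilde r)$ the hypothesis \eqref{ass:lemma_rav6} forces $\|\Delta\|_{\infty}\le\tilde r\le\frac{1}{3\kappa_S d}$, so Lemma \ref{lemma:rav5} applies and yields $\|R(\Delta)\|_{\infty}\le\frac{3}{2}d\kappa_S^3\tilde r^2$. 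Substituting the definition $\tilde r=2\kappa_{\Gamma}(\|\mathbf{W}\|_{\infty}+\lambda)$, equivalently $\kappa_{\Gamma}(\|\mathbf{W}\|_{\infty}+\lambda)=\tilde r/2$, produces
\begin{equation*}
    \|F(\vec{\Delta}_{\mathcal{M}})\|_{\infty}\le\frac{\tilde r}{2}+\frac{3}{2}\kappa_{\Gamma}\kappa_S^3 d\,\tilde r^2.
\end{equation*}

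The self-mapping then reduces to $\frac{3}{2}\kappa_{\Gamma}\kappa_S^3 d\,\tilde r\le\frac{1}{2}$, i.e. $\tilde r\le\frac{1}{3\kappa_S^3 d\kappa_{\Gamma}}$, which is precisely the second branch of the minimum in \eqref{ass:lemma_rav6}. Hence $F(B(\tilde r))\subseteq B(\tilde r)$, and Brouwer's theorem furnishes a fixed point; by strict convexity of \eqref{eq:restricted_estimator} the restricted stationary point is unique and equals $\vec{\dot{\Delta}}_{\mathcal{M}}$, so that $\vec{\dot{\Delta}}_{\mathcal{M}}\in B(\tilde r)$, and combined with $\dot{\Delta}_{\mathcal{M}^{\perp}}=0$ this gives the claimed $\|\dot{\Delta}\|_{\infty}\le\tilde r$.

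The main obstacle I anticipate is not the algebra but justifying the fixed-point machinery rigorously in the complex setting: one must confirm continuity of $F$ on all of $B(\tilde r)$ — in particular that $(\Theta_0+\Delta)^{-1}$, which enters $R$, exists and varies continuously, a fact that follows from the Neumann-series control already obtained in Lemma \ref{lemma:rav5} since $\opnorm{\Theta_0^{-1}\Delta}{\infty}<1/3$ on the ball — and treat the subdifferential term so that the only property invoked is the uniform bound $\|\vec{\dot{\mathbf{Z}}}_{\mathcal{M}}\|_{\infty}\le 1$, sidestepping any discontinuity of the sign map at entries where $\Theta_{qr}$ might vanish. The two-part minimum in \eqref{ass:lemma_rav6} is exactly the condition that simultaneously licenses Lemma \ref{lemma:rav5} (first branch) and closes the quadratic self-mapping inequality (second branch), so the delicate point is keeping both invariants active at the common radius $\tilde r$.
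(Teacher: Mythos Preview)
Your proposal is correct and follows essentially the same Brouwer fixed-point argument as the paper: the map $F$ you define coincides with the paper's $\mathcal{U}_2$ (after expanding $\mathcal{U}_1$ via \eqref{W} and \eqref{R}), and both identify $\vec{\dot\Delta}_{\mathcal{M}}$ as its unique fixed point. The only difference is expository---the paper defers the self-mapping verification to Appendix~C of \cite{ravikumar2011high}, whereas you explicitly invoke Lemma~\ref{lemma:rav5} and carry out the quadratic inequality that pins down the two branches of the minimum in \eqref{ass:lemma_rav6}.
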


\begin{proof}
  Following the proof of Lemma \ref{lemma:rav_3}, we know that the restricted problem \eqref{eq:restricted_estimator} has a unique solution $\dot{\boldsymbol{\Theta}}$. If we take partial derivatives of the Lagrangian of the restricted problem \eqref{eq:restricted_estimator} with respect to unrestricted elements $\boldsymbol{\Theta}_{\mathcal{M}}$, then we have that
  \begin{equation}
  \label{eq:zero_grad}
      \mathcal{U}_1(\boldsymbol{\Theta}_{\mathcal{M}}) := \hat{\mathbf{S}}_{\mathcal{M}} - \boldsymbol{\Theta}^{-1}_{\mathcal{M}} + \lambda \dot{\mathbf{Z}}_{\mathcal{M}} = 0.
  \end{equation}
   
The above zero gradient condition \eqref{eq:zero_grad} is necessary and sufficient for an optimum of the Lagrangian problem, and thus has a unique solution $\dot{\boldsymbol{\Theta}}_{\mathcal{M}}$. 

Our aim is to bound the deviation of $\dot{\boldsymbol{\Theta}}_{\mathcal{M}}$ from $\boldsymbol{\Theta}_{0\mathcal{M}}.$ This is equivalent to bounding $\dot{\Delta}=\dot{\boldsymbol{\Theta}} - \boldsymbol{\Theta}_0$ since $\boldsymbol{\Theta}_{\mathcal{M}^{\perp}}=\boldsymbol{\Theta}_{0\mathcal{M}}=0$. As in \cite{ravikumar2011high}, we consider the $\ell_\infty$-ball defined as
\begin{equation}
    \label{eq:l_inf_ball}
    \mathcal{B}(\tilde{r}):=\left\{\boldsymbol{\Theta}_{\mathcal{M}} \ \ | \ \ \|\boldsymbol{\Theta}_{\mathcal{M}}\|_{\infty}\leq \tilde{r}\right\},
\end{equation}
where $\tilde{r}$ is defined in \eqref{ass:lemma_rav6}. The main idea is to show that a solution to the zero gradient condition \eqref{eq:zero_grad} is contained within the ball \eqref{eq:l_inf_ball}. Consequently, we can conclude that $\dot{\boldsymbol{\Theta}}-\boldsymbol{\Theta}_0$ belongs to this ball.

We define the map $\mathcal{U}_2:\mathbb{C}^{|\mathcal{M}|}\rightarrow \mathbb{C}^{|\mathcal{M}|}$ as
\begin{equation}
    \label{eq:map}
    \mathcal{U}_2(\vec{\dot{\Delta}}_{\mathcal{M}}):= -(\Gamma_{\mathcal{M}\mathcal{M}})^{-1} (\vec{\mathcal{U}}_1(\boldsymbol{\Theta}_{0\mathcal{M}}+\dot{\Delta}_{\mathcal{M}})) + \vec{\dot{\Delta}}_{\mathcal{M}},
\end{equation}
where $\vec{\mathcal{U}_1}$ and $\vec{\dot{\Delta}}_{\mathcal{M}}$ denote vectorised versions of $\mathcal{U}_1$ and $\dot{\Delta}_{\mathcal{M}}$ respectively. By construction, we therefore have that $ \mathcal{U}_2(\vec{\dot{\Delta}}_{\mathcal{M}})=0$ if and only if $\mathcal{U}_1(\boldsymbol{\Theta}_{0\mathcal{M}}+\dot{\Delta}_{\mathcal{M}})=\mathcal{U}_1(\dot{\boldsymbol{\Theta}}_{\mathcal{M}})=0.$

\begin{claim} Suppose that $$\mathcal{U}_2(\mathcal{B}(\tilde{r})) \subseteq \mathcal{B}(\tilde{r}).$$ Then, there exists a fixed point $\vec{\dot{\Delta}}_{\mathcal{M}} \in \mathcal{B}(\tilde{r})$. By the uniqueness of the zero condition we can therefore conclude that $\|\dot{\boldsymbol{\Theta}}_\mathcal{M} - \boldsymbol{\Theta}_{0\mathcal{M}}\|_{\infty}\leq \tilde{r}$.  
\end{claim}

Claim 6.2 therefore establishes the bound \eqref{eq:delta_inf}. A proof of the claim is given in Appendix C of \cite{ravikumar2011high}.

\end{proof}

%\subsection*{Main Part of the Proof}

We now have all the necessary parts to prove Proposition \ref{prop:l_inf}. To proceed, we first need to show that assumption \eqref{ass:W,R} in Lemma \ref{lemma:rav4} holds. This will allow us to conclude that the restricted estimate is equal to Lasso estimate, i.e., $\dot{\boldsymbol{\Theta}}=\hat{\boldsymbol{\Theta}}_G$.

It is easy to see that by choice of our regularisation parameter $\lambda = (8c/\alpha) \sqrt{\log p^\tau/m}$, we have $\|\mathbf{W}\|_{\infty}\leq (\alpha \lambda)/8.$ Thus, in order to show that assumption \eqref{ass:W,R} holds, it remains to demonstrate that $\|R(\dot{\Delta})\|_\infty\leq (\alpha \lambda)/8.$ To do so, we use 
Lemmas \ref{lemma:rav5} and \ref{lemma:rav6} above.

We start by showing that assumption \eqref{ass:lemma_rav6} holds under for our choice of $\lambda$ and provided that we have sufficient tapers $m$. 

From Lemma \ref{lemma:union} and our choice of  regularisation parameter $\lambda = (8c/\alpha) \sqrt{\log p^\tau/m}$, 
\begin{align*}
         2 \kappa_{\Gamma}\left(\|W\|_\infty+\lambda\right) \leq  2 c \kappa_{\Gamma} \left(1+\frac{8}{\alpha}\right) \sqrt{\frac{\log p^\tau}{m}},
\end{align*}
provided that $c \sqrt{\log p^\tau/m}\leq 80 \max_q\{S_{qq}(\omega)\}$.

Moreover, if we have sufficient tapers
\begin{align}
    m > \left\{6 c (1+8/\alpha) d \kappa_S^2 \kappa_\Gamma \max\{1, \kappa_S\kappa_\Gamma\}\right\}^2 \tau \log p,
\end{align}
we can write
\begin{align}
\label{eq:lemma6}
     2 c \kappa_{\Gamma} \left(1+\frac{8}{\alpha}\right) \sqrt{\frac{\log p^\tau}{m}} \leq \frac{1}{3\kappa_S d}\min \left\{1,\frac{1}{\kappa^2_S \kappa_\Gamma}\right\}.
\end{align}
showing that assumption \eqref{ass:lemma_rav6} is therefore satisfied. We can now apply Lemma \ref{lemma:rav6} and conclude that
\begin{equation}
\label{eq:lemma6_2}
    \|\dot{\Delta}\|_{\infty} \leq    2 \kappa_{\Gamma}\left(\|W\|_\infty+\lambda\right) \leq 2 c \kappa_{\Gamma} \left(1+\frac{8}{\alpha}\right) \sqrt{\frac{\log p^\tau}{m}}.
\end{equation}

Now, we can apply Lemma \ref{lemma:rav5} and equations \eqref{eq:lemma6} and \eqref{eq:lemma6_2} arriving at

\begin{align*}
    \|R(\dot{\Delta})\|_{\infty} &\leq \frac{3}{2} d \|\dot{\Delta}\|_{\infty}^2 \kappa^3_S \\
    &\leq \left\{6 \kappa^3_S d \kappa^2_{\Gamma} \left(1+\frac{8}{\alpha}\right)^2 c \sqrt{\frac{\log p^\tau}{m}}\right\} \frac{\lambda \alpha}{8} \\
    &\leq \frac{\alpha \lambda}{8}
\end{align*}

for sufficient tapers
\begin{equation}
    m > \left\{6 c (1+8/\alpha)^2 d \kappa_S^3 \kappa_{\Gamma}^2\right\}^2 \tau \log p.
\end{equation}

Now that we have shown that assumption \eqref{ass:W,R} holds, we can conclude that $\dot{\boldsymbol{\Theta}}= \hat{\boldsymbol{\Theta}}_G$. Thus, the estimator $\hat{\boldsymbol{\Theta}}$ satisfies the $\ell_\infty$-bound \eqref{eq:lemma6_2} of $\dot{\boldsymbol{\Theta}}$, as claimed in Proposition \ref{prop:l_inf} (a). Moreover, we have that $\hat{\boldsymbol{\Theta}}_{\mathcal{M}^\perp} = \dot{\boldsymbol{\Theta}}_{\mathcal{M}^\perp}=0$ and thus $E(\hat{\boldsymbol{\Theta}}) = E(\dot{\boldsymbol{\Theta}}).$

Finally, noting that for all $(q,r) \in \mathcal{M}$ with $|\Theta_{0qr}| > B_{\alpha}\sqrt{\log p^\tau /m}$ we have
\begin{equation*}
   |\hat{\Theta}_{qr}| \geq |\hat{\Theta}_{qr} + {\Theta}_{0qr} - {\Theta}_{0qr}| \geq | {\Theta}_{0qr} | - |\hat{\Theta}_{qr} - {\Theta}_{0qr}|>0.
\end{equation*}

Thus, we conclude that $E(\hat{\boldsymbol{\Theta}}_G)$ includes all edges $(q,r)$ with $|\Theta_{0qr}|>B_{\alpha}\sqrt{\log p^\tau /m}$ as claimed in  \ref{prop:l_inf} (b).
Since the above analysis was conditioned on the event $\mathcal{E}$, Proposition \ref{prop:l_inf} holds with probability $1-\frac{8}{p^{\tau-2}}\rightarrow 1,$ for any $\tau>2$ and sufficient tapers as stated.

\end{proof}

% \newpage
% \setcounter{page}{1}
% \title{\huge{Supplementary Material for `Regularised Spectral Estimation for High Dimensional Point Processes'}}
% \author{Carla Pinkney$^{1,^*}$, Carolina Eu\'{a}n$^1$, Alex Gibberd$^1$ \& Ali Shojaie$^2$ \vspace{0.3cm} \\
%     \small{$^1$ STOR-i Centre for Doctoral Training, Department of Mathematics and Statistics, Lancaster} \\ \small{University, LA1 4YR, UK}\\ $^2$ \small{Department of Biostatistics, University of Washington, Seattle, WA 98105, US} \vspace{0.1cm} \\
% \footnotesize{$^*$ Correspondence to: c.pinkney@lancaster.ac.uk} \vspace{0.2cm}
% \\}
% \maketitle
% \setcounter{section}{0}
% \section{Introduction}
% This document outlines the supplementary material for ``Regularised Spectral Estimation for High Dimensional Point Processes'', subsequently referred to as the main paper. Firstly, we present additional results for the experimental procedure outlined in Section 2 of the main paper. In Section 3, we give a full description of the ADMM algorithm used for the regularised spectral estimator (RSE) with the lasso penalty.  Then in Section 4, we give additional details of the multivariate Hawkes process used in the synthetic experiments. Finally, in section 5 we provide further analysis of the \cite{bolding2018recurrent} dataset and present partial coherence networks obtained for the the theta band. 

\section{Additional Figures for Section 2}
In this section, we present additional results for the experimental procedure of Section 2 in the main paper. Recall that we we consider a simple homogeneous Poisson point process with rate $\lambda=1$, and obtain 1000 Monte-Carlo replicates of the multi-trial spectrum with $m=10$ trials.

The results of the additional experiments are shown in Figure \ref{supp:experiments}, where we investigate the appropriateness of the asymptotic distributional results, for varying values of $T$. Firstly, we plot the histogram of $\hat{R}_{12}(\omega)$ alongside the theoretical density as defined in \eqref{eq:coherence_density}, with the substitution $R^2=0$. A visual comparison between Fig. \ref{T=100} and Fig. \ref{T=1000} shows the improvement in fit to the asymptotic distribution when $T=100$ compared to when $T=1000.$

\begin{figure}[H]
\centering

\begin{subfigure}[]{\textwidth}
   \centering
   \includegraphics[width=0.95\textwidth]{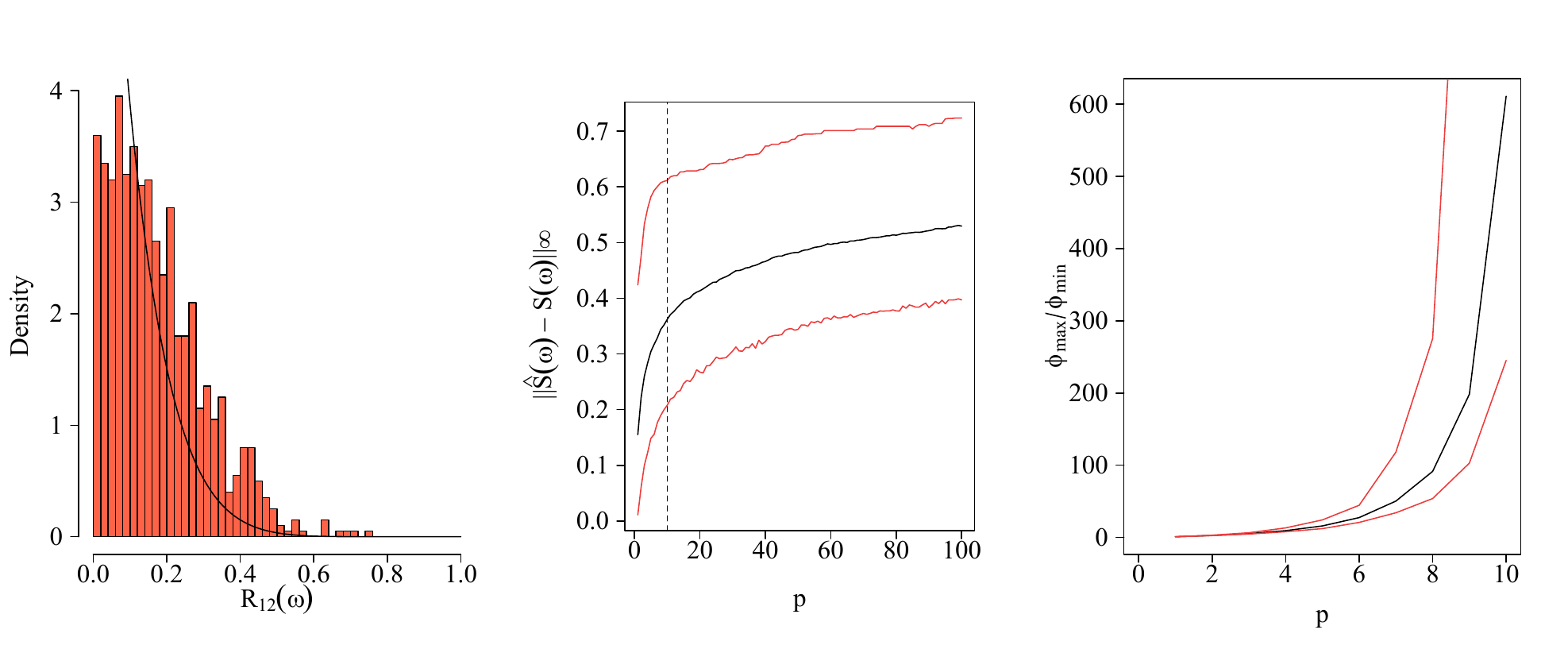}
   \caption{$T=100$}
~
\label{T=100}
\end{subfigure}
\begin{subfigure}[]{\textwidth}
\centering
   \includegraphics[width=0.95\textwidth]{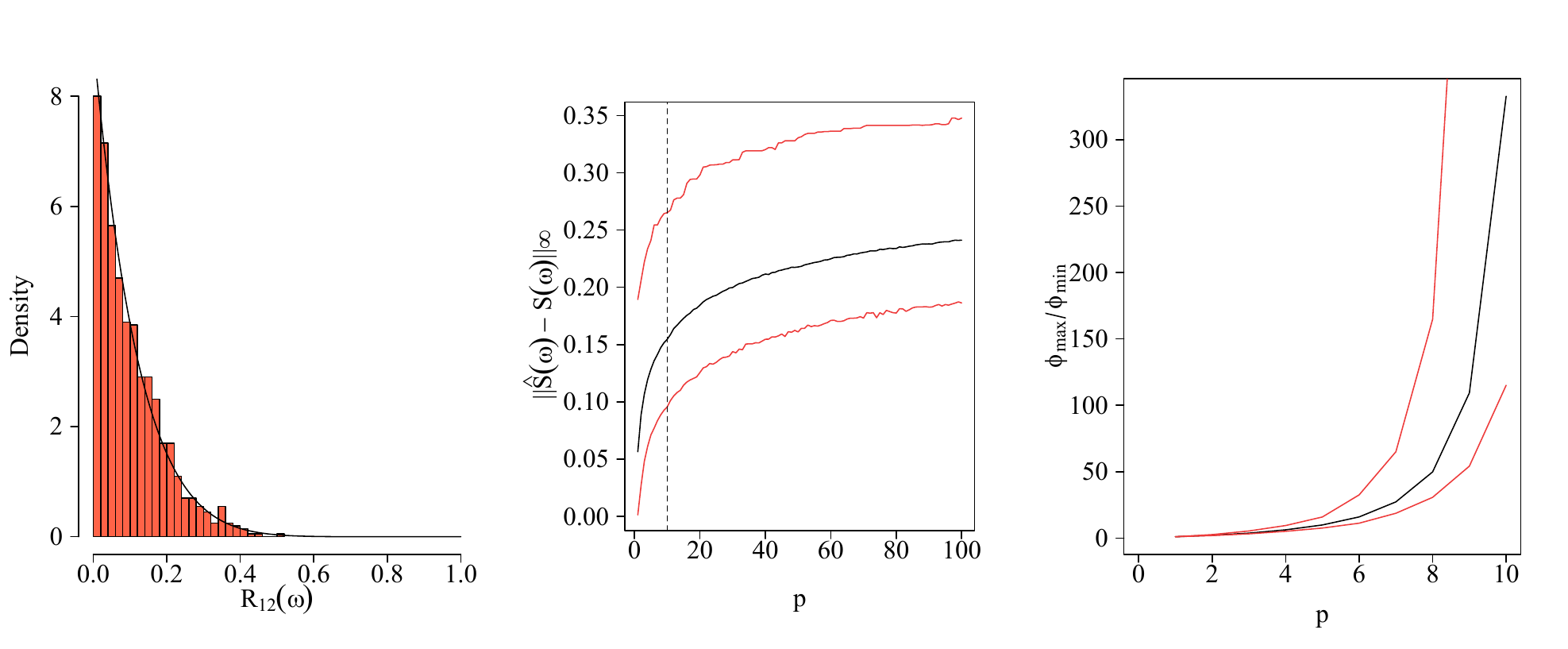}
   \caption{$T=1000$}
   \label{T=1000}
\end{subfigure}
\caption{\small{Results from the experimental procedure in Section 2 of the main paper repeated for different values of $T=(100, 1000)$.}}
\label{supp:experiments}
\end{figure}

We also plot the deviation of the estimator from the true spectrum in terms of the element-wise maximum norm. We note the increased average estimation error for $T=100$ compared to $T=1000.$ Finally, we highlight the difference in behaviour of the condition number for the estimated spectral matrix when $T=100$ compared to $T=1000$.

\section{ADMM Implementation}
In this section we give details of the ADMM algorithm used to solve the optimisation problem \eqref{estimator} in the main paper with the lasso penalty.
ADMM consists of a decomposition-coordination procedure, whereby solutions to small local sub-problems are coordinated to find a solution to a large global problem \citep{boyd2011distributed}. Often viewed as a an attempt to merge the advantages of dual decomposition and augmented Lagrangian methods, ADMM has been used throughout the time series literature to solve convex optimisation problems similar to those considered in this work \citep{jung2015graphical, nadkarnisparse, dallakyan2022time}.

We consider first rewriting the the minimisation as
\begin{align*}
     &\min _{\boldsymbol{\Theta} \in \mathcal{C}}\left\{-\log \operatorname{det}(\boldsymbol{\Theta})+ \Tr\{\mathbf{\hat{S}}, \boldsymbol{\Theta}\} +\lambda\left\|\mathbf{Z}\right\|_{1}\right\}\\
     & \ \textrm{s.t.} \ \boldsymbol{\Theta}-\mathbf{Z}=0,
\end{align*}
where for ease of notation we have dropped the dependence on $\omega$.

Then, following the work of \cite{boyd2011distributed}, we define the augmented Lagrangian of the above problem as 
\begin{equation*}
    L_{\tau}(\boldsymbol{\Theta}, \mathbf{Z}, \mathbf{U}) := \Tr\{\mathbf{\hat{S}}, \boldsymbol{\Theta}\} - \log \operatorname{det}(\boldsymbol{\Theta})+\lambda||\mathbf{Z}||_{1} + \tau/2 ||\mathbf{\Theta}-\mathbf{Z} + \boldsymbol{U}||^2_F, 
\end{equation*}
where $||\cdot||_F$ denotes the Frobenius norm.

If we begin with arbitrary initialisations for $\boldsymbol{\Theta}^{(0)}, \mathbf{Z}^0$ and $\mathbf{U}^{(0)}$, then the (scaled) ADMM method has the following update rules at the $(j+1)$ iteration 
\begin{align}
    \boldsymbol{\Theta}^{(j+1)} &= \arg \min _{\boldsymbol{\Theta}}  \left\{ \Tr\{\mathbf{\hat{S}}, \boldsymbol{\Theta}\} - \log \operatorname{det}(\boldsymbol{\Theta})+ \tau/2 ||\mathbf{\Theta}-\mathbf{Z}^{(j)} + \boldsymbol{U}^{(j)}||^2_F \right\} \label{theta_min}\\
    \mathbf{Z}^{(j+1)} &= \arg \min _{\mathbf{Z}} \left\{\lambda||\mathbf{Z}||_{1} + \tau/2 ||\mathbf{\Theta}^{(j+1)}-\mathbf{Z} + \boldsymbol{U}^{(j)}||^2_F \right\} \label{zmin} \\
    \mathbf{U}^{(j+1)} &= \mathbf{U}^{(j)} + \boldsymbol{\Theta}^{(j+1)} - \mathbf{Z}^{(j+1)}.\nonumber
\end{align}

One can obtain closed form updates for \eqref{theta_min} and \eqref{zmin}.
Consider first equation \eqref{theta_min}. 
If we denote the eigenvalue decomposition of the matrix $\tau(\mathbf{Z}^j-\mathbf{U}^j)-\hat{\mathbf{S}}$ by $\mathbf{Q}\mathbf{C}\mathbf{Q}^H$, where $\mathbf{C}={\textrm{{\textbf{diag}}}}(c_1, \dots, c_p)$. Then the ADMM update rule \eqref{theta_min} is reduced to the following form 
\begin{equation*}
    \boldsymbol{\Theta}^{(j+1)} = \mathbf{Q} \boldsymbol{\tilde{C}}\mathbf{Q}^H 
\end{equation*}
with diagonal matrix $\boldsymbol{\tilde{C}}$ whose $r^{th}$ diagonal element is $\tilde{c}_r=\frac{c_r+\sqrt{c_r^2+4\tau}}{2\tau}.$

The $Z$-minimisation step can be simplified using element-wise block soft thresholding. The ADMM update rule \eqref{zmin} becomes
\begin{equation*}
    Z^{(j+1)}_{qr} := \mathcal{S}_{\lambda/\tau}(\Theta_{qr}^{(j+1)}+U_{qr}^j),
\end{equation*} where $\mathcal{S}$ is the block-thresholding operator defined as %in \cite{boyd2011distributed} as
\begin{equation*}
    S_\kappa(W_{qr}):=(1-\kappa/|W_{qr}|)_+W_{qr}, 
\end{equation*}
where $|W_{qr}| = \sqrt{\textrm{Re}(W_{qr})^2 + \textrm{Im}(W_{qr})^2}$, $a_+:=\max\{a,0\}$ and $\mathcal{S}_\kappa(0)=0.$

We use standard ADMM stopping criteria based on the primal and dual residuals 
% \begin{align*}
% \label{stopping_criteria}
%     \boldsymbol{\mathcal{S}}^{(j+1)}&=\boldsymbol{\Theta}^{(j+1)}-\mathbf{Z}^{(j+1)} \\ 
%     \boldsymbol{\psi}^{(j+1)}&=\tau(\mathbf{Z}^{(j+1)}-\mathbf{Z}^{(j)})
% \end{align*}
which converge to zero as the ADMM algorithm proceeds. 

\section{ Supplementary Material for Section 4}
\label{app:Hawkes}
In this section, we provide additional information regarding the multivariate Hawkes process used in the synthetic experiments described in Section 4 of the main paper.

\subsection{The Multivariate Hawkes Process}
\begin{definition}
\cite{hawkes1971spectra} introduced the multivariate or ``mutually exciting" Hawkes process in which the conditional intensity function at time $t$ is linearly dependent on the entire history of the process up to time $t$. That is, if we consider the increment $dN_q(t)$ over the time interval $(t, t+dt)$ then
\begin{equation*}
P\{dN_q(t) = 1 | \mathbf{N}(s) \;,\; s\leq t\} = \Lambda_q(t) dt + o(dt),
\end{equation*}
and the probability of more than one event is $o(dt).$

The stochastic intensity function of the marginal point-process $N_q(t)$ is defined as
\begin{equation*}
    \Lambda_q(t) = \nu_q + \sum_{s=1}^p \int_{-\infty}^t \gamma_{qr}(t-u)dN_r(u),
\end{equation*}
where $\nu_q>0$ and $\gamma_{qr}:(0,\infty)\rightarrow[0,\infty)$ are the background intensity and excitation functions respectively. Specifically, $\gamma_{qr}(t-u)$ adjusts the intensity function of the $q^{th}$ dimension at time $t$ to account for an event in the $r^{th}$ dimension at a previous time point $u$.
\end{definition}
Motivated by applications in neuroscience, we consider a multivariate Hawkes process with exponentially decaying intensity functions. The form of the conditional intensity function for each marginal process is therefore
\begin{equation*}
\label{mv_intensity}
    \Lambda_q(t) = \nu_q + \sum_{r=1}^p\sum_{i:t_i^r<t}\alpha_{qr}e^{-\beta_{qr}(t-t_i^r)},
\end{equation*}
where $\nu_q>0, \alpha_{qr}\geq0$, $\beta_{qr}\geq0 \ \textrm{for} \ q,r=1,\dots,p$ and $t_i^r$ denotes the time of the $i^{th}$ event in process $r$. 
In this formulation, $\alpha_{qr}$ determines the size of the jump in $\Lambda_q(t)$ caused by an event in $N_r(\cdot)$ and $\beta_{qr}$ determines the decay exhibited in $\Lambda_q(t)$ caused by an event in $N_r(\cdot)$. 

\begin{remark}
    An important requirement here is the assumption of stationarity of the multivariate Hawkes process, i.e., $\mathbb{E}\{\Lambda_q(t)\}=\lambda \ \textrm{for} \ r=1,\dots, p.$ As such, the results presented in Sections 2 and 3 for general multivariate point-processes also hold for the multivariate Hawkes process.
\end{remark}

\begin{remark}
The intensity function for a multivariate Hawkes process can be written in vector notation as in \cite{hawkes1971point} as
\begin{equation*}
\label{vector_rep_int}
    \mathbf{\Lambda}(t) = \boldsymbol{\nu} + \int_{-\infty}^t \boldsymbol{\gamma}(t-u)d\mathbf{N}(u),
\end{equation*}
where $\boldsymbol{\gamma}$ is a $p\times p$ matrix. Under the assumption of stationarity, we have that
\begin{equation*}
    \boldsymbol{\Lambda}=(\boldsymbol{I}-\mathbf{G}(0))^{-1}\boldsymbol{\nu}, 
\end{equation*}
provided this has positive elements. Furthermore, note that $ \mathbf{G}$ is the Fourier Transform of the excitation function, which in the exponential case is given by

\begin{equation*}
    \mathbf{G}(\omega)=\int_{-\infty}^{\infty}e^{-i\omega \tau} \boldsymbol{\alpha}e^{-\boldsymbol{\beta}(\tau)}d\tau = \frac{\boldsymbol{\alpha}}{\boldsymbol{\beta}+i \omega}.
\end{equation*}

In order for the process to be stationary it is required that $\xi(\boldsymbol{G}(0))<1$, where $\xi(\boldsymbol{G}(0))$ is the spectral radius of $\boldsymbol{G}(0)$ defined as
\begin{equation*}
    \xi(\boldsymbol{G}(0))=\max_{x \in \mathcal{H}(\boldsymbol{G}(0))}|x|
\end{equation*}
where $\mathcal{H}(\boldsymbol{G}(0))$ is the set of eigenvalues of $\boldsymbol{G}(0).$ Following the work of \cite{hawkes1971point}, we arrive at the following definition.
\end{remark}

\begin{definition}
The spectral density matrix of a stationary multivariate Hawkes process is given as
\begin{equation*}
    \mathbf{S}(\omega)=\frac{1}{2\pi}\{\mathbf{I}-\mathbf{G}(\omega)\}^{-1} \mathbf{D}\{\mathbf{I}-\mathbf{G}^T(-\omega)\}^{-1},
\end{equation*}
where $\mathbf{D}=\textrm{diag}(\boldsymbol{\Lambda}).$
    
\end{definition}

\subsection{Parameterisation of the Hawkes Process for Synthetic Experiments}

In all experimental settings, the background intensity $\nu$ of the multivariate Hawkes process is set to $0.2$ for all dimensions. Furthermore, to allow for an appropriate comparison among the scenarios, each considered setting is parameterised in such a way to ensure that the spectral radius of the Fourier Transform of the excitation function is approximately $0.83$. In addition, all entries in the decay matrix $\boldsymbol{\beta}$ $\in \mathbb{R}^{p \times p}$ were set to $0.86$ for both settings (a) and (b). 
The $3 \times 3$ excitation matrices for settings (a) and (b) are specified as
\begin{equation*}
    \boldsymbol{\alpha}^* = 
    \begin{pmatrix}
    0 & 0.60 & 0 \\
    0 & 0.40  & 0 \\ 0 & 0 & 0.40 
    \end{pmatrix} \quad \textrm{and} \quad 
     \boldsymbol{\alpha}^*=\begin{pmatrix}
    0.20 & 0.10 & 0.25 \\
    0.10 & 0.20  & 0.40 \\ 0.25 & 0.40 & 0.20 
    \end{pmatrix}.
\end{equation*}
This yields the following excitation matrices for models (a) and (b) when $p=12$
\setcounter{MaxMatrixCols}{12}
\small
\begin{equation*}
    \boldsymbol{\alpha} = \begin{pmatrix}
  0 & 0.60 & 0 & 0 & 0 & 0 & 0 & 0 & 0 & 0 & 0 & 0 \\ 
  0 & 0.40 & 0 & 0 & 0 & 0 & 0 & 0 & 0 & 0 & 0 & 0 \\ 
  0 & 0 & 0.40 & 0 & 0 & 0 & 0 & 0 & 0 & 0 & 0 & 0 \\ 
  0 & 0 & 0 & 0 & 0.60 & 0 & 0 & 0 & 0 & 0 & 0 & 0 \\ 
  0 & 0 & 0 & 0 & 0.40 & 0 & 0 & 0 & 0 & 0 & 0 & 0 \\ 
  0 & 0 & 0 & 0 & 0 & 0.40 & 0 & 0 & 0 & 0 & 0 & 0 \\ 
  0 & 0 & 0 & 0 & 0 & 0 & 0 & 0.60 & 0 & 0 & 0 & 0 \\ 
  0 & 0 & 0 & 0 & 0 & 0 & 0 & 0.40 & 0 & 0 & 0 & 0 \\ 
  0 & 0 & 0 & 0 & 0 & 0 & 0 & 0 & 0.40 & 0 & 0 & 0 \\ 
  0 & 0 & 0 & 0 & 0 & 0 & 0 & 0 & 0 & 0 & 0.60 & 0 \\ 
  0 & 0 & 0 & 0 & 0 & 0 & 0 & 0 & 0 & 0 & 0.40 & 0 \\ 
  0 & 0 & 0 & 0 & 0 & 0 & 0 & 0 & 0 & 0 & 0 & 0.40 \\ 
   \end{pmatrix}
\end{equation*}
and 
\setcounter{MaxMatrixCols}{12}
\small
\begin{equation*}
    \boldsymbol{\alpha} = \begin{pmatrix}
  0.20 & 0.10 & 0.25 & 0 & 0 & 0 & 0 & 0 & 0 & 0 & 0 & 0 \\ 
  0.10 & 0.20 & 0.40 & 0 & 0 & 0 & 0 & 0 & 0 & 0 & 0 & 0 \\ 
  0.25 & 0.40 & 0.20 & 0 & 0 & 0 & 0 & 0 & 0 & 0 & 0 & 0 \\ 
  0 & 0 & 0 & 0.20 & 0.10 & 0.25 & 0 & 0 & 0 & 0 & 0 & 0 \\ 
  0 & 0 & 0 & 0.10 & 0.20 & 0.40 & 0 & 0 & 0 & 0 & 0 & 0 \\ 
  0 & 0 & 0 & 0.25 & 0.40 & 0.20 & 0 & 0 & 0 & 0 & 0 & 0 \\ 
  0 & 0 & 0 & 0 & 0 & 0 & 0.20 & 0.10 & 0.25 & 0 & 0 & 0 \\ 
  0 & 0 & 0 & 0 & 0 & 0 & 0.10 & 0.20 & 0.40 & 0 & 0 & 0 \\ 
  0 & 0 & 0 & 0 & 0 & 0 & 0.25 & 0.40 & 0.20 & 0 & 0 & 0 \\ 
  0 & 0 & 0 & 0 & 0 & 0 & 0 & 0 & 0 & 0.20 & 0.10 & 0.25 \\ 
  0 & 0 & 0 & 0 & 0 & 0 & 0 & 0 & 0 & 0.10 & 0.20 & 0.40 \\ 
  0 & 0 & 0 & 0 & 0 & 0 & 0 & 0 & 0 & 0.25 & 0.40 & 0.20 \\ 
   \end{pmatrix}.
\end{equation*}

 For model (c) we consider the following arbitrarily sparse excitation matrix for $p=12$
\setcounter{MaxMatrixCols}{12}
\begin{equation*}
    \boldsymbol{\alpha} =
  \begin{pmatrix}
  0 & 0 & 0.60 & 0 & 0 & 0 & 0 & 0 & 0 & 0 & 0 & 0 \\ 
  0 & 0 & 0 & 0 & 0 & 0 & 0 & 0 & 0 & 0.50 & 0 & 0 \\ 
  0.60 & 0 & 0 & 0.80 & 0 & 0 & 0 & 0 & 0 & 0 & 0 & 0 \\ 
  0 & 0 & 0.80 & 0 & 0 & 0 & 0 & 0 & 0 & 0 & 0 & 0 \\ 
  0 & 0 & 0 & 0 & 0 & 0 & 0 & 0 & 0 & 0 & 0 & 0 \\ 
  0 & 0 & 0 & 0 & 0 & 0 & 0 & 0 & 0 & 0 & 0 & 0 \\ 
  0 & 0 & 0 & 0 & 0 & 0 & 0 & 0 & 0 & 0 & 0 & 0 \\ 
  0 & 0 & 0 & 0 & 0 & 0 & 0 & 0 & 0 & 0 & 0 & 0 \\ 
  0 & 0 & 0 & 0 & 0 & 0 & 0 & 0 & 0 & 0 & 0 & 0 \\ 
  0 & 0.50 & 0 & 0 & 0 & 0 & 0 & 0 & 0 & 0 & 0 & 0 \\ 
  0 & 0 & 0 & 0 & 0 & 0 & 0 & 0 & 0 & 0 & 0 & 0 \\ 
  0 & 0 & 0 & 0 & 0 & 0 & 0 & 0 & 0 & 0 & 0 & 0 \\ 
   \end{pmatrix}.
\end{equation*}
For higher dimensions, each of the above $\boldsymbol{\alpha}$ matrices are used to construct $p$-dimensional processes with block diagonal structure. For example, for dimension $p=48$, the excitation matrix contains $4$ blocks of the above $12\times12$ matrix where there are no interactions outwith these blocks. 

\section{Supplementary Material for Section 5}

\begin{figure}[t]
\centering
%0 mw/mm^2
\begin{subfigure}[]{0.3\textwidth}
\centering
   \includegraphics[width=0.95\textwidth]{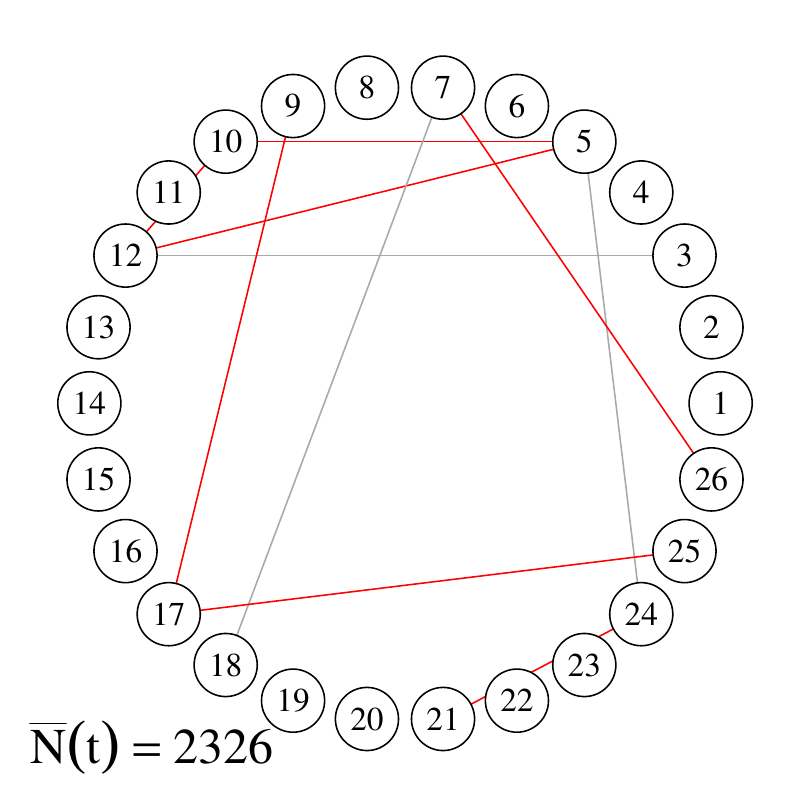}
\end{subfigure}
\begin{subfigure}[]{0.3\textwidth}
   \centering
    \includegraphics[width=0.95\textwidth]{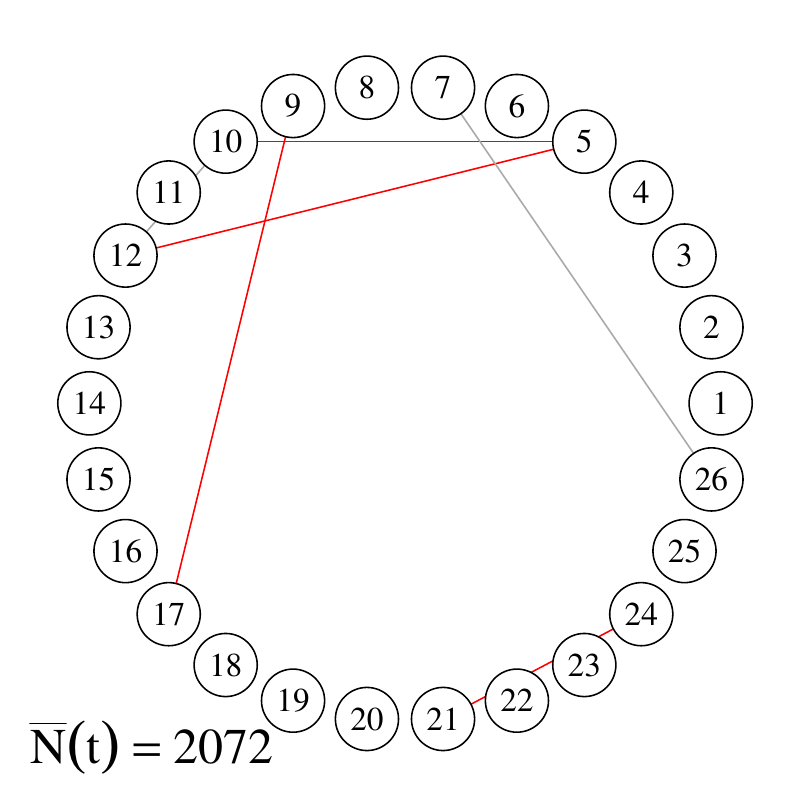}
\end{subfigure}
\begin{subfigure}[]{0.3\textwidth}
\centering
   \includegraphics[width=0.95\textwidth]{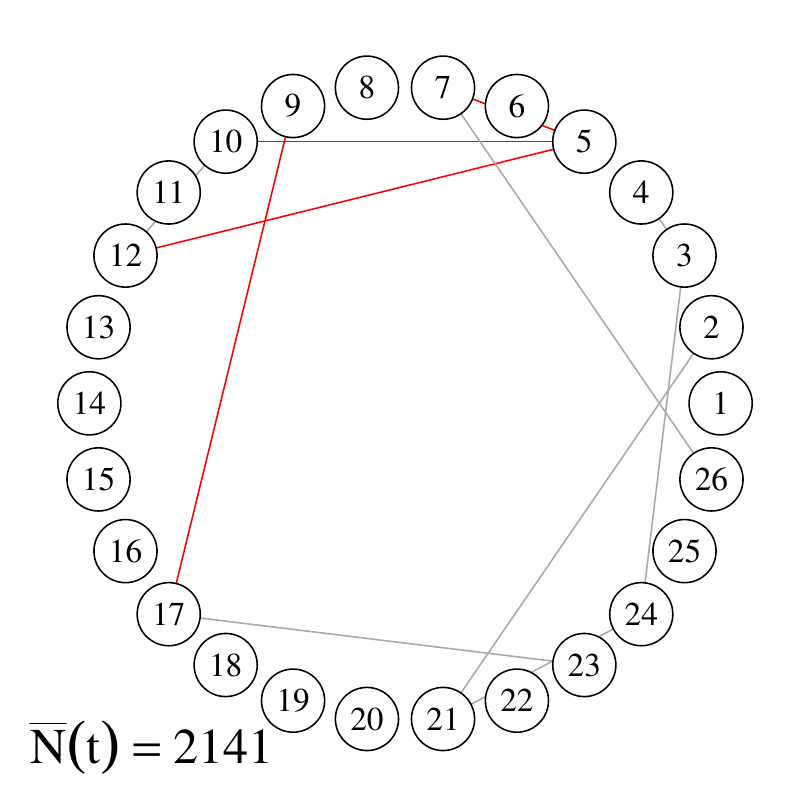}
\end{subfigure}

\begin{subfigure}[]{0.3\textwidth}
\centering
   \includegraphics[width=0.95\textwidth]{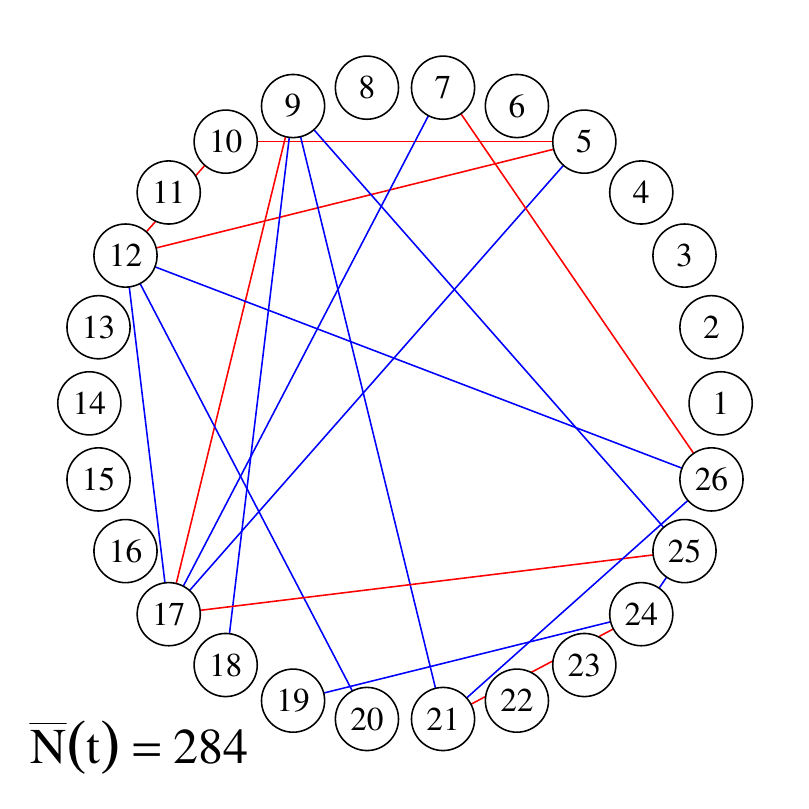}
   \caption{}
\end{subfigure}
\begin{subfigure}[]{0.3\textwidth}
   \centering
    \includegraphics[width=0.95\textwidth]{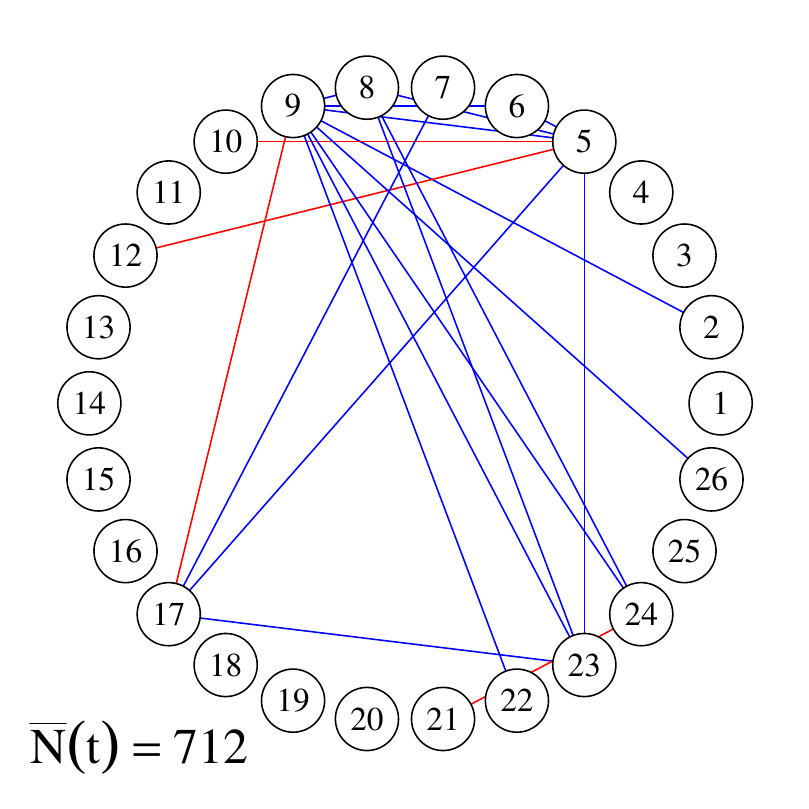}
     \caption{}
\end{subfigure}
\begin{subfigure}[]{0.3\textwidth}
\centering
   \includegraphics[width=0.95\textwidth]{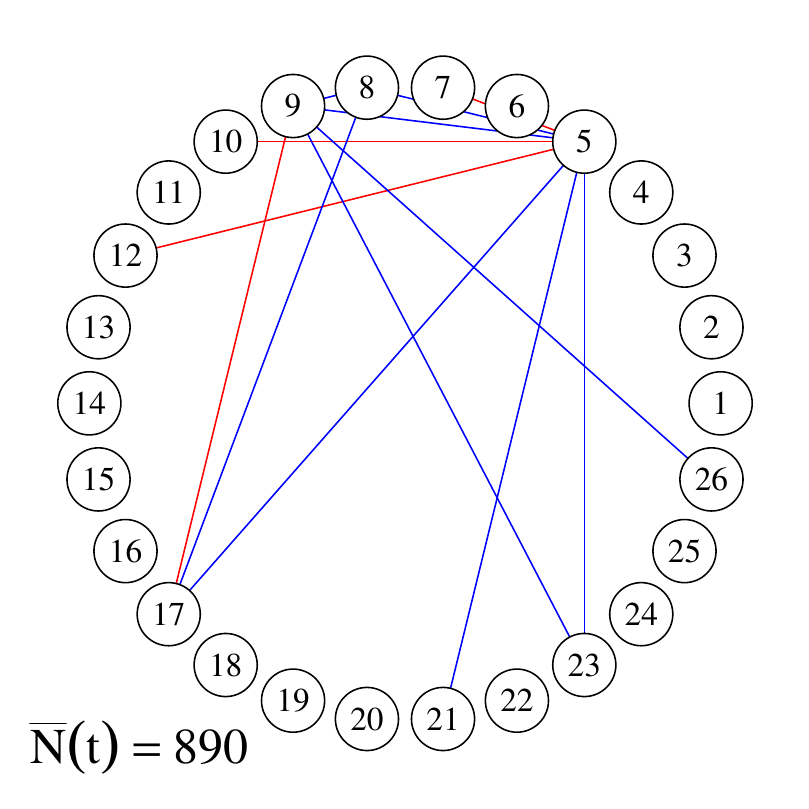}
    \caption{}
\end{subfigure}
\caption{\small{Estimated networks of neural interactions on the delta band using the spike train data from \cite{bolding2018recurrent}. Each column shows the estimated networks for the laser off (top) and laser on (bottom) condition for laser stimuli (a) $0mw/mm^2$, (b) $10mw/mm^2$ and (c) $50mw/mm^2$. Common edges between the on and off conditions for each level of intensity are shown in red, and edges unique to the laser on condition are shown in blue. All other edges are shown in grey.}}
\label{plt:networks2}
\end{figure}

In this section, we present additional figures related to Section 5 of the main paper. Figure \ref{plt:networks2} shows the estimated partial coherence graphs for the theta band obtained under the `laser on' and `laser off' conditions, i.e., at frequencies in the range $(4,8)$Hz. As was the case for the delta band, in general there are more edges detected for the `laser on' condition ($20$ under $10 mW/mm^2$ and 13 under $50 mW/mm^2$ versus $6$ and $11$ under $10 mW/mm^2$ and 13 under $50 mW/mm^2$ for the `laser off' setting). 

\begin{table}[H]
    \centering
    \begin{tabular}{ccccc}
    \hline \hline
       Condition & Frequency Band  & $0 mW/mm^2$ &  $10 mW/mm^2$ &  $50 mW/mm^2$ \\
       \hline \hline 
       
         Laser On & Delta & 0.977 & 1.874 & 1.177 \\
         & Theta & 0.559 & 1.072 & 1.417 \\
        \\
           
         laser Off & Delta & 4.751 & 5.214 & 4.328 \\
         & Theta & 3.944 & 4.329 & 3.594 \\
         \hline \hline
    \end{tabular}
    \caption{ Regularisation parameter values selected via eBIC for each condition at each intensity.}
    \label{tab:reg_delta}
\end{table}

The estimated partial coherence graphs were obtained using the Pooled Glasso estimator. Each Pooled estimator was tuned using the eBIC, whereby we search over a grid of $\lambda$ values and select the value which minimises the eBIC. We record the regularisation parameter selected under each considered scenario in Table \ref{tab:reg_delta} 

Under each scenario, the optimisation problem was solved via ADMM. Figure \ref{plt:checks_delta} shows the stopping criteria for the ADMM algorithm for the considered frequency bands under both the `laser on' and `laser off' conditions. Since the stopping criteria converges to 0 as the number of iterations increases, we can be confident that the algorithm has converged. For completeness we also show the eBIC curves produced as part of the parameter tuning process. 

\begin{figure}[t]
\centering
\begin{subfigure}[]{\textwidth}
\centering
   \includegraphics[width=0.75\textwidth]{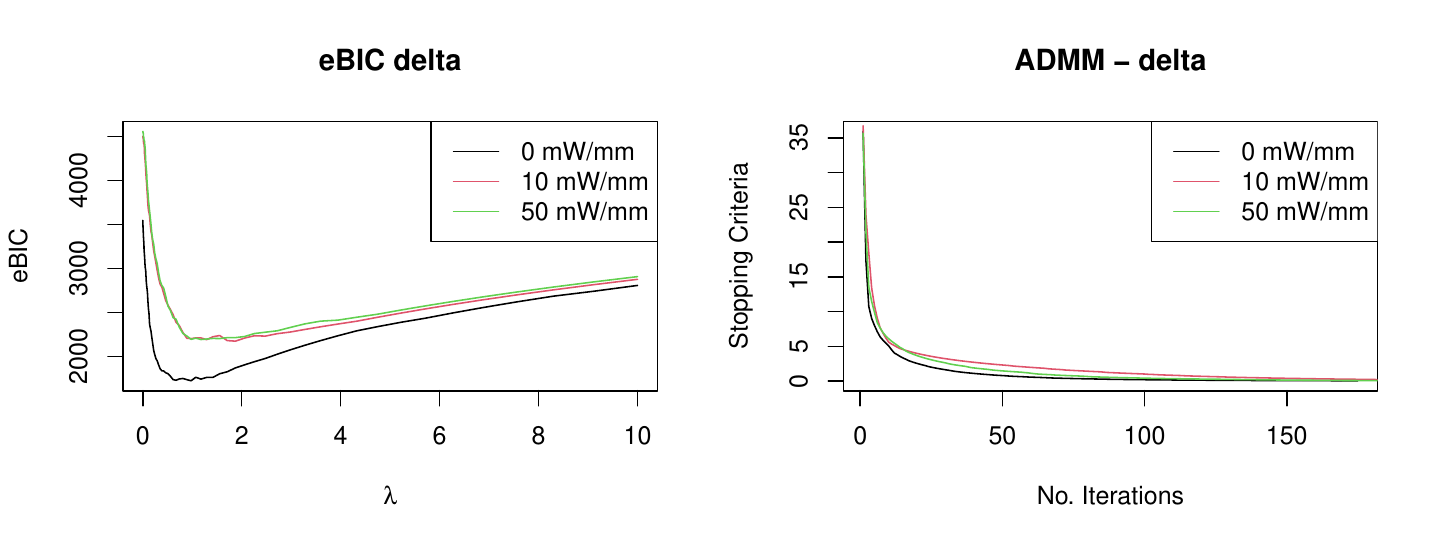}
   \caption{}
       %\label{plt:laser_on_delta}
\end{subfigure}
\begin{subfigure}[]{\textwidth}
   \centering
   \includegraphics[width=0.75\textwidth]{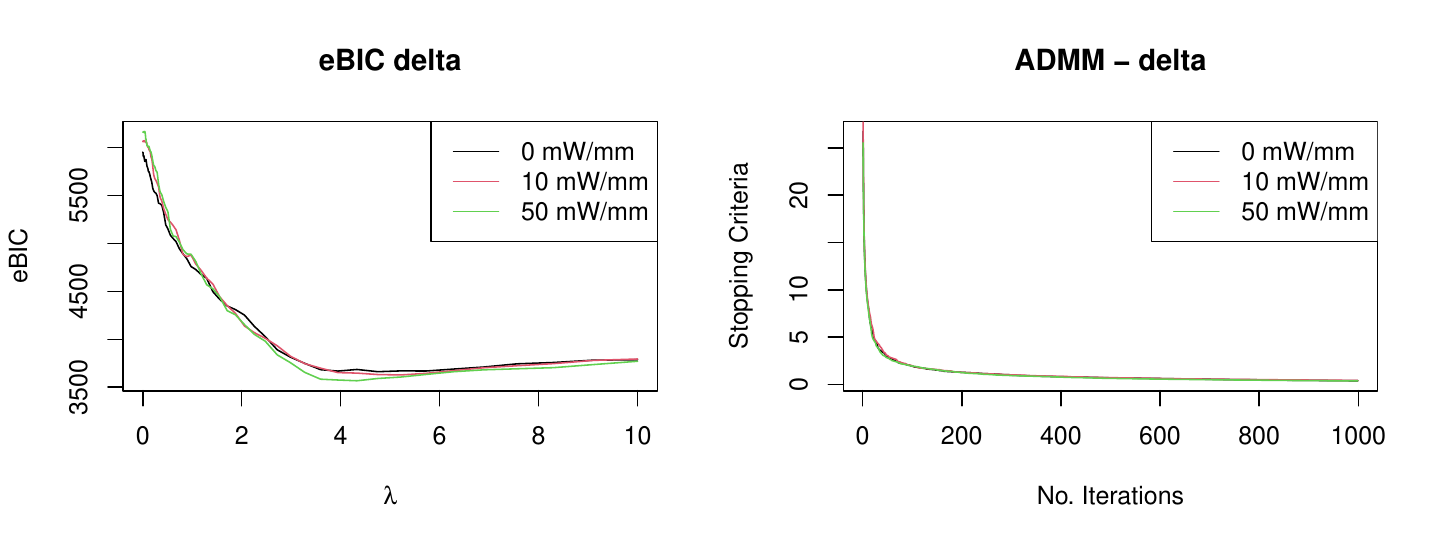}
   \caption{}
%\label{plt:lasr_off}
\end{subfigure}
\begin{subfigure}[]{\textwidth}
\centering
   \includegraphics[width=0.75\textwidth]{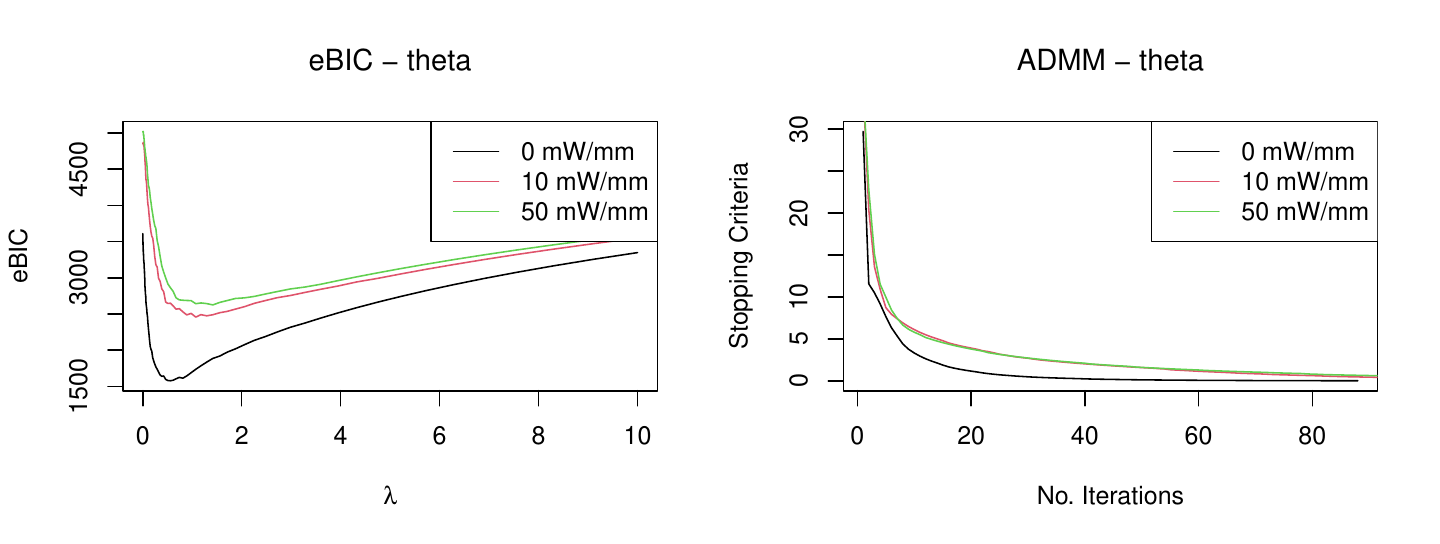}
   \caption{}
       %\label{plt:laser_on_delta}
\end{subfigure}
\begin{subfigure}[]{\textwidth}
   \centering
   \includegraphics[width=0.75\textwidth]{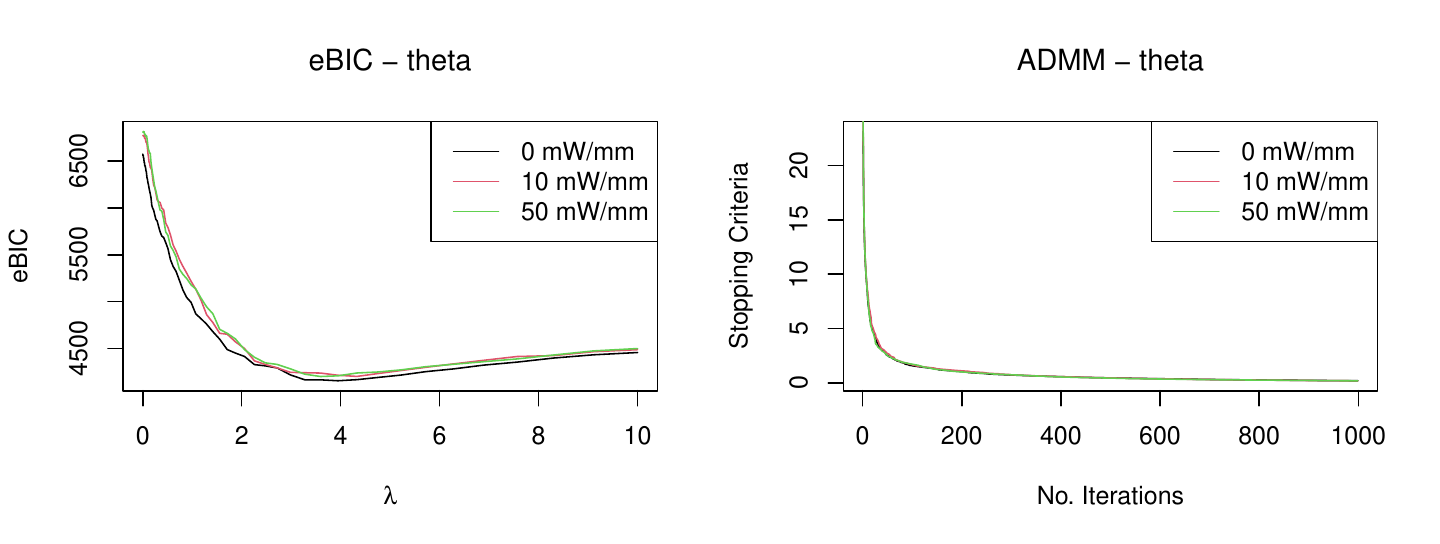}
   \caption{}
%\label{plt:lasr_off}
\end{subfigure}
\caption{\small{Additional checks for data analysis.}}
\label{plt:checks_delta}
\end{figure}

%%%%%%%%%%%%%%%%%%%%%%%%%%%%%%%%%%%%%%%%%%%%%%%%%%%%%%%%%%%%%%%%%

\end{document}